\documentclass[conference,letterpaper]{IEEEtran}

\usepackage{amssymb,amsfonts,bm}
\usepackage{graphicx}
\usepackage{color}
\usepackage{upgreek,enumitem}
\usepackage{algorithm}
\usepackage{algpseudocode}

\usepackage[utf8]{inputenc} 
\usepackage[T1]{fontenc}
\usepackage{url}
\usepackage{ifthen}
\usepackage{cite}
\usepackage[cmex10]{amsmath} 

\newcommand{\be}{\begin{equation}}
\newcommand{\ee}{\end{equation}}
\newcommand{\ba}{\begin{align}}
\newcommand{\ea}{\end{align}}

\newcommand{\summ}[3]{\sum_{#1=#2}^{#3}}

\newtheorem{theorem}{Theorem}
\newtheorem{definition}{Definition}[section]
\newtheorem{corollary}{Corollary}
\newtheorem{remark}{Remark}[section]
\newtheorem{lemma}{Lemma}[section]
\newtheorem{claim}{Claim}[section]

\newenvironment{proof}[1][Proof]{\noindent\textit{#1:} }{\hfill$\blacksquare$\par}

\begin{document}
\title{Secure (Multiple) Key-Cast over Networks: \\Multiple Eavesdropping Nodes}


\author{%
  \IEEEauthorblockN{Reza Sayyari}
  \IEEEauthorblockA{
                    University at Buffalo\\
                    {\tt rezasayy@buffalo.edu}
                    }
  \and
  \IEEEauthorblockN{Michael Langberg}
  \IEEEauthorblockA{
                    University at Buffalo\\
                    {\tt mikel@buffalo.edu}}
}

\maketitle


\begin{abstract}
We study the secure multiple key-cast problem over noiseless networks under node-based eavesdroppers, where one or more source nodes participate in the generation of distinct secret keys to be shared among designated terminal subsets, while an eavesdropper observing up to $\ell$ nodes, including possibly source nodes, obtains no information about the keys. 
For the single-source setting, we first consider networks in which every node is $d$-vertex connected from the source. We show that a secure key rate of $d-\ell$ is achievable for all such networks. We further show that this rate is optimal by exhibiting $d$-vertex-connected networks whose secure key-cast capacity is at most $d-\ell$.
We next study networks in which only the terminal nodes are 
\(d\)-vertex connected from the source, while other network nodes may not satisfy this connectivity condition and may be {\em partially-connected}.
We show that secure multiple key-cast remains achievable in the presence of such partially-connected nodes, and derive coding schemes whose rate depends on the minimum network vertex-connectivity from the source and certain additional network properties.
Finally, we generalize these results, for both $d$-vertex-connected networks and networks containing partially-connected nodes, to the multi-source setting; showing that secure multiple key-cast remains achievable even when the eavesdropper may observe all but one of the source nodes.
\end{abstract}

\section{Introduction}
Shared randomness is a fundamental resource in secure communication and distributed computation. A wide range of operations, including encryption, authentication, randomized coding, distributed inference, federated learning, privacy-preserving computation, identification, and local differential privacy, benefit when terminals share uniform secret keys across designated subsets of nodes, e.g., \cite{langberg2022network, 10904071, langberg2024characterizing, bhattacharya2019shared, 10482871} and references therein. In large-scale systems such as federated learning and the Internet of Things (IoT), shared randomness enables coordinated operation with reduced communication overhead \cite{byrd2020differentially, bonawitz2017practical}. As networks grow and adversarial exposure increases, disseminating secret keys securely becomes a central challenge. Nodes may be geographically dispersed, adversaries may observe parts of the network, and communication resources may be limited, making the secure and efficient distribution of shared keys essential for modern network design.

The work at hand studies the task of secure (secret) key dissemination in the context of noiseless networks, i.e., in the context of network coding \cite{ahlswede2000network, li2003linear}. 
We study a number of different communication settings including single or multiple source nodes that generate randomness, and single or multiple terminal sets that require to share a secret key. 
In these settings, shared randomness is generated at network source nodes and is distributed through designed coding operations across the network; terminal nodes belong to disjoint subsets, and those that lie in the same subset receive a common key; the key is kept secret from an adversarial eavesdropper that controls at most $\ell$ network nodes. 
When all terminals belong to the same set (i.e., a common key is distributed to all terminals), we call our problem the (single or multiple-source) {\em key-cast} problem,
and when there are several terminal sets,  (single or multiple-source) {\em multiple key-cast}.

Our work makes several contributions toward understanding and designing secure (multiple) key-cast protocols under node-based eavesdropping. In Section~\ref{SysM}, we define our multiple key-cast model and introduce the terminology used throughout the paper. Section \ref{LemmSec} establishes the preliminary lemmas required for our main results. Building on these foundations, in Section~\ref{FCsec}, we study networks in which all nodes satisfy a certain connectivity condition, and show that optimal-rate secure multiple key-cast is achievable using a scheme based on the secure regenerating codes proposed in \cite{shah2011information}. Then in Section~\ref{PCsec}, we relax this connectivity requirement  to accommodate intermediate nodes that may only be partially-connected and present coding solutions whose rate depends on the network vertex-connectivity from the source(s) and certain additional network properties.
The task of secure (multiple) key-cast has strong connections to the study of secure network coding, the study of regenerating codes, and the study of secret sharing.
In what follows, we give an overview of these connections including previous results in the context of secure key-dissemination.

\subsection{Relation to prior works}
The task of secure key-dissemination, also called secret key-agreement, has been mostly studied over noisy memoryless channel structures, where secrecy is ensured through channel noise, e.g.,  \cite{256484, csiszar2003broadcast, tyagi2015converses}. In the noisy setting, the goal is to reconstruct a shared secret key from correlated channel observations at the legitimate users while ensuring negligible information leakage to the eavesdropper.
In these methods, the amount of secret key that can be generated is tied to the statistical dependence induced by the channel among the users’ observations \cite{256484}. 
This work studies the problem of secure (multiple) key-cast, where shared randomness is distributed securely among authorized terminals over large networks in the context of network coding.

\paragraph{Secure network coding} 
The problem of secure network coding, e.g., \cite{cai2002secure,cai2010secure,feldman2004capacity}, is closely related to the (multiple) key-cast problem, where the key(s) themselves can be treated as a confidential message(s).
In the {\em single-source} setting, secure key-cast, i.e., distributing a secret key to {\em one} terminal set, is shown in \cite{10904071} to be equivalent to secure multicast network coding. 
On the other hand, with multiple sources or terminal sets, multiple key-cast differs from secure network coding since terminals no longer need to reconstruct source messages and the keys may be mixtures of source information. Moreover, in the multi-source key-cast setting, source nodes may also be observed by the eavesdropper, so even access to all but one source reveals no information about the keys. In contrast, secure multi-source network coding assumes each source knows its message and excludes eavesdroppers observing source nodes. An illustrative example of a network for which secure multicast is impossible, yet key-cast remains achievable is given in Figure \ref{fig:1}.
We next provide an overview of existing results in the context of secure network coding for edge-based and node-based eavesdroppers.
 
\begin{figure}
    \centering
        \includegraphics[width=0.5\linewidth]{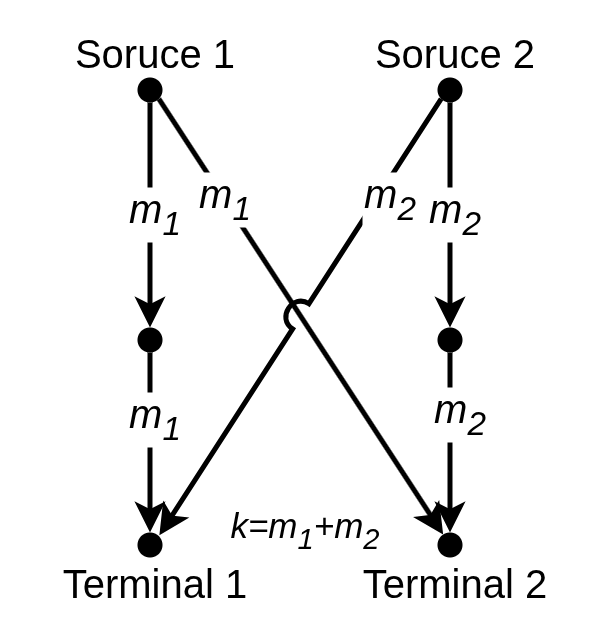}
    \caption{Network example in which secure multicast is infeasible but secure key-cast is possible. Two source nodes generate independent, uniformly distributed symbols $m_1$ and $m_2$. Although $m_1$ and $m_2$ cannot be delivered securely to the terminals individually under a node-based eavesdropper model, both terminals can recover the shared key $K = m_1 + m_2$ while any eavesdropper observing a single node gains no information about the key.}
        \label{fig:1}
\end{figure}

Secure network coding with information-theoretic guarantees was first introduced in \cite{cai2002secure,cai2010secure}, where it was shown that in a single-source multicast network, perfect secrecy against an eavesdropper tapping up to {\bf $\ell$ edges} can be achieved at a rate equal to the source-terminal min-cut minus $\ell$.
Later, \cite{feldman2004capacity} demonstrated that random linear network coding achieves this secrecy capacity with high probability over large fields.
{Once several nodes can generate randomness to be used in the communication process,
and/or eavesdroppers have access to  edge sets  with varying capacities (e.g., the setting of node eavesdroppers), the capacity is not fully characterized; its determination, in certain cases, is known to be NP-hard \cite{cui2012secure} or as hard as solving the general multiple-unicast problem \cite{8327896}, which is a long standing open problem, e.g., \cite{dougherty2005insufficiency, kamath2016two}.

For the more general and challenging setting of {\bf node-based} eavesdroppers, much less is known.
\cite{lima2007random} studies security in random linear multicast network coding, and shows that intermediate nodes cannot decode the transmitted message if  its in-degree is smaller than the number of source symbols.
\cite{che2013routing} studies secure {\em unicast} from a source to a single terminal under node-based eavesdropping (and jamming) and characterizes the optimal secure rate among routing-only schemes. It leaves open the study of more general network coding schemes.
In \cite{zhang2013lightweight}, the authors study secure network coding from a different perspective, focusing on the efficiency–security trade-off rather than achievable rate, and proposes a lightweight encryption scheme under an adversarial model that can observe combinations of links and nodes. However, the approach relies on pre-shared secret keys between different network nodes (which is the focus of this work).
\cite{wang2016optimal} studies information-theoretically secure unicast streaming without encryption or pre-shared keys, and designs both deterministic and random linear network coding schemes that jointly account for security, rate, randomness overhead, and bandwidth cost under a node-based adversary model. 
However, the threat model in \cite{wang2016optimal}, similar to that in \cite{lima2007random}, considers only a single eavesdropping node or non-colluding compromised nodes.
The model and results of the work at hand differ significantly from those above in the sense that we consider the task of key distribution and do not require source reconstruction, we consider single and multiple source nodes that can generate network randomness, we consider multiple terminal sets, each decoding its own key, and we consider multiple colluding eavesdropping nodes.

\paragraph{Previous results on (multiple) key-cast using regenerating codes and secret sharing} Regenerating codes provide a structured framework for distributed storage systems, in which a file is divided into multiple fragments and stored across different nodes, allowing a failed node to recover its data by connecting to a subset of surviving nodes. Although originally developed for storage applications, many of their underlying principles are closely related to secure aspects of network coding.
Regenerating codes were introduced in \cite{dimakis2010network} to enable efficient repair in distributed storage by reducing the bandwidth required to reconstruct lost data. By representing node contents as collections of finite-field symbols, these codes allow a failed node to recover its data by downloading linear combinations of small fragments from surviving nodes rather than retrieving the entire file. A secure variant was later proposed in \cite{shah2011information}, where additional randomness is embedded into stored and transmitted symbols so that an eavesdropper observing a bounded number of nodes, as well as the repair data, gains no information about the stored file.

In secret sharing, a secret is divided into multiple shares and distributed among different nodes such that authorized subsets can reconstruct the secret, while unauthorized subsets obtain no information about it. Building on secure regenerating codes, \cite{shah2015distributed} constructed a secret sharing scheme for networks in which the source cannot directly reach all participants. Their scheme disseminates shares in a fully distributed manner using only local neighborhood information, provides information-theoretic security against eavesdroppers observing fewer nodes than the reconstruction threshold, and enables recovery of the secret by all authorized terminals while reducing both communication overhead and randomness requirements.

From a (multiple) key-cast perspective, secure regenerating codes and secret sharing schemes can be viewed as mechanisms for distributing shared randomness across network nodes in the presence of node-based adversaries. Rather than reconstructing a specific file, the objective is to ensure that authorized nodes collectively obtain sufficient information to agree on a common secret, while an adversary with limited node access observes only statistically independent combinations.
Building on this perspective, \cite{langberg2022network, 10904071} utilizes the ideas of the secret sharing scheme proposed in \cite{shah2015distributed} to construct a multiple key-cast protocol for networks with a single source node.
The scheme suggested in \cite{langberg2022network, 10904071} allows to disseminate keys of unit rate as long as every terminal has two vertex-disjoint (unit capacity) paths from the source, and every non-terminal node has two edge-disjoint (unit capacity) paths from the source. The disseminated keys are kept secret from an eavesdropper capable of observing all information of any single, i.e., $\ell=1$, non-terminal node (except the source node).
The work at hand leverages the connections established between secret sharing, regenerating codes, and key-cast to generalize the results of \cite{langberg2022network, 10904071} to eavesdroppers that control multiple network nodes, i.e., $\ell$ eavesdropping nodes for general values of $\ell$.
Additional prior work on the task of key-dissemination in the context of network coding includes \cite{langberg2024characterizing} that characterizes network conditions for positive-rate in the setting of secure multicast and key-cast in the presence of an eavesdropper that may observe any single non-terminal node.

\section{System Model}\label{SysM}
\paragraph{Notation}
Matrices are denoted by uppercase boldface letters (e.g., $\boldsymbol{A}$), vectors by lowercase boldface letters (e.g., $\boldsymbol{a}$), and scalars by italic letters (e.g., $a$). The finite field of size $q$ is denoted by $\mathbb{F}_q$. A set $\mathcal{X}$ is a collection of distinct elements, and its cardinality is denoted by $|\mathcal{X}|$. The operator $(\cdot)^T$ denotes transpose. For vectors $\mathbf{a}, \mathbf{b} \in \mathbb{F}_q^n$, their inner product is written as $\mathbf{a}^T \mathbf{b} = \sum_{i=1}^n a_i b_i$. For a vector $\boldsymbol{a} \in \mathbb{F}_q^n$, we denote by $[\boldsymbol{a}]_{i:j}$ the subvector containing components $i$ through $j$ of $\boldsymbol{a}$. For a matrix $\boldsymbol{A}\in \mathbb{F}_q^{n \times m}$, the entry in the $i$-th row and $j$-th column is denoted by $[\boldsymbol{A}]_{i,j}$. For any integer $\ell$, we define $[\ell] = \{1, 2, \ldots, \ell\}$. The vector $\boldsymbol{e}_i$ denotes the $i$th standard basis vector, with a single nonzero entry equal to $"1"$ in position $i$ and zeros elsewhere. $\boldsymbol{0}_{m \times 1}$ denotes the all-zero vector of length $m$. The neighborhood of a node $n$, denoted by $\mathcal{N}(n)$, is the set of nodes that are direct outgoing neighbors of $n$ (i.e., nodes connected by an edge from $n$). The set of parent nodes of a node $j$, denoted by $\mathcal{P}(j)$, is the set of nodes with direct edges leading into $j$.

The network instance $\mathcal{I}=\left(\mathcal{V}, \mathcal{E}, \mathcal{B}, \mathcal{S}, \mathcal{T}\right)$ is represented by a directed acyclic graph (DAG)
$\mathcal{G} = (\mathcal{V}, \mathcal{E})$, 
where $\mathcal{V} = \{v_1,\dots,v_{|\mathcal{V}|}\}$ is the set of nodes,
$\mathcal{E}$ is the set of directed edges, 
$\mathcal{S} \subseteq \mathcal{V}$ is the set of source nodes, where each source $s \in {\mathcal S}$ holds an unlimited amount of independent random symbols, denoted by $X_s$, drawn from $\mathbb{F}_q$,
and $\mathcal{T} \subseteq \mathcal{V}$ is the set of terminal nodes.
The set $\mathcal{T}$ is partitioned as
$\mathcal{T}=\{\mathcal{T}_i\}_{i=1}^m$ into $m$ pairwise disjoint subsets $\mathcal{T}_1,\mathcal{T}_2,\dots,\mathcal{T}_m$,
where for each $i=1,\dots, m$, terminals in $\mathcal{T}_i$ decode the same key (distinct from the keys decodes in other subsets $\mathcal{T}_j$, $j \ne i$).
The security requirement is represented by $\mathcal{B}=\{\beta_{1},\dots,\beta_{|\mathcal{B}|}\}$, where each $\beta \in \mathcal{B}$ is a subset of nodes that can be observed by the eavesdropper.
We assume that source nodes have no incoming edges and that terminal nodes have no outgoing edges. All operations in the network are assumed to be performed over a finite field $\mathbb{F}_q$ of size $q$. Every edge $e \in \mathcal{E}$ has unit capacity, i.e., each edge can carry a single symbol from $\mathbb{F}_q$. To model higher capacity edges, multiple (unit-capacity) edges may exist between two given nodes.

\paragraph{Key-Codes}
For a blocklength $n$, a \emph{key-code} $\mathcal{C} = (\mathcal{F}, \mathcal{G})$ consists of a collection of local encoding functions $\mathcal{F} = \{ f_e : e \in \mathcal{E} \}$ and decoding functions $\mathcal{G} = \{ g_{t} : t \in \mathcal{T}\}$. For each edge $e=(u,v) \in \mathcal{E}$, the transmitted message $X_e^n \in [q^n]$ is obtained by applying the encoding function $f_e$ to the information available at node $u$. More precisely, for a generic node $u \in \mathcal{V}\setminus \mathcal{S}$, the set of available inputs is given by
\(
X_{u} = \big( (X_{e'}^n : e' = (v,u) \in \mathcal{E}) \big),
\)
which captures all symbols received by $u$. 
For a source $s$, as stated above, $X_s$ includes the randomness available at $s$. To ensure well-defined encoding, communication proceeds according to a topological ordering.

\paragraph{Secure (Multiple) Key-Cast Instance}
An instance $\mathcal{I}$ allows secure multiple key-cast if there exists a key-code that enables a collection of keys $\{K_1, K_2, \dots, K_m\}$ to be securely transmitted, where each key $K_i$ is intended for terminal set $\mathcal{T}_i \subseteq \mathcal{T}$, while for each $i \in \{1,\dots,m\}$, an eavesdropper observing nodes in any $\beta \in \mathcal{B}$ obtains no information about $K_i$. The special case $m=1$, in which there is a single terminal set, is referred to as a secure key-cast instance. More specifically,
\begin{itemize}
    \item Each terminal in $\mathcal{T}_i$ can recover its corresponding key $K_i$:
    \[
        \forall i \in \{1,\dots,m\},\ \forall t \in \mathcal{T}_i, \quad H\!\left(K_i \mid X_t\right) = 0 .
    \]
    
    \item For each $i \in \{1,\dots,m\}$ and every $\beta \in \mathcal{B}$ such that $\beta \cap \mathcal{T}_i =\phi$, the eavesdropper obtains no information about $K_i$, i.e.,
    \(
        I\!\left(K_i ; \{X_v : v \in \beta\}\right) = 0 .
    \)
\end{itemize}

\begin{definition}[Key-Rate]
    For $i \in \{1,\dots,m\}$, let $K_i$ denote the key transmitted from the source nodes to a set of terminals $\mathcal{T}_i \in \mathcal{T}$ over a network $\mathcal{G} = (\mathcal{V}, \mathcal{E})$, and let $n$ denote the code blocklength. The network key-rate $R$ is defined as
    $$
    R = \min_{i \in \{1,\dots,m\}}{\frac{H(K_i)}{n}}.
    $$
\end{definition}

\begin{definition}[$d$-Vertex Connected]
    A node $v \in \mathcal{V}$ is said to be $d$-vertex connected from a source $s \in \mathcal{S}$ if for any integer $0 \leq c \leq d$,
    there exist at least $c$ vertex-disjoint paths from $s$ to $v$ and at least $d-c$ edges connecting $s$ to $v$.
    For any $v \notin \mathcal{N}(s)$, being $d$-vertex connected implies that the removal of any set of fewer than $d$ intermediate nodes (excluding $s$ and $v$) does not disconnect $v$ from $s$.
\end{definition}

Throughout, we associate with each node $v_i$ in $\mathcal{V}$ a distinct element $\alpha_i$ of $\mathbb{F}_q$ and a corresponding Vandermonde vector.
\begin{definition}[Vandermonde vector]
    The Vandermonde vector of size $d$ assigned to node $v_i$ is defined as    
    $$
    \boldsymbol{v}_i = \big(1, \alpha_i, \alpha_i^2, \ldots, \alpha_i^{d-1}\big) \in \mathbb{F}_q^d,
    $$    
    where $\alpha_i$ is a distinct element associated with node $v_i$, chosen from the underlying field $\mathbb{F}_q$.
\end{definition}

\section{Preliminary Lemmas}\label{LemmSec}
We establish a set of theorems showing the existence of key rates for Secure Multiple Key-Cast instances that satisfy certain connectivity requirements. 
First, we show that if all nodes in the network are $d$-vertex connected from the source, then the optimal key rate is achievable. 
Next, we present results for which the connectivity requirements are relaxed that demonstrate the existence of Multiple Key-Cast schemes for networks where not all nodes satisfy the $d$-vertex connectivity property. 
We start by presenting a umber of preliminary lemmas we use in our theorem proofs.

\begin{lemma}\label{Lemmfl}
    Let $\boldsymbol{M}\in\mathbb{F}_q^{d\times d}$ be a random symmetric matrix whose upper–triangular entries are independent and uniformly distributed over~$\mathbb{F}_q$. Let $\boldsymbol{U}=[\boldsymbol{V}_1,\boldsymbol{V}_2]$, where $\boldsymbol{V}_1\in\mathbb{F}_q^{d\times \ell}$, $\boldsymbol{V}_2\in\mathbb{F}_q^{d\times (d-\ell)}$, and $\boldsymbol{U}$ is invertible. Suppose that $\boldsymbol{M}\boldsymbol{V}_1=\boldsymbol{0}$. Then
    \[
    \boldsymbol{K}=\boldsymbol{U}^T\boldsymbol{M}\boldsymbol{U}
    \]
    is symmetric and has the following block form
    \[
    \boldsymbol{K}=
    \begin{bmatrix}
        \boldsymbol{0}_{\ell\times \ell} & \boldsymbol{0}_{\ell\times (d-\ell)}\\[0.3em]
        \boldsymbol{0}_{(d-\ell)\times \ell} & \boldsymbol{G}
    \end{bmatrix},
    \]
    where $\boldsymbol{G}\in\mathbb{F}_q^{(d-\ell)\times (d-\ell)}$ is symmetric and its upper–triangular entries are independent and uniformly distributed over~$\mathbb{F}_q$.
\end{lemma}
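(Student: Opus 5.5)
The plan is to treat the map $\boldsymbol{M}\mapsto \boldsymbol{U}^T\boldsymbol{M}\boldsymbol{U}$ as a linear change of basis on the space of symmetric matrices, and to track both the constraint $\boldsymbol{M}\boldsymbol{V}_1=\boldsymbol{0}$ and the uniform distribution through it. I read the hypothesis as follows: $\boldsymbol{M}$ is uniformly distributed over the linear subspace
\[
\mathcal{S}_1=\{\boldsymbol{M}=\boldsymbol{M}^T:\ \boldsymbol{M}\boldsymbol{V}_1=\boldsymbol{0}\}.
\]
Equivalently, it is the uniform symmetric matrix conditioned on the linear event $\boldsymbol{M}\boldsymbol{V}_1=\boldsymbol{0}$. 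This is how the lemma is used in a secure regenerating-code construction of the type in \cite{shah2011information}. A uniform distribution on a finite vector space, conditioned on a subspace, is uniform on that subspace.

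First I will record the routine facts. $\boldsymbol{K}$ is symmetric because $\boldsymbol{K}^T=\boldsymbol{U}^T\boldsymbol{M}^T\boldsymbol{U}=\boldsymbol{K}$. The map $\phi(\boldsymbol{M})=\boldsymbol{U}^T\boldsymbol{M}\boldsymbol{U}$ is an $\mathbb{F}_q$-linear bijection of the space $\mathcal{S}$ of $d\times d$ symmetric matrices onto itself, with inverse $\boldsymbol{K}\mapsto \boldsymbol{U}^{-T}\boldsymbol{K}\boldsymbol{U}^{-1}$; this uses that $\boldsymbol{U}$ is invertible.

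Next I will identify the image $\phi(\mathcal{S}_1)$. For $j\le \ell$, the $j$th column of $\boldsymbol{K}$ is $\boldsymbol{K}\boldsymbol{e}_j=\boldsymbol{U}^T\boldsymbol{M}\boldsymbol{U}\boldsymbol{e}_j=\boldsymbol{U}^T\boldsymbol{M}[\boldsymbol{V}_1]_{\cdot,j}$. Since $\boldsymbol{U}^T$ is invertible, this column vanishes for every $j\le\ell$ if and only if $\boldsymbol{M}\boldsymbol{V}_1=\boldsymbol{0}$. By symmetry of $\boldsymbol{K}$, the first $\ell$ rows then vanish as well. Hence
\[
\phi(\mathcal{S}_1)=\left\{\begin{bmatrix}\boldsymbol{0}&\boldsymbol{0}\\ \boldsymbol{0}&\boldsymbol{G}\end{bmatrix}:\ \boldsymbol{G}\in\mathbb{F}_q^{(d-\ell)\times(d-\ell)},\ \boldsymbol{G}=\boldsymbol{G}^T\right\}.
\]
This already gives the claimed block form.

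Finally I will transfer the distribution. A bijection between finite sets carries the uniform distribution to the uniform distribution, so $\boldsymbol{K}$ is uniform over $\phi(\mathcal{S}_1)$. This set is parametrized bijectively by the $(d-\ell)(d-\ell+1)/2$ upper-triangular entries of $\boldsymbol{G}$. Uniformity over the set therefore means these entries are independent and uniform over $\mathbb{F}_q$.

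The only delicate point is the second step together with the correct reading of the hypothesis: the constraint is a conditioning event, not a deterministic property of an unconstrained uniform $\boldsymbol{M}$. Once the constraint is rewritten as "the first $\ell$ columns of $\boldsymbol{K}$ vanish", the rest follows from the bijection.
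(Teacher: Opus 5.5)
Your proposal is correct and follows essentially the same route as the paper. Both arguments read $\boldsymbol{M}\boldsymbol{V}_1=\boldsymbol{0}$ as a conditioning event and use that $\boldsymbol{M}\mapsto\boldsymbol{U}^T\boldsymbol{M}\boldsymbol{U}$ is a bijection on symmetric matrices. Both then show the constraint is equivalent to the first $\ell$ columns (hence rows) of $\boldsymbol{K}$ vanishing, and transfer uniformity to $\boldsymbol{G}$. The only cosmetic difference is that you restrict the bijection to the constrained subspace directly, whereas the paper first shows $\boldsymbol{K}$ is uniform on all symmetric matrices and then computes the conditional law as a ratio of set sizes.
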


\begin{proof}
    Let $\mathcal{S}_d$ denote the set of all symmetric $d\times d$ matrices over $\mathbb{F}_q$. Its cardinality is $|\mathcal{S}_d| = q^{d(d+1)/2}$. Since $\boldsymbol{M}$ is drawn uniformly from $\mathcal{S}_d$, its probability mass function satisfies
    \[
    \text{Prob}(\boldsymbol{M}=\boldsymbol{M}_0)=\frac{1}{|\mathcal{S}_d|}, \qquad \forall\, \boldsymbol{M}_0\in\mathcal{S}_d.
    \]
    
    Because $\boldsymbol{M}$ is symmetric,
    \[
    \boldsymbol{K}^T = (\boldsymbol{U}^T \boldsymbol{M}\boldsymbol{U})^T
    = \boldsymbol{U}^T \boldsymbol{M}\boldsymbol{U}
    = \boldsymbol{K},
    \]
    and therefore $\boldsymbol{K}\in\mathcal{S}_d$ is also symmetric.
    
    Consider the mapping $f: \mathcal{S}_d \to \mathcal{S}_d$ defined by $f(\boldsymbol{A}) = \boldsymbol{U}^T \boldsymbol{A} \boldsymbol{U}$.
    Since $\boldsymbol{U}$ is invertible, this mapping is a bijection and the inverse mapping is given by 
    $$f^{-1}(\boldsymbol{B}) = (\boldsymbol{U}^T)^{-1} \boldsymbol{B} \boldsymbol{U}^{-1} = (\boldsymbol{U}^{-1})^T \boldsymbol{B} \boldsymbol{U}^{-1},$$
    where the second equality follows from the fact that $\boldsymbol{U}$ is invertible. Because $f$ is a bijection and symmetry is preserved, the random matrix $\boldsymbol{K} = \boldsymbol{U}^T \boldsymbol{M} \boldsymbol{U}$ is also uniformly distributed over $\mathcal{S}_d$. Thus, for any $\boldsymbol{K}_0 \in \mathcal{S}_d$,
    \[
    \text{Prob}(\boldsymbol{K} = \boldsymbol{K}_0) = \text{Prob}(\boldsymbol{M} = f^{-1}(\boldsymbol{K}_0)) = \frac{1}{|\mathcal{S}_d|}.
    \]

    We are conditioning on the event $\mathcal{E} = \{ \boldsymbol{M}\boldsymbol{V}_1 = \boldsymbol{0} \}$. We must determine what this event corresponds to in terms of $\boldsymbol{K}$.
    Substitute $\boldsymbol{M} = (\boldsymbol{U}^{-1})^T \boldsymbol{K} \boldsymbol{U}^{-1}$ into the condition:
    \[
    (\boldsymbol{U}^{-1})^T \boldsymbol{K} \boldsymbol{U}^{-1} \boldsymbol{V}_1 = \boldsymbol{0}.
    \]
    Since $(\boldsymbol{U}^T)^{-1}$ is invertible, we can multiply by $\boldsymbol{U}^T$ on the left, simplifying the condition to:
    \[
    \boldsymbol{K} (\boldsymbol{U}^{-1} \boldsymbol{V}_1) = \boldsymbol{0}.
    \]
    Also, from the definition of $\boldsymbol{U}$, we know that 
    $$
    \boldsymbol{U}^{-1} \boldsymbol{V}_1 = 
    \begin{bmatrix}
        I_{\ell} \\[4pt] 
        \boldsymbol{0}_{(d-\ell)\times\ell}
    \end{bmatrix},
    $$
    where $I_{\ell}$ is an identity matrix of size $\ell$.
    Thus, the condition $\boldsymbol{K} (\boldsymbol{U}^{-1} \boldsymbol{V}_1) = \boldsymbol{0}$ implies that the first $\ell$ columns of $\boldsymbol{K}$ are zero. Since $\boldsymbol{K}$ is symmetric, its first $\ell$ rows are also zero. Accordingly, define the event
    \[
    \mathcal{E}' \triangleq 
    \left\{
    \boldsymbol{K}=
    \begin{bmatrix}
    \boldsymbol{0}_{\ell\times \ell} & \boldsymbol{0}_{\ell\times (d-\ell)}\\
    \boldsymbol{0}_{(d-\ell)\times \ell} & \boldsymbol{G}
    \end{bmatrix}
    \;\middle|\;
    \boldsymbol{G}\in \mathcal{S}_{d-\ell}
    \right\},
    \]
    i.e., $\boldsymbol{K}$ has an all-zero $\ell\times d$ top block and an all-zero $d\times \ell$ left block, with an arbitrary symmetric $(d-\ell)\times(d-\ell)$ lower-right block. We conclude that the condition \(\boldsymbol{M}\boldsymbol{V}=0\) (i.e., event \(\mathcal{E}\)) implies that \(\boldsymbol{K} \in \mathcal{E}'\).
    
    Conversely, assume $\mathcal{E}'$ holds, i.e.,
    \[
    \boldsymbol{K}=
    \begin{bmatrix}
    \boldsymbol{0} & \boldsymbol{0}\\
    \boldsymbol{0} & \boldsymbol{G}
    \end{bmatrix}.
    \]
    Then
    \[
    \boldsymbol{K}\big(\boldsymbol{U}^{-1}\boldsymbol{V}_1\big)
    =\boldsymbol{K}
    \begin{bmatrix}
    \boldsymbol{I}_\ell\\[3pt]
    \boldsymbol{0}
    \end{bmatrix}
    =
    \begin{bmatrix}
    \boldsymbol{0}\\
    \boldsymbol{0}
    \end{bmatrix},
    \]
    and consequently
    \begin{align*}
        \boldsymbol{M}\boldsymbol{V}_1
    &=(\boldsymbol{U}^{-1})^{T}\boldsymbol{K}\boldsymbol{U}^{-1}\boldsymbol{V}_1 \\
    &=(\boldsymbol{U}^{-1})^{T}\big[\boldsymbol{K}(\boldsymbol{U}^{-1}\boldsymbol{V}_1)\big]\\
    &=(\boldsymbol{U}^{-1})^{T}\boldsymbol{0}
    =\boldsymbol{0}.
    \end{align*}
    Thus, $\mathcal{E}$ holds.
    Since both implications are established, $\mathcal{E}$ and $\mathcal{E}'$ are equivalent.

    It remains to characterize the conditional law of the lower-right block. Under $\mathcal{E}'$, the matrix $\boldsymbol{K}$ is uniquely determined by the choice of the symmetric block $\boldsymbol{G}\in\mathcal{S}_{d-\ell}$, via
\[
\boldsymbol{K}=
\begin{bmatrix}
\boldsymbol{0} & \boldsymbol{0}\\
\boldsymbol{0} & \boldsymbol{G}
\end{bmatrix}.
\]
Since $\boldsymbol{K}$ is uniform over $\mathcal{S}_d$, conditioning on $\mathcal{E}'$ makes $\boldsymbol{K}$ uniform over the subset
\[
\mathcal{S}_d(\mathcal{E}') \triangleq 
\left\{
\begin{bmatrix}
\boldsymbol{0} & \boldsymbol{0}\\
\boldsymbol{0} & \boldsymbol{G}
\end{bmatrix}
:\ \boldsymbol{G}\in\mathcal{S}_{d-\ell}
\right\},
\]
whose cardinality equals $|\mathcal{S}_{d-\ell}|$. Therefore, for any fixed $\boldsymbol{G}_0\in\mathcal{S}_{d-\ell}$,
\begin{align*}
\Pr\!\big(\boldsymbol{G}=\boldsymbol{G}_0 \,\big|\, \mathcal{E}\big)
&=\Pr\!\big(\boldsymbol{G}=\boldsymbol{G}_0 \,\big|\, \mathcal{E}'\big) \\[0.4em]
&=\frac{\Pr\!\left(\boldsymbol{K}=
\begin{bmatrix}
\boldsymbol{0} & \boldsymbol{0}\\
\boldsymbol{0} & \boldsymbol{G}_0
\end{bmatrix}\right)}{\Pr(\mathcal{E}')}\\[0.4em]
&=\frac{\frac{1}{|\mathcal{S}_d|}}{\frac{|\mathcal{S}_{d-\ell}|}{|\mathcal{S}_d|}}
=\frac{1}{|\mathcal{S}_{d-\ell}|}.
\end{align*}
Hence $\boldsymbol{G}$ is uniform over $\mathcal{S}_{d-\ell}$.

\end{proof}

\begin{lemma}\label{main_ind}
    Let $\boldsymbol{M} \in \mathbb{F}_q^{d \times d}$ be a symmetric random matrix with independent and uniformly distributed entries, let $\boldsymbol{v} \in \mathbb{F}_q^d$ be a Vandermonde vector, and let $\boldsymbol{V}_\epsilon \in \mathbb{F}_q^{d \times \ell}$ be a Vandermonde matrix with $\ell<d$, i.e., every column of $\boldsymbol{V}_\epsilon$ is a Vandermonde vector, where $\boldsymbol{v}$ is not in the column space of $\boldsymbol{V}_\epsilon$. Then, $[\boldsymbol{M}\boldsymbol{v}]_{1:d-\ell}$ is independent of $\boldsymbol{M}\boldsymbol{V}_\epsilon$.
\end{lemma}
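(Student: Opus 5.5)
The plan is to show that, conditioned on any value of $\boldsymbol{M}\boldsymbol{V}_\epsilon$, the vector $[\boldsymbol{M}\boldsymbol{v}]_{1:d-\ell}$ is uniform on $\mathbb{F}_q^{d-\ell}$. Independence then follows immediately.

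\textbf{Reduction to the kernel coset.} The maps $\boldsymbol{M}\mapsto \boldsymbol{M}\boldsymbol{V}_\epsilon$ and $\boldsymbol{M}\mapsto[\boldsymbol{M}\boldsymbol{v}]_{1:d-\ell}$ are linear on the space $\mathcal{S}_d$ of symmetric matrices, and $\boldsymbol{M}$ is uniform on $\mathcal{S}_d$. Fix any value $\boldsymbol{A}$ of $\boldsymbol{M}\boldsymbol{V}_\epsilon$ that occurs with positive probability, and fix one matrix $\boldsymbol{M}_0$ attaining it. Conditioned on $\boldsymbol{M}\boldsymbol{V}_\epsilon=\boldsymbol{A}$, we have $\boldsymbol{M}=\boldsymbol{M}_0+\boldsymbol{M}'$ with $\boldsymbol{M}'$ uniform on $\{\boldsymbol{M}'\in\mathcal{S}_d:\boldsymbol{M}'\boldsymbol{V}_\epsilon=\boldsymbol{0}\}$. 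The conditional law of $[\boldsymbol{M}\boldsymbol{v}]_{1:d-\ell}$ is therefore a translate of the law of $[\boldsymbol{M}'\boldsymbol{v}]_{1:d-\ell}$. Hence it suffices to prove that, conditioned on the event $\boldsymbol{M}\boldsymbol{V}_\epsilon=\boldsymbol{0}$, the vector $[\boldsymbol{M}\boldsymbol{v}]_{1:d-\ell}$ is uniform.

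\textbf{Applying Lemma~\ref{Lemmfl}.}
\begin{enumerate}[label=(\roman*)]
\item The columns of $\boldsymbol{V}_\epsilon$ together with $\boldsymbol{v}$ are $\ell+1\le d$ Vandermonde vectors with distinct $\alpha$'s, so they are linearly independent. This also follows from the hypothesis that $\boldsymbol{v}$ is not in the column space of $\boldsymbol{V}_\epsilon$.
\item Complete them to an invertible $\boldsymbol{U}=[\boldsymbol{V}_\epsilon,\boldsymbol{V}_2]$ whose first column of $\boldsymbol{V}_2$ is $\boldsymbol{v}$, for instance by adding further Vandermonde vectors.
\item By Lemma~\ref{Lemmfl}, conditioned on $\boldsymbol{M}\boldsymbol{V}_\epsilon=\boldsymbol{0}$, we have $\boldsymbol{U}^T\boldsymbol{M}\boldsymbol{U}=\mathrm{diag}(\boldsymbol{0},\boldsymbol{G})$ with $\boldsymbol{G}$ uniform symmetric. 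In particular the first column $\boldsymbol{g}$ of $\boldsymbol{G}$ is uniform on $\mathbb{F}_q^{d-\ell}$.
\item Since $\boldsymbol{U}^{-1}\boldsymbol{v}=\boldsymbol{e}_{\ell+1}$, we get
\[
\boldsymbol{M}\boldsymbol{v}=(\boldsymbol{U}^{-1})^T\,\mathrm{diag}(\boldsymbol{0},\boldsymbol{G})\,\boldsymbol{e}_{\ell+1}=\boldsymbol{B}\boldsymbol{g},
\]
where $\boldsymbol{B}\in\mathbb{F}_q^{d\times(d-\ell)}$ consists of the last $d-\ell$ columns of $(\boldsymbol{U}^{-1})^T$.
\item Hence $[\boldsymbol{M}\boldsymbol{v}]_{1:d-\ell}=\boldsymbol{B}_{\rm top}\boldsymbol{g}$, where $\boldsymbol{B}_{\rm top}$ is the top $(d-\ell)\times(d-\ell)$ block of $\boldsymbol{B}$. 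This is uniform as soon as $\boldsymbol{B}_{\rm top}$ is invertible.
\end{enumerate}

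\textbf{Main obstacle: invertibility of $\boldsymbol{B}_{\rm top}$.} From $\boldsymbol{U}^{-1}\boldsymbol{U}=\boldsymbol{I}$ we get $\boldsymbol{V}_\epsilon^T\boldsymbol{B}=\boldsymbol{0}$, and $\boldsymbol{B}$ has full column rank $d-\ell$. So the columns of $\boldsymbol{B}$ form a basis of the null space of $\boldsymbol{V}_\epsilon^T$. Suppose $\boldsymbol{B}_{\rm top}\boldsymbol{c}=\boldsymbol{0}$ with $\boldsymbol{c}\ne\boldsymbol{0}$. Then $\boldsymbol{x}=\boldsymbol{B}\boldsymbol{c}$ is a nonzero vector in that null space supported only on the last $\ell$ coordinates.

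Restricted to those coordinates, $\boldsymbol{V}_\epsilon^T$ is the $\ell\times\ell$ matrix with rows $(\alpha_j^{d-\ell},\dots,\alpha_j^{d-1})=\alpha_j^{d-\ell}(1,\alpha_j,\dots,\alpha_j^{\ell-1})$. This is a diagonal scaling of a square Vandermonde matrix. It is invertible provided the $\alpha_j$ are distinct, which holds by construction, and nonzero. Its invertibility forces $\boldsymbol{x}=\boldsymbol{0}$, a contradiction.

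The delicate point is the requirement $\alpha_j\neq 0$ for the eavesdropped nodes. I would handle it by stipulating that the distinct field elements assigned to nodes are chosen nonzero, which is possible whenever $q>|\mathcal{V}|$. If that cannot be assumed, I would instead exhibit directly that this $\ell\times\ell$ minor is nonsingular under the paper's choice of $\alpha$'s.
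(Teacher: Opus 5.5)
Your proof is correct, and its skeleton matches the paper's. Both reduce to $\boldsymbol{M}'$ uniform on the annihilator subspace $\{\boldsymbol{M}'\in\mathcal{S}_d:\boldsymbol{M}'\boldsymbol{V}_\epsilon=\boldsymbol{0}\}$, apply Lemma~\ref{Lemmfl} with $\boldsymbol{U}=[\boldsymbol{V}_\epsilon,\boldsymbol{v},\dots]$, and end up needing the block $[(\boldsymbol{U}^{-1})^T]_{1:d-\ell,\,\ell+1:d}$, your $\boldsymbol{B}_{\mathrm{top}}$, to be invertible. You depart from the paper in two places:
\begin{itemize}
\item The paper computes the marginal $H([\boldsymbol{M}\boldsymbol{v}]_{1:d-\ell})=d-\ell$ separately, via a ``unique diagonal term'' argument, and compares it with the conditional entropy. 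You show the conditional law is uniform for every value of $\boldsymbol{M}\boldsymbol{V}_\epsilon$, which gives independence directly and makes that marginal computation unnecessary.
\item More importantly, the paper justifies invertibility of the block by saying every square submatrix of a Vandermonde matrix is invertible. That cannot be used here, because $(\boldsymbol{U}^{-1})^T$ is not a Vandermonde matrix. Your argument supplies the proof the paper only asserts: the columns of $\boldsymbol{B}$ span $\ker\boldsymbol{V}_\epsilon^T$, so a null vector of $\boldsymbol{B}_{\mathrm{top}}$ would give a nonzero element of $\ker\boldsymbol{V}_\epsilon^T$ supported on the last $\ell$ coordinates. The minor $\mathrm{diag}(\alpha_j^{d-\ell})$ times a square Vandermonde matrix rules that out.
\end{itemize}

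Your concern about $\alpha_j\neq 0$ is well placed, but your fallback of proving the minor nonsingular for the paper's choice of $\alpha$'s cannot work in general. Nonzero evaluation points for the eavesdropped columns are needed for the lemma itself, not just for your method. Take $d=2$, $\ell=1$, $\boldsymbol{V}_\epsilon=(1,0)^T$ (so $\alpha=0$), and $\boldsymbol{v}=(1,\alpha)^T$ with $\alpha\neq 0$. Then $\boldsymbol{M}\boldsymbol{V}_\epsilon=([\boldsymbol{M}]_{1,1},[\boldsymbol{M}]_{1,2})^T$ determines $[\boldsymbol{M}\boldsymbol{v}]_1=[\boldsymbol{M}]_{1,1}+\alpha[\boldsymbol{M}]_{1,2}$. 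Your stipulation of distinct nonzero labels, available when $q$ exceeds the number of labels used, should therefore be added as a hypothesis of the lemma. Your argument needs it only for the columns of $\boldsymbol{V}_\epsilon$, not for $\boldsymbol{v}$.
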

\begin{proof}
    In order to prove the independence, we need to show that the following condition holds:
    \begin{equation}\label{eql321}
        H\left([\boldsymbol{M}\boldsymbol{v}]_{1:d-\ell} | \boldsymbol{M}\boldsymbol{V}_\epsilon \right) = H\left([\boldsymbol{M}\boldsymbol{v}]_{1:d-\ell} \right)
    \end{equation}
    
    Since $\boldsymbol{M}$ is drawn uniformly from the space of symmetric matrices over $\mathbb{F}_q$ and $\boldsymbol{v}$ is a fixed non-zero Vandermonde vector, the right-hand side term in \eqref{eql321} is uniformly distributed over $\mathbb{F}_q^{\,d-\ell}$. Indeed, each component
    \[
    [\boldsymbol{M}\boldsymbol{v}]_r
    = \sum_{j=1}^d [\boldsymbol{M}]_{r,j}\,[\boldsymbol{v}]_j,
    \]
    for $r = 1,\dots, d-\ell$, is a nontrivial linear combination of independent uniformly distributed entries of $\boldsymbol{M}$. Moreover, each such component contains a unique diagonal term $[\boldsymbol{M}]_{r,r}$ that does not appear in any other coordinate, ensuring that every coordinate is uniform over $\mathbb{F}_q$ and independent of the others. Hence,
    \begin{equation}\label{alpha_eq10}
        H\big([\boldsymbol{M}\boldsymbol{v}]_{1:d-\ell}\big) = d - \ell.
    \end{equation}

    Let 
    \[
    \mathcal{W} = \mathrm{span}\{\boldsymbol{v}_j : \boldsymbol{v}_j \in \text{columns of }  \boldsymbol{V}_\epsilon\} \subseteq \mathbb{F}_q^d .
    \] 
    Fix a realization $\boldsymbol{Y}_0 \in \mathbb{F}_q^{d\times \ell}$ of $\boldsymbol{M}\boldsymbol{V}_\epsilon$,
    choose a particular symmetric matrix $\boldsymbol{M}_0 \in \mathrm{Sym}_d(\mathbb{F}_q)$ satisfying
    \[
    \boldsymbol{M}_0 \boldsymbol{V}_\epsilon = \boldsymbol{Y}_0 .
    \]
    The full set of symmetric matrices consistent with this realization is the affine subspace
    \[
    \mathcal{S}_{\boldsymbol{Y}_0}
    = \big\{ \boldsymbol{M}\in\mathrm{Sym}_d(\mathbb{F}_q) : \boldsymbol{M}\boldsymbol{V}_\epsilon = \boldsymbol{Y}_0 \big\}
    = \boldsymbol{M}_0 + \mathcal{M}_{\mathcal{W}},
    \]
    where
    \[
    \mathcal{M}_{\mathcal{W}}
    = \big\{ \boldsymbol{M}' \in \mathrm{Sym}_d(\mathbb{F}_q) :
    \boldsymbol{M}'\boldsymbol{w} = \boldsymbol{0},\, \forall \boldsymbol{w} \in \mathcal{W} \big\}.
    \]
    Conditioning on $\boldsymbol{M}\boldsymbol{V}_\epsilon = \boldsymbol{Y}_0$ forces
    \[
    \boldsymbol{M} = \boldsymbol{M}_0 + \boldsymbol{M}' \quad \text{for some} \quad \boldsymbol{M}' \in \mathcal{M}_{\mathcal{W}},
    \]
    and the mapping $\boldsymbol{M} \mapsto \boldsymbol{M}' = \boldsymbol{M} - \boldsymbol{M}_0$ is a bijection between the affine space of solutions $\mathcal{S}_{\boldsymbol{Y}_0}$ and the linear space of annihilators $\mathcal{M}_{\mathcal{W}}$. Since $\boldsymbol{M}$ is drawn uniformly from $\mathcal{S}_{\boldsymbol{Y}_0}$, and the mapping is a simple translation, the resulting matrix $\boldsymbol{M}'$ is uniformly distributed over $\mathcal{M}_{\mathcal{W}}$.
    Therefore,
    \[
    H\!\left( [\boldsymbol{M}\boldsymbol{v}]_{1:d-\ell} \,\middle|\, \boldsymbol{M}\boldsymbol{V}_\epsilon = \boldsymbol{Y}_0 \right)
    =
    H\!\left( [(\boldsymbol{M}_0 + \boldsymbol{M}')\boldsymbol{v}]_{1:d-\ell} \right).
    \]
    Because $\boldsymbol{M}_0$ is deterministic under the conditioning, adding $\boldsymbol{M}_0$ only shifts the
    random vector $(\boldsymbol{M}_0 + \boldsymbol{M}')\boldsymbol{v}$ by a constant. Entropy over a finite alphabet is invariant under such translations, hence
    \[
    H\!\left( [(\boldsymbol{M}_0 + \boldsymbol{M}')\boldsymbol{v}]_{1:d-\ell} \right)
    =
    H\!\left( [\boldsymbol{M}'\boldsymbol{v}]_{1:d-\ell} \right),
    \]
    To remove the conditioning, apply the law of total entropy:
    \begin{equation}\label{eqfc1}
        \begin{aligned}
            &H\!\left( [\boldsymbol{M}\boldsymbol{v}]_{1:d-\ell} \,\middle|\, \boldsymbol{M}\boldsymbol{V}_\epsilon \right) \\
            &=
            \sum_{\boldsymbol{Y}_0}
            \Pr(\boldsymbol{M}\boldsymbol{V}_\epsilon = \boldsymbol{Y}_0)\,
            H\!\left( [\boldsymbol{M}\boldsymbol{v}]_{1:d-\ell} \,\middle|\, \boldsymbol{M}\boldsymbol{V}_\epsilon = \boldsymbol{Y}_0 \right)
            \\[0.4em]
            &=
            \sum_{\boldsymbol{Y}_0}
            \Pr(\boldsymbol{M}\boldsymbol{V}_\epsilon = \boldsymbol{Y}_0)\,
            H\!\left( [\boldsymbol{M}'\boldsymbol{v}]_{1:d-\ell} \right). \\[0.4em]
            &\overset{(a)}{=} H\!\left( [\boldsymbol{M}'\boldsymbol{v}]_{1:d-\ell} \right)
        \end{aligned}
    \end{equation}
    where (a) follows from the fact that each term in the sum is identical. So the weighted average equals that same value.

    Let $\boldsymbol{U} \in \mathbb{F}_q^d$ be a basis for $\mathbb{F}_q^d$, such that
    $$
    \boldsymbol{U} = [\boldsymbol{V}_\epsilon, \boldsymbol{v}, \boldsymbol{V}'],
    $$
    where $\boldsymbol{V}'\in\mathbb{F}_q^{d\times (d-\ell-1)}$ is a Vandermonde matrix whose columns lie outside the column space of $\boldsymbol{V}_\epsilon$ and $\boldsymbol{v}$, thereby completing the basis. Define
    $$
    \boldsymbol{K} = (\boldsymbol{U})^T \boldsymbol{M}' \boldsymbol{U}.
    $$
    Using Lemma \ref{Lemmfl}, for any symmetric $\boldsymbol{M}' \in \mathbb{F}_q^{d \times d}$ and any invertible $\boldsymbol{U} \in \mathbb{F}_q^{d \times d}$, and under the condition $\boldsymbol{M}'\boldsymbol{V}_\epsilon=\boldsymbol{0}_{d\times \ell}$, the matrix $\boldsymbol{K}$ takes the following block form
    \[
    \boldsymbol{K}=
    \begin{bmatrix}
        \boldsymbol{0}_{\ell\times \ell} & \boldsymbol{0}_{\ell\times (d-\ell)}\\[0.3em]
        \boldsymbol{0}_{(d-\ell)\times \ell} & \boldsymbol{\Gamma}
    \end{bmatrix},
    \]
    where $\boldsymbol{\Gamma}\in\mathbb{F}_q^{(d-\ell)\times(d-\ell)}$ is symmetric and its upper–triangular entries are independent and uniformly distributed over $\mathbb{F}_q$.

    Since $\boldsymbol{U}$ is invertible, $\boldsymbol{M}'$ can be written as
    \[
    \boldsymbol{M}' = (\boldsymbol{U}^T)^{-1} \,\boldsymbol{K}\, \boldsymbol{U}^{-1}.
    \]
    Hence, we have
    \begin{equation}\label{cb}
        \boldsymbol{M}' \boldsymbol{v} = (\boldsymbol{U}^{-1})^T \boldsymbol{K} \boldsymbol{U}^{-1} v \;,
    \end{equation}
    where $\boldsymbol{U}^{-1}\boldsymbol{v}$ equals the standard basis vector $\boldsymbol{e}_{\ell+1}$, i.e., the vector whose $(\ell+1)$-th entry is one and all other entries are zero. Thus, we have
    \begin{equation} \label{ke}
        \boldsymbol{K} \boldsymbol{e}_{\ell+1} = 
        \begin{bmatrix}
            \boldsymbol{0}_{\ell \times 1} \\
            [\boldsymbol{\Gamma}]_{:,1}
        \end{bmatrix}.
    \end{equation}
    Now can write the left-hand side term in \eqref{eql321} as
    \begin{equation}\label{lefteq}
        \begin{aligned}
            H\big([\boldsymbol{M}\boldsymbol{v}]_{1:d-\ell} |& \boldsymbol{M}\boldsymbol{V}_\epsilon \big) \overset{(a)}{=} H([\boldsymbol{M}' \boldsymbol{v}]_{1:d-\ell}) \\
            &\overset{(b)}{=} H\left(\left[(\boldsymbol{U}^{-1})^T \boldsymbol{K} \boldsymbol{U}^{-1} v\right]_{1:d-\ell}\right) \\
            &\overset{(c)}{=} H\left(\left[(\boldsymbol{U}^{-1})^T \begin{bmatrix}
                \boldsymbol{0}_{\ell \times 1} \\
                [\boldsymbol{\Gamma}]_{:,1}
            \end{bmatrix}\right]_{1:d-\ell}\right) \\
            &= H\left(\left[\left[(\boldsymbol{U}^{-1})^T\right]_{:,\ell + 1:d}  [\boldsymbol{\Gamma}]_{:,1}\right]_{1:d-\ell}\right) \\
            &= H\left(\left[(\boldsymbol{U}^{-1})^T\right]_{1:d-\ell,\ell + 1:d} [\boldsymbol{\Gamma}]_{:,1}\right) \\
            &\overset{(d)}{=} H\left( [\boldsymbol{\Gamma}]_{:,1}\right) \\
            &\overset{(e)}{=} d-\ell,
        \end{aligned}
    \end{equation}
    where (a) follows from \eqref{eqfc1}, (b) follows from \eqref{cb}, (d) follows from the fact that every square submatrix of a Vandermonde matrix is invertible, and (e) follows from the fact that the entropy of a vector with independent and uniformly distributed entries equals its dimension.
    
    As can be seen, \eqref{alpha_eq10} and \eqref{lefteq} are equal, which proves that the statement in \eqref{eql321} is correct and  $[\boldsymbol{M}\boldsymbol{v}]_{1:d-\ell}$ is independent of $\boldsymbol{M}\boldsymbol{V}_\epsilon$.

\end{proof}

\begin{lemma}\label{d-vertex_c}
Suppose that a node $v \in \mathcal{V}$ is $d$-vertex connected to a source $s \in \mathcal{S}$. Then, the source $s$ can securely transmit any message vector $\boldsymbol{x} \in \mathbb{F}_q^{d-\ell}$ to node $v$ at rate $d-\ell$, such that an eavesdropper observing up to $\ell < d$ nodes gains no information about the message, i.e.,
\[
H(\boldsymbol{x}\mid X_v)=0
\]
and
\[
\forall\,\beta\subseteq \mathcal{V}\setminus\{s,v\}\ \text{with}\ |\beta|\le \ell,\quad
I\!\left(\boldsymbol{x};\{X_u:u\in\beta\}\right)=0.
\]\end{lemma}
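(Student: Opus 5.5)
Our plan is to prove Lemma~\ref{d-vertex_c} with a wiretap-style coset code carried over $d$ routes that no single intermediate node can share. By the definition of $d$-vertex connectivity, we fix a decomposition of the connection from $s$ to $v$ into $d$ \emph{routes} $P_1,\dots,P_d$. Here $c$ of the routes are vertex-disjoint paths with intermediate nodes, and the remaining $d-c$ are direct edges $(s,v)$. The key structural fact we use is that every intermediate node $u\in\mathcal{V}\setminus\{s,v\}$ lies on at most one route. Direct edges contain no intermediate node, and the $c$ paths are internally vertex-disjoint. Hence an eavesdropper set $\beta\subseteq\mathcal{V}\setminus\{s,v\}$ with $|\beta|\le\ell$ touches at most $\ell$ routes. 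All other edges of the network are unused (they carry the zero symbol), so each $X_u$ with $u\in\beta$ is a function only of the symbols on the routes through $u$.

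The encoding is as follows. The source draws $\boldsymbol{r}\in\mathbb{F}_q^{\ell}$ uniformly and independently of $\boldsymbol{x}$, forms $\boldsymbol{z}=(\boldsymbol{x},\boldsymbol{r})\in\mathbb{F}_q^{d}$, and sends on route $P_i$ the symbol $y_i=\boldsymbol{v}_i^T\boldsymbol{z}$. Here $\boldsymbol{v}_i$ is the size-$d$ Vandermonde vector of a distinct field element $\alpha_i$ (equivalently, we take the evaluations of the polynomial with coefficient vector $\boldsymbol{z}$, i.e.\ a Reed--Solomon/Shamir-type encoding). Every intermediate node on $P_i$ simply forwards $y_i$.

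Decodability follows directly. Node $v$ receives all $y_1,\dots,y_d$, and the $d\times d$ Vandermonde matrix with distinct $\alpha_i$ is invertible. So $v$ recovers $\boldsymbol{z}$, and in particular $\boldsymbol{x}$, which gives $H(\boldsymbol{x}\mid X_v)=0$ at rate $d-\ell$ per channel use.

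For secrecy, let $J\subseteq[d]$ be the set of routes touched by $\beta$, so $|J|\le\ell$, and let $\boldsymbol{A}=[\boldsymbol{v}_i]_{i\in J}$. It suffices to show that $\boldsymbol{A}^T\boldsymbol{z}$ is independent of $\boldsymbol{x}$. Splitting each $\boldsymbol{v}_i$ into its first $d-\ell$ and last $\ell$ coordinates gives
\[
\boldsymbol{A}^T\boldsymbol{z}=\boldsymbol{A}_1^T\boldsymbol{x}+\boldsymbol{A}_2^T\boldsymbol{r},
\]
where $\boldsymbol{A}_2^T$ is a $|J|\times\ell$ matrix with rows $(\alpha_i^{d-\ell},\dots,\alpha_i^{d-1})$. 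This matrix equals $\mathrm{diag}(\alpha_i^{d-\ell})$ times a Vandermonde matrix. Hence it has full row rank, provided we choose all $\alpha_i\neq0$, which is the point that needs care. Because $\boldsymbol{r}$ is uniform and independent of $\boldsymbol{x}$, the vector $\boldsymbol{A}_2^T\boldsymbol{r}$ is uniform on $\mathbb{F}_q^{|J|}$, so for every fixed $\boldsymbol{x}$ the observation $\boldsymbol{A}^T\boldsymbol{z}$ is uniform. This gives $I(\boldsymbol{x};\{X_u\}_{u\in\beta})=0$.

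An alternative route is to cast the scheme in the symmetric-matrix form of Lemma~\ref{main_ind}. The direct argument above is cleaner, so we would present that.

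We expect the main obstacle to be the justification that the eavesdropper's view reduces to at most $\ell$ route symbols. This requires the routing to be fixed, all unused edges to carry zero, and $s,v\notin\beta$. The remaining care is the field-size requirement $q>d$, so that $d$ distinct nonzero $\alpha_i$ exist.
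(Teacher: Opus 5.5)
Your proposal is correct and follows essentially the same approach as the paper. Both use a Shamir/Reed--Solomon encoding with $\boldsymbol{x}$ in the low-order coefficients and $\boldsymbol{r}$ in the high-order ones, sent over $d$ routes that share no intermediate node. Decodability comes from inverting the $d\times d$ Vandermonde matrix, and secrecy from the $\boldsymbol{r}$-block of the observed evaluations having full rank. Your factorization of that block as $\mathrm{diag}(\alpha_i^{d-\ell})$ times a Vandermonde matrix, which needs $\alpha_i\neq 0$ and $q>d$, is a more careful justification than the paper's blanket claim that ``every square submatrix of a Vandermonde matrix is invertible'' with evaluation points $1,\dots,d$.
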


\begin{proof}
Since $v$ is $d$-vertex connected from $s$, there exist $d$ vertex-disjoint paths $P_1,\ldots,P_d$ from $s$ to $v$. 

We use Shamir's secret sharing scheme \cite{shamir1979share}, where the secret is the vector $\boldsymbol{x}\in\mathbb{F}_q^{d-\ell}$. Let $\boldsymbol{r} \in \mathbb{F}_q^{\ell}$ be a vector with independently and uniformly distributed elements over $\mathbb{F}_q$. Define the polynomial
\[
f(i) \triangleq 
\sum_{j=1}^{d-\ell} [\boldsymbol{x}]_{j} i^{j-1}
\;+\;
\sum_{j=1}^{\ell} [\boldsymbol{r}]_{j} i^{d-\ell+j-1}.
\]
Equivalently,
\begin{align*}
f(i)
&=
\big[[\boldsymbol{x}]_1,\ldots,[\boldsymbol{x}]_{d-\ell},
[\boldsymbol{r}]_1,\ldots,[\boldsymbol{r}]_{\ell}\big]
\begin{bmatrix}
i^{0}\\
i^{1}\\
\vdots\\
i^{d-1}
\end{bmatrix}\\
&=
\begin{bmatrix}
\boldsymbol{x}\\
\boldsymbol{r}\\
\end{bmatrix}^{T}
\begin{bmatrix}
i^{0}\\
i^{1}\\
\vdots\\
i^{d-1}
\end{bmatrix}.
\end{align*}
For each $i\in\{1,\ldots,d\}$, the source sends the share $f(i)$ along path $P_i$ to node $v$.

Node $v$ receives all $d$ shares and forms
\[
\boldsymbol{y}_v
=
\begin{bmatrix}
\boldsymbol{x}\\
\boldsymbol{r}\\
\end{bmatrix}^{T}\boldsymbol{V},
\]
where $\boldsymbol{V}\in\mathbb{F}_q^{d\times d}$ is the Vandermonde matrix whose $(j,i)$-th entry equals $i^{\,j-1}$. 
Since the evaluation points are distinct, $\boldsymbol{V}$ is invertible. Hence, node $v$ can recover the coefficient vector by computing $\boldsymbol{y}_v \boldsymbol{V}^{-1}$, and therefore reconstruct the secret $\boldsymbol{x}$.

Take any $\beta\subseteq \mathcal{V}\setminus\{s,v\}$ with $|\beta|\le \ell$. Because each node is on at most one path, the eavesdropper’s total information $\{f(i):i\in\beta\}$ is contained in a set of at most $|\beta|$ share values. Hence it suffices to prove that for any index set $O\subseteq\{1,\dots,d\}$ with $ |O| = \ell$,
\[
I(\boldsymbol{x};\boldsymbol{y}_O)=0,\qquad \boldsymbol{y}_O\triangleq (y_i)_{i\in O}.
\]
Write $\boldsymbol{y}_O$ as an affine function of $(\boldsymbol{x},\boldsymbol{r})$:
\[
\boldsymbol{y}_O = \boldsymbol{x}\boldsymbol{A}_O + \boldsymbol{r}\boldsymbol{B}_O,
\]
where $\boldsymbol{B}_O$ is an $\ell \times \ell$ submatrix of a Vandermonde matrix $\boldsymbol{V}$, and is therefore invertible.

Because $\boldsymbol{r}$ is uniform over $\mathbb{F}_q^\ell$ and $\boldsymbol{B}_O$ is an invertible, the random vector $\boldsymbol{r}\boldsymbol{B}_O$ is uniform over $\mathbb{F}_q^\ell$. Hence, $\boldsymbol{y}_O=\boldsymbol{x}\boldsymbol{A}_O+\boldsymbol{r}\boldsymbol{B}_O$ is also uniform over $\mathbb{F}_q^\ell$ and its distribution does not depend on $\boldsymbol{x}$. Equivalently,
\[
H(\boldsymbol{y}_O\mid \boldsymbol{x}) = \ell
\quad\text{and}\quad
H(\boldsymbol{y}_O)=\ell,
\]
so,
\[
I(\boldsymbol{x};\boldsymbol{y}_O)=H(\boldsymbol{y}_O)-H(\boldsymbol{y}_O\mid \boldsymbol{x})=0.
\]
Therefore, the desired secrecy holds for every $|\beta|\le \ell$.

The scheme delivers vector $\boldsymbol{x}\in\mathbb{F}_q^{d-\ell}$ to $v$ with perfect reconstruction and perfect secrecy against any $\ell<d$ observed nodes with a rate of $d-\ell$. 
\end{proof}

\begin{lemma}\label{KG_uniform} 
Let $\boldsymbol{A}\in\mathbb{F}_q^{(d-\ell)\times (n-m)}$ be a random matrix whose entries are independently and uniformly distributed over $\mathbb{F}_q$, and let $\boldsymbol{B}\in\mathbb{F}_q^{(n-m)\times (n-\ell)}$ be a fixed Vandermonde matrix with full column rank. If $\ell \ge m$, then the product $\boldsymbol{C} = \boldsymbol{A}\boldsymbol{B}$ is uniformly distributed over $\mathbb{F}_q^{(d-\ell)\times (n-\ell)}$.
\end{lemma}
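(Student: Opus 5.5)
The plan is to work row by row. Each row of $\boldsymbol{C}=\boldsymbol{A}\boldsymbol{B}$ is $\boldsymbol{a}_r^T\boldsymbol{B}$, where $\boldsymbol{a}_r^T$ is the $r$-th row of $\boldsymbol{A}$. These rows are independent and uniform over $\mathbb{F}_q^{\,n-m}$. So it suffices to show that, for a single uniform vector $\boldsymbol{a}\in\mathbb{F}_q^{\,n-m}$, the vector $\boldsymbol{a}^T\boldsymbol{B}$ is uniform over $\mathbb{F}_q^{\,n-\ell}$. Independence across rows then makes $\boldsymbol{C}$ uniform over $\mathbb{F}_q^{(d-\ell)\times(n-\ell)}$, since its distribution is the product of the $d-\ell$ row distributions.

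For a single row, consider the linear map $\phi:\mathbb{F}_q^{\,n-m}\to\mathbb{F}_q^{\,n-\ell}$, $\boldsymbol{a}\mapsto \boldsymbol{a}^T\boldsymbol{B}$. The image of a uniform vector under a linear map is uniform on the image subspace: each fiber is a coset of $\ker\phi$, so every point of the image has exactly $|\ker\phi|$ preimages. Therefore $\boldsymbol{a}^T\boldsymbol{B}$ is uniform over all of $\mathbb{F}_q^{\,n-\ell}$ exactly when $\phi$ is surjective, i.e.\ when $\boldsymbol{B}$ has rank $n-\ell$, equal to its number of columns. This is precisely the full column rank hypothesis.

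The hypothesis $\ell\ge m$ enters here: it gives $n-\ell\le n-m$, so $\boldsymbol{B}$ is a tall (or square) matrix and full column rank $n-\ell$ is attainable. For a Vandermonde matrix with distinct evaluation points this is automatic, because any $(n-\ell)\times(n-\ell)$ submatrix formed from distinct rows is a square Vandermonde matrix and hence invertible. This is the same fact used in step (d) of the proof of Lemma~\ref{main_ind}. Equivalently, $\ker\phi$ has dimension $(n-m)-(n-\ell)=\ell-m\ge 0$. Each value of $\boldsymbol{C}$ then has exactly $q^{(d-\ell)(\ell-m)}$ preimages out of $q^{(d-\ell)(n-m)}$ equally likely matrices $\boldsymbol{A}$, which gives probability $q^{-(d-\ell)(n-\ell)}$ for every value.

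The main obstacle is only bookkeeping: confirming that the rank hypothesis refers to the $n-\ell$ columns, so that the map is onto rather than merely injective. I would state the fiber-counting argument once for the whole matrix map $\boldsymbol{A}\mapsto\boldsymbol{A}\boldsymbol{B}$ to avoid separately arguing independence of rows. Its kernel is $\{\boldsymbol{A}:\text{each row lies in the left null space of }\boldsymbol{B}\}$, of size $q^{(d-\ell)(\ell-m)}$, and its image is everything because $\boldsymbol{B}$ has a right inverse.
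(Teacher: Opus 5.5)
Your proof is correct, but it gets the single-row claim by a different mechanism from the paper's. The paper also reduces to one row. It then writes $\mathbf{c}_i=[\mathbf{a}_i]_{1:n-\ell}[\boldsymbol{B}]_{1:n-\ell,:}+[\mathbf{a}_i]_{n-\ell+1:n-m}[\boldsymbol{B}]_{n-\ell+1:n-m,:}$ and uses the Vandermonde structure to say the top $(n-\ell)\times(n-\ell)$ block of $\boldsymbol{B}$ is invertible, so the first summand is uniform. It finishes by noting that adding the second summand preserves uniformity; this relies on the second summand depending on disjoint, hence independent, coordinates of $\mathbf{a}_i$, which the paper leaves implicit.

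You instead push the uniform law through the linear map $\boldsymbol{a}\mapsto\boldsymbol{a}^T\boldsymbol{B}$. Its fibers are cosets of the kernel, so the output is uniform on the row space of $\boldsymbol{B}$, and that row space is all of $\mathbb{F}_q^{\,n-\ell}$ exactly when $\mathrm{rank}\,\boldsymbol{B}=n-\ell$. This has several advantages:
\begin{itemize}
\item It uses only the stated full-column-rank hypothesis.
\item It never needs to single out an invertible block or invoke the Vandermonde structure.
\item It shows that $\ell\ge m$ is just the dimensional precondition for that hypothesis.
\item In the whole-matrix version, it gives the exact probability $q^{-(d-\ell)(n-\ell)}$ without a separate independence-of-rows step.
\end{itemize}
The paper's version is more hands-on, but it depends on that particular block being invertible and on the implicit independence claim.

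Two small slips, neither of which affects the argument. First, surjectivity of $\boldsymbol{A}\mapsto\boldsymbol{A}\boldsymbol{B}$ comes from a \emph{left} inverse, not a right inverse. Full column rank gives $\boldsymbol{L}$ with $\boldsymbol{L}\boldsymbol{B}=\boldsymbol{I}_{n-\ell}$, and then $\boldsymbol{A}=\boldsymbol{C}\boldsymbol{L}$ maps to $\boldsymbol{C}$; a right inverse of $\boldsymbol{B}$ exists only when $\ell=m$.

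Second, your aside that any $n-\ell$ rows of $\boldsymbol{B}$ form a square Vandermonde matrix holds only if rows are indexed by evaluation points. Under the paper's convention, columns are Vandermonde vectors $(1,\alpha,\dots,\alpha^{n-m-1})^T$, so only the top $n-\ell$ rows are guaranteed to give an invertible block. Other row subsets can be singular, e.g.\ the rows for powers $0$ and $2$ with $\alpha\in\{1,-1\}$ in odd characteristic. Since full column rank is assumed outright, you can simply drop that aside.
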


\begin{proof}
Let $\mathbf{a}_i$ denote the $i$-th row of $\boldsymbol{A}$, for $1 \le i \le d-\ell$.
Since the entries of $\boldsymbol{A}$ are independently and uniformly distributed over $\mathbb{F}_q$, each row $\mathbf{a}_i$ is uniformly distributed over $\mathbb{F}_q^{1\times (n-m)}$, and the rows $\mathbf{a}_1,\dots,\mathbf{a}_{d-\ell}$ are mutually independent.

Let $\mathbf{c}_i$ denote the $i$-th row of the product matrix $\boldsymbol{C}=\boldsymbol{A}\boldsymbol{B}$. By matrix multiplication,
\[
\mathbf{c}_i = \mathbf{a}_i \boldsymbol{B}.
\]
Since each $\mathbf{c}_i$ depends only on the corresponding row $\mathbf{a}_i$ and the deterministic matrix $\boldsymbol{B}$, the rows $\mathbf{c}_1,\dots,\mathbf{c}_{d-\ell}$ remain mutually independent. Therefore, to show that $\boldsymbol{C}$ is uniformly distributed over $\mathbb{F}_q^{(d-\ell)\times (n-\ell)}$, it suffices to prove that, for any fixed $i\in[d-\ell]$, the row vector $\mathbf{c}_i$ is uniformly distributed over $\mathbb{F}_q^{1\times (n-\ell)}$.

We can rewrite $\mathbf{c}_i$ as
\[
\mathbf{c}_i
=
[\mathbf{a}_i]_{1,1:n-\ell}\,[\boldsymbol{B}]_{1:n-\ell,:}
+
[\mathbf{a}_i]_{1,n-\ell+1:n-m}\,[\boldsymbol{B}]_{n-\ell+1:n-m,:}.
\]
The subvector $[\mathbf{a}_i]_{1,1:n-\ell}$ is uniformly distributed over $\mathbb{F}_q^{1\times (n-\ell)}$. Since
$[\boldsymbol{B}]_{1:n-\ell,:}\in\mathbb{F}_q^{(n-\ell)\times (n-\ell)}$
is a square submatrix of a Vandermonde matrix, and is therefore invertible,
\begin{equation}\label{sumterm}
    [\mathbf{a}_i]_{1,1:n-\ell}[\boldsymbol{B}]_{1:n-\ell,:},
\end{equation}
is also uniformly distributed over $\mathbb{F}_q^{1\times (n-\ell)}$.
Since \eqref{sumterm} is uniformly distributed over $\mathbb{F}_q^{1\times (n-\ell)}$, the vector $\mathbf{c}_i$ is also uniformly distributed over $\mathbb{F}_q^{1\times (n-\ell)}$, because adding another term does not alter uniformity.

Since this holds for all $i \in [d-\ell]$ and the rows $\mathbf{c}_i$ are mutually independent, the entire matrix $\boldsymbol C = \boldsymbol A \boldsymbol B$ is uniformly distributed over $\mathbb{F}_q^{(d-\ell)\times (n-\ell)}$.

\end{proof}

\section{$d$-vertex connected networks}\label{FCsec}
In this section, we address network instances in which all nodes are $d$-vertex connected from the source(s). We obtain a number of results.
First, we show that when an eavesdropper can observe any set of $\ell<d$ nodes, there exist network instances for which the maximum key rate is $d-\ell$.
This establishes that the loss of $\ell$ units of rate due to adversarial observation is, in general, unavoidable even under strong connectivity assumptions.
Next, we show that this upper bound is tight. In particular, for any single-source instance $\mathcal{I}$ of the secure multiple key-cast problem in which all nodes are $d$-vertex connected from the source, there exists a coding scheme that achieves a key rate of $d-\ell$ while providing perfect secrecy against any eavesdropper observing up to $\ell$ nodes. 
We then extend this result to the multiple-source setting. For instances with a set of sources $\mathcal{S}=\{s_1,\dots,s_{|\mathcal{S}|}\}$, we show that if every node in the network is $d$-vertex connected from each source $s \in \mathcal{S}$, a key rate of $\frac{(d-\ell)(|\mathcal{S}|-\ell)}{|\mathcal{S}|}$ is achievable with perfect secrecy against any eavesdropper observing up to $\ell$ non-source nodes and up to $x$ source nodes, provided that $d>\ell$ and $|\mathcal{S}|>x$. The proof follows from a constructive coding scheme similar to the single-source case.

\begin{lemma}[Converse on secure key-cast rate]\label{L:converse}
    There exists a network instance $\mathcal{I}=\bigl(\mathcal{V}, \mathcal{E}, \mathcal{B}, \mathcal{S}=\{s\}, \mathcal{T}=\{t\}\bigr)$ with 
    \(
    \mathcal{B}=
    \bigl\{
    \beta \subseteq \mathcal{V} : \beta \cap \{s\}=\varnothing,\ |\beta|\le \ell
    \bigr\},\)
such that every node is $d$-vertex connected from the source $s$, for which the optimal (i.e., maximum achievable) secure key rate between $s$ and $t$ is at most $d-\ell$.
\end{lemma}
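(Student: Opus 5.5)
We prove the lemma by exhibiting an explicit network and bounding its rate with a cut-set argument that accounts for the eavesdropped nodes. Take $\mathcal{V}=\{s,u_1,\dots,u_d,t\}$. Put a single unit-capacity edge $(s,u_i)$ and a single edge $(u_i,t)$ for each $i\in[d]$, so that $t$ is reached through $d$ internally disjoint length-two paths. This graph is a DAG, and $s$ has no incoming edges while $t$ has no outgoing edges.

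We first check $d$-vertex connectivity in the sense of the definition. For $t$ there are $d$ vertex-disjoint paths, and removing any fewer than $d$ intermediate nodes leaves a path intact. Each $u_i$ is a direct neighbor of $s$. The definition as stated asks for $d-c$ parallel edges when $c$ disjoint paths are used, so we may need to give each $u_i$ $d$ parallel edges from $s$. If so, we keep a single edge from each $u_i$ to $t$. This leaves the bottleneck into $t$ unchanged at $d$ edges, and we adjust the construction in this way if necessary.

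Now fix any key-code of blocklength $n$, and let $K$ be the key, with $H(K\mid X_t)=0$. The information at $t$ is $X_t=(X_{e_1}^n,\dots,X_{e_d}^n)$, where $e_i=(u_i,t)$. Choose $\beta=\{u_1,\dots,u_\ell\}$, which lies in $\mathcal{B}$ because $\ell<d$ and $s\notin\beta$. The eavesdropper observing $\beta$ sees the incoming information of $u_1,\dots,u_\ell$. Since $X_{e_i}^n$ is a function of $X_{u_i}$, the eavesdropper can compute $Y=(X_{e_1}^n,\dots,X_{e_\ell}^n)$. Secrecy then gives $I(K;Y)=0$.

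The chain of inequalities is
\[
H(K)=H(K\mid Y)=H(K\mid Y)-H(K\mid X_t)\le H(X_{e_{\ell+1}}^n,\dots,X_{e_d}^n)\le (d-\ell)n,
\]
taking entropies in units of $\log q$. The middle step holds because $X_t=(Y,X_{e_{\ell+1}}^n,\dots,X_{e_d}^n)$, so $H(K\mid Y)-H(K\mid Y,Z)=I(K;Z\mid Y)\le H(Z)$ with $Z=(X_{e_{\ell+1}}^n,\dots,X_{e_d}^n)$. Dividing by $n$ gives $R\le d-\ell$.

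The main obstacle is making the example literally satisfy the paper's somewhat unusual definition of $d$-vertex connectivity, in particular the parallel-edge clause for the neighbors of $s$. The remedy is the one described above: add parallel edges from $s$ to the intermediate nodes while keeping the $d$ single edges into $t$. This preserves the cut argument, because that argument only used the observability of $u_1,\dots,u_\ell$ and the edges $(u_i,t)$.
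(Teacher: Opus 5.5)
Your proof is correct and follows the paper's: the paper uses exactly this network, with $d$ parallel edges from $s$ to each intermediate node from the start (your ``remedy'' is the paper's construction), and the same cut bound $H(K)\le H\bigl((Y_i^n)_{v_i\notin\beta}\bigr)\le (d-\ell)n$. The only cosmetic difference is that you condition on the $\ell$ computable edge outputs $Y$ via data processing, whereas the paper conditions on the full observation $\{X_v:v\in\beta\}$.
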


\begin{proof}
Consider the instance $\mathcal{I}=(\mathcal{V},\mathcal{E},\mathcal{B},\mathcal{S},\mathcal{T})$ illustrated in Fig.~\ref{fig:placeholder}. Let
\[
\mathcal{V}=\{s,t,v_1,\dots,v_d\},\qquad \mathcal{S}=\{s\},\qquad \mathcal{T}=\{t\},
\]
and
\begin{align*}
    \mathcal{E}
=\big\{(s,v_i)^{(j)} : i\in\{1,\dots,d\}, j\in\{1,\dots,d\}\big\} \\
\cup
\{(v_i,t): i\in\{1,\dots,d\}\},
\end{align*}
where \((s,v_i)^{(j)}\) denotes the \(j\)-th parallel edge from \(s\) to \(v_i\), and each edge \((v_i,t)\) has unit capacity (one symbol from \(\mathbb{F}_q\) per channel use).

The $d$ paths $s\to v_i\to t$ are vertex-disjoint; hence $t$ is $d$-vertex-connected to $s$. Moreover, each node $v_i$ is connected to $s$ by $d$ parallel edges, implying that $v_i$ is $d$-vertex-connected to $s$.

Fix blocklength $n$ and any key-code producing a key $K$ such that
\[
H(K\mid X_t)=0,\qquad \forall\,\beta\in\mathcal{B},\ I\!\left(K;\{X_v:v\in\beta\}\right)=0,
\]
where $X_v$ is the information received by node $v$ on its incoming edges.

Fix any $\beta\in\mathcal{B}$, let $\bar\beta=\{v_1,\dots,v_d\}\setminus\beta$, so $|\bar\beta|=d-\ell$.
For each $i$, let $Y_i^n$ denote the length-$n$ message sent on edge $(v_i,t)$. Then
\[
X_t=(Y_1^n,\dots,Y_d^n).
\]
Moreover, for each $v_i\in\beta$, the eavesdropper observes node $v_i$ and hence can reconstruct all transmissions on edges incident to $v_i$, in particular $Y_i^n$. Therefore, conditioned on the eavesdropper’s observation $\{X_v:v\in\beta\}$, the tuple $(Y_i^n:v_i\in\beta)$ is known, and the only remaining uncertainty in $X_t$ comes from $(Y_i^n:v_i\in\bar\beta)$.

Using secrecy and correctness,
\begin{align*}
H(K)
&= H\!\left(K \mid \{X_v:v\in\beta\}\right) \\
&\le H\!\left(X_t \mid \{X_v:v\in\beta\}\right) \\
&= H\!\left((Y_i^n:v_i\in\bar\beta)\mid \{X_v:v\in\beta\}\right) \\
&\le H\!\left((Y_i^n:v_i\in\bar\beta)\right) \\
&\le \sum_{v_i\in\bar\beta} H(Y_i^n)
\ \le\ (d-\ell)n,
\end{align*}
where the last inequality uses that each edge $(v_i,t)$ has unit capacity, so $Y_i^n$ contains at most $n$ $\mathbb{F}_q$-symbols.

Thus $H(K)\le (d-\ell)n$, and the key rate $R=H(K)/n$ satisfies $R\le d-\ell$. Since $\beta$ was arbitrary in $\mathcal{B}$, the optimal secure key rate is at most $d-\ell$.
\end{proof}

\begin{figure}
    \centering
    \includegraphics[width=0.4\linewidth]{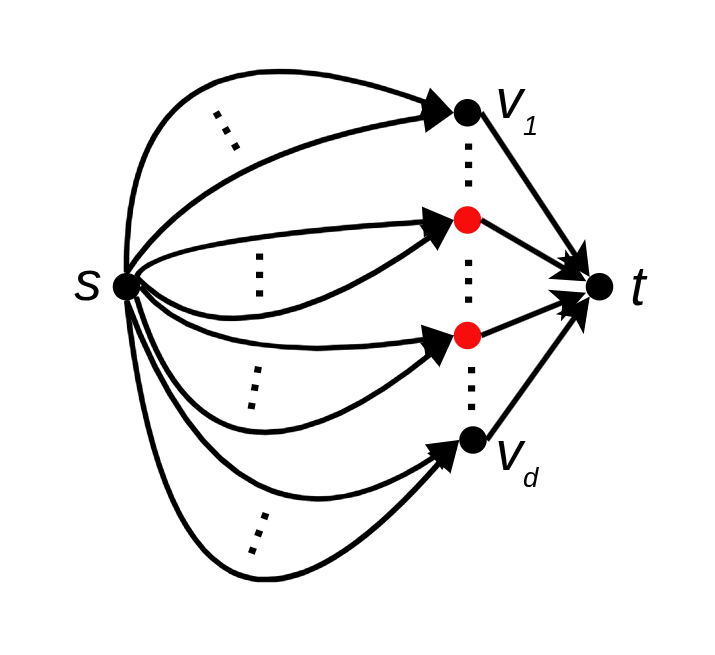}
    \vspace{-5mm}
     \caption{Example network with a source, a terminal, and $d$ intermediate nodes. An eavesdropper observes any $\ell$ intermediate nodes (an example is shown in red), leaving $d-\ell$ secure paths that can be used for secure key-cast. 
     }
    \label{fig:placeholder}
\end{figure}

\begin{theorem}[{Achievability: single-source secure multiple key-cast}]
\label{T1}
    For an instance
    $\mathcal{I}=\left(\mathcal{V},\mathcal{E},\mathcal{B},\mathcal{S}=\{s\},\mathcal{T}=\{\mathcal{T}_i\}_{i=1}^m\right)$
    of the Secure Multiple Key-Cast problem with
    \(\mathcal{B}=
    \bigl\{
    \beta \subseteq \mathcal{V}\setminus \{s\} :  |\beta|\le \ell
    \bigr\},\)
 if all nodes in the network are \(d\)-vertex connected from the source \(s\), then there exists a coding scheme that achieves key rate \(d-\ell\).
\end{theorem}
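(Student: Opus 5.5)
We follow the secure-regenerating-code approach of \cite{shah2011information,shah2015distributed}, in the spirit of the $\ell=1$ scheme of \cite{langberg2022network,10904071}. The source draws a random symmetric matrix $\boldsymbol{M}\in\mathbb{F}_q^{d\times d}$ with independent uniform upper-triangular entries, and we assign to every node $v_i$ a distinct Vandermonde vector $\boldsymbol{v}_i$ of size $d$. We process the nodes in topological order. Each non-source node $v_i$ should end up holding the length-$d$ vector $\boldsymbol{M}\boldsymbol{v}_i$, and terminals in $\mathcal{T}_j$ compute their key from it. To obtain distinct keys per terminal set, we use the fact that the terminals of set $\mathcal{T}_j$ can share a common evaluation point: we assign all terminals in $\mathcal{T}_j$ the same vector $\boldsymbol{u}_j$, which is distinct across $j$ and distinct from all non-terminal vectors. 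The key is then $K_j=[\boldsymbol{M}\boldsymbol{u}_j]_{1:d-\ell}$. This has entropy $d-\ell$ by the computation in Lemma~\ref{main_ind}, giving rate $d-\ell$ per use of $\boldsymbol{M}$. Terminals have no outgoing edges, so sharing a vector among them never creates a conflict.

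\textbf{Dissemination step.} A node $v$ must obtain $\boldsymbol{M}\boldsymbol{v}$ without revealing more than the eavesdropper could otherwise infer. If $v$ is $d$-vertex connected from $s$, take $d$ vertex-disjoint paths (or direct edges) and choose $d$ helper nodes $h_1,\dots,h_d$ that already hold $\boldsymbol{M}\boldsymbol{h}_k$; the source itself counts as a helper on direct edges. Each helper sends the scalar $\boldsymbol{h}_k^T\boldsymbol{M}\boldsymbol{v}=(\boldsymbol{M}\boldsymbol{h}_k)^T\boldsymbol{v}$, and by symmetry of $\boldsymbol{M}$ this equals $\boldsymbol{v}^T(\boldsymbol{M}\boldsymbol{h}_k)$. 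Node $v$ collects $\boldsymbol{H}^T\boldsymbol{M}\boldsymbol{v}$ with $\boldsymbol{H}$ an invertible Vandermonde matrix and solves for $\boldsymbol{M}\boldsymbol{v}$.

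A cleaner alternative avoids locating helpers. The source computes $\boldsymbol{M}\boldsymbol{v}$ and sends it to $v$ by the Shamir-type scheme of Lemma~\ref{d-vertex_c}. That lemma only transmits $d-\ell$ symbols securely, however, so instead we send along the $d$ vertex-disjoint paths the $d$ values $\boldsymbol{v}_k^T\boldsymbol{M}\boldsymbol{v}$, one per path, for fixed evaluation points $\boldsymbol{v}_k$. Every relayed symbol then has the form $\boldsymbol{w}^T\boldsymbol{M}\boldsymbol{v}$, which is a function of $\boldsymbol{M}\boldsymbol{w}$ and of $\boldsymbol{M}\boldsymbol{v}$. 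I would pick the intermediate nodes' own vectors so that everything a node sees is a function of $\boldsymbol{M}\boldsymbol{v}_x$ for the node $x$ itself and the $\boldsymbol{M}\boldsymbol{w}$ it relays. This is the point I expect to need the most care: the information passing through a relay must be expressible as $\boldsymbol{M}$ times vectors attributable to that relay, at most one ``new'' vector per relay and path.

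\textbf{Secrecy.} Fix $\beta$ with $|\beta|\le\ell$ and a terminal set $\mathcal{T}_j$ with $\beta\cap\mathcal{T}_j=\varnothing$. By the design above, everything observed by $\beta$ is a function of $\boldsymbol{M}\boldsymbol{V}_\epsilon$, where $\boldsymbol{V}_\epsilon$ collects at most $\ell$ Vandermonde vectors, one per observed node. Since $\boldsymbol{u}_j$ is distinct from these vectors and any $\ell+1\le d$ distinct Vandermonde vectors are independent, $\boldsymbol{u}_j\notin\operatorname{col}(\boldsymbol{V}_\epsilon)$. Lemma~\ref{main_ind} then gives that $K_j=[\boldsymbol{M}\boldsymbol{u}_j]_{1:d-\ell}$ is independent of $\boldsymbol{M}\boldsymbol{V}_\epsilon$, and hence $I(K_j;\{X_v:v\in\beta\})=0$.

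\textbf{Main obstacle.} The hard step is the bookkeeping showing that each observed node's view is captured by a single vector $\boldsymbol{M}\boldsymbol{v}_x$, so that $\ell$ nodes leak at most $\ell$ such vectors. The incoming symbols $\boldsymbol{h}_k^T\boldsymbol{M}\boldsymbol{v}$ seen at $v$ are all determined by $\boldsymbol{M}\boldsymbol{v}$, and relayed symbols are determined by the relay's own $\boldsymbol{M}\boldsymbol{v}_x$ only if the routing is through helper-style transmissions. I would therefore enforce that every node forwards only inner products computed from its own stored vector, and never forwards raw symbols. This is achieved by letting each node $v$ obtain $\boldsymbol{M}\boldsymbol{v}$ from its $d$ parents via regenerating-code repair. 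It requires each node to have $d$ parents with distinct vectors, which $d$-vertex connectivity provides, with parallel source edges standing in for distinct virtual source vectors that are never observable since $s\notin\beta$.
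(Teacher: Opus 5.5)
Your proposal is correct. Once you settle on the regenerating-code repair in your last paragraph, it is essentially the paper's proof: a symmetric $\boldsymbol{M}$, each node recovering $\boldsymbol{M}\boldsymbol{v}_j$ in topological order from $c_j$ source symbols (built with fresh Vandermonde vectors) plus one inner product $\boldsymbol{v}_j^T\boldsymbol{M}\boldsymbol{v}_{j_p}$ per non-source parent, terminals of $\mathcal{T}_i$ sharing one vector, key $[\boldsymbol{M}\boldsymbol{v}_{\mathcal{T}_i}]_{1:d-\ell}$, and secrecy from Lemma~\ref{main_ind}. The path-relaying ``alternative'' and the bookkeeping worry are unnecessary detours: with parent-to-child inner products, each observed node's view $\boldsymbol{v}_x^T\boldsymbol{M}\boldsymbol{V}_x$ is automatically a function of its single share $\boldsymbol{M}\boldsymbol{v}_x$.
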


\begin{proof}
    Consider a network where all nodes are $d$-vertex connected from the source. Each non-terminal node $v_i \in \mathcal{V}$ is assigned a Vandermonde vector indexed by $i$, and for each terminal set  $\mathcal{T}_i \subseteq \mathcal{T}$, all terminal nodes $v_t \in \mathcal{T}_i$ share the same Vandermonde vector, denoted by $\boldsymbol{v}_{\mathcal{T}_i}$.
    
    Define the key for terminal set $\mathcal{T}_i$ as
    \begin{equation}\label{key-defT1}
        \boldsymbol{k}_{\mathcal{T}_i} = [\boldsymbol{M} \boldsymbol{v}_{\mathcal{T}_i} ]_{1:d-\ell},
    \end{equation}
    where $[\boldsymbol{M} \boldsymbol{v}_{\mathcal{T}_i}]_{1:d-\ell}$ is the first $d-\ell$ entries of $\boldsymbol{M} \boldsymbol{v}_{\mathcal{T}_i}$, and $\boldsymbol{M} \in \mathbb{F}_q^{d \times d}$ is an independent symmetric random matrix of size $d \times d$, with each entry in its upper triangular part is independently and uniformly chosen from $\mathbb{F}_q$.
    
    We start by defining our encoding scheme.
    
    \subsection*{Encoding:}

    \paragraph*{Step 1} For each node $j \in \mathcal{V}\setminus\{s\}$, let $c_j \le d$ be the number of edges from $s$ to $j$. The source node sends $c_j$ symbols using an independent Vandermonde matrix $\boldsymbol{V}_{s\to j}\in\mathbb{F}_q^{d\times c_j}$, constructed from distinct Vandermonde vectors unused elsewhere in the network, as the row vector:
    \begin{equation}
        \boldsymbol{\tau}_{s \to j} = \boldsymbol{v}_j^T \boldsymbol{M} \boldsymbol{V}_{s\to j}.
    \end{equation}
    Note that the first non-source node in the topological order have no non-source parents, meaning it must have $c_j = d$ edges from the source to satisfy $d$-vertex connectivity. For such node $j$, $\boldsymbol{V}_{s\to j}$ is a $d \times d$ invertible matrix, allowing it to directly recover 
    $$\boldsymbol{s}_j = (\boldsymbol{\tau}_{s \to j} \boldsymbol{V}_{s\to j}^{-1})^T = \boldsymbol{M}\boldsymbol{v}_j.$$
    We refer to $\boldsymbol{s}_j$ as the \emph{share} of node $j$.

\paragraph*{Step 2} 
    By induction, let \( j \in \mathcal{V} \setminus \{s\} \) denote the next node in the topological order such that every preceding node \( j' \) has already obtained its share, given by
    \[
    \boldsymbol{s}_{j'} = \boldsymbol{M} \boldsymbol{v}_{j'}.
    \]
    Node $j$ receives the remaining $d-c_j$ symbols from its non-source parents $j_p \in \mathcal{P}(j)$, where each parent sends:
    $$
    \tau_{j_p \to j} = \boldsymbol{v}_{j}^T \boldsymbol{M} \boldsymbol{v}_{{j}_p}. 
    $$
    Node $j$ stacks all received symbols to form the $1 \times d$ row vector:
    $$
    \boldsymbol{T}_{j} = \boldsymbol{v}_{j}^T \boldsymbol{M} \boldsymbol{V}_{j},
    $$
    where $\boldsymbol{V}_j \in \mathbb{F}_q^{d \times d}$ is the matrix whose first $c_j$ columns are $\boldsymbol{V}_{s\to j}$ and whose remaining columns are $\{\boldsymbol{v}_{j_p}\}_{j_p \in \mathcal{P}(j)}$.
    Due to $d$-vertex connectivity and distinct Vandermonde indices, the columns of $\boldsymbol{V}_{j}$ are linearly independent. Since $\boldsymbol{V}_{j}$ is invertible, node $j$ can recover its share as
    \begin{equation} \label{eq_case1}
        \boldsymbol{s}_{j} = (\boldsymbol{T}_{j}\boldsymbol{V}_{j}^{-1})^T = (\boldsymbol{v}_{j}^T \boldsymbol{M} \boldsymbol{V}_{j} \boldsymbol{V}_{j}^{-1})^T = \boldsymbol{M}\boldsymbol{v}_{j} .
    \end{equation}

\paragraph*{Step 3} 
    Let $t \in {\mathcal{T}_i}$ be a terminal node. Following the exact same procedure as Step 1 and Step 2, $t$ collects $c_t$ symbols from the source and $d-c_t$ symbols from its non-source parents $t_p \in \mathcal{P}(t)$, which send
    \[
    \tau_{t_p \to t} = \boldsymbol{v}_{\mathcal{T}_i}^T \boldsymbol{M} \boldsymbol{v}_{t_p}.
    \]
    Using all received symbols, the terminal node $t$ reconstructs the $1 \times d$ row vector 
    \begin{equation}
        \boldsymbol{T}_{t} = \boldsymbol{v}_{\mathcal{T}_i}^T \boldsymbol{M} \boldsymbol{V}_{t},
    \end{equation}
    where $\boldsymbol{V}_t$ is the invertible $d \times d$ matrix whose columns are formed by $\boldsymbol{V}_{s\to t}$ and $\{\boldsymbol{v}_{t_p}\}_{t_p \in \mathcal{P}(t)}$. Terminal $t$ recovers its share as
    \begin{equation} 
        \boldsymbol{s}_{t} = (\boldsymbol{T}_{t} \boldsymbol{V}_{t} ^{-1})^T = (\boldsymbol{v}_{\mathcal{T}_i}^T \boldsymbol{M} \boldsymbol{V}_{t}  \boldsymbol{V}_{t} ^{-1})^T = \boldsymbol{M} \boldsymbol{v}_{\mathcal{T}_i}.
    \end{equation}

    \subsection*{Decoding:}
    
    \paragraph*{Step 4} Using the recovered vector $\boldsymbol{s}_t$, the terminal node $t \in \mathcal{T}_i$ recovers the key as  
    \begin{equation} \label{eq_ket}
        \boldsymbol{k}_{\mathcal{T}_i} 
        = [\boldsymbol{s}_t]_{1:d-\ell} = [\boldsymbol{M} \boldsymbol{v}_{\mathcal{T}_i}]_{1:d-\ell},
    \end{equation}
    which is the defined key in \eqref{key-defT1} for the terminal set $\mathcal{T}_i \subseteq \mathcal{T}$.

    Algorithm~\ref{alg:keycast_compact} summarizes the resulting key-code for securely delivering $\boldsymbol{k}_{\mathcal{T}_i}$ to every terminal node $t \in \mathcal{T}_i$.   
    
    \subsection*{Key-Rate:}
    Each terminal node $t \in \mathcal{T}_i \subseteq \mathcal{T}$ recovers the key $\boldsymbol{k}_{\mathcal{T}_i}$ consists of $d-\ell$ components of $\boldsymbol{M} \boldsymbol{v}_{\mathcal{T}_i}$. Since the proposed scheme delivers $d-\ell$ key symbols while requiring at most a single transmission on each edge in $\mathcal{G}$, the key rate is given by
    $$
    R \;=\; d-\ell .
    $$
    
    \subsection*{Security Proof:}
    Every node $j \in \mathcal{V}$ in $\mathcal{G}$ has access to its assigned share
    \[
    \boldsymbol{s}_j = \boldsymbol{M}\boldsymbol{v}_j.
    \]
    Note that node $j$ actually holds gathered symbols in $\boldsymbol{T}_j$, which are computable from $\boldsymbol{s}_j$, and thus, equivalent to holding $\boldsymbol{s}_j$.
    Likewise, each terminal node $t \in \mathcal{T}_i$ obtains
    \[
    \boldsymbol{s}_{t} = \boldsymbol{M}\boldsymbol{v}_{\mathcal{T}_i}.
    \]
    
    We now show that for any $\beta \in \mathcal{B}$, the eavesdropper’s observation reveals no information about the keys not intended for the eavesdropper. Specifically, for any terminal set $\mathcal{T}_i \subseteq \mathcal{T}$, if $\beta \cap \mathcal{T}_i = \emptyset$, then we have
    \begin{equation}\label{mu-eq8}
        \begin{aligned}
            I(\boldsymbol{k}_{\mathcal{T}_i};\, &\{X_v:v\in\beta\}) = I\!\left([\boldsymbol{M}\boldsymbol{v}_{\mathcal{T}_i}]_{1:d-\ell};\, \boldsymbol{M}\boldsymbol{V}_\epsilon\right) \\
            &= H([\boldsymbol{M}\boldsymbol{v}_{\mathcal{T}_i}]_{1:d-\ell}) - H\!\left([\boldsymbol{M}\boldsymbol{v}_{\mathcal{T}_i}]_{1:d-\ell}\,\middle|\,\boldsymbol{M}\boldsymbol{V}_\epsilon\right) \\
            &\overset{(a)}{=} H([\boldsymbol{M}\boldsymbol{v}_{\mathcal{T}_i}]_{1:d-\ell}) - H([\boldsymbol{M}\boldsymbol{v}_{\mathcal{T}_i}]_{1:d-\ell}) \\
            &= 0
        \end{aligned}
    \end{equation}
    where $\boldsymbol{V}_\epsilon \in \mathbb{F}_q^{d \times \ell}$ is the Vandermonde matrix formed by the eavesdropped nodes, with columns $\{ \boldsymbol{v}_{j} \}_{j \in \beta}$, and (a) follows from Lemma~\ref{main_ind} and the fact that the vector $\boldsymbol{v}_{\mathcal{T}_i}$ is independent of $\boldsymbol{V}_\epsilon$. Hence, the eavesdropper obtains no information about the keys $\boldsymbol{k}_{\mathcal{T}_i}$, as desired.
\end{proof}

\begin{algorithm}[t]
\caption{Multiple Key-Cast in $d$-connected networks}
\label{alg:keycast_compact}
\begin{algorithmic}[1]
\State Source node $s$ generates a symmetric random matrix $\boldsymbol{M}\in\mathbb{F}_q^{d\times d}$, where each entry in the upper triangular part is chosen independently and uniformly from $\mathbb{F}_q$.
For $i \in \{1,\dots, m\}$, define keys  $\boldsymbol{k}_{\mathcal{T}_i}= [\boldsymbol{M}\boldsymbol{v}_{\mathcal{T}_i}]_{1:d-\ell}.$

\ForAll{$j\in\mathcal{N}(s)$}
    \State $s$ sends $\boldsymbol{v}_j^T \boldsymbol{M}\boldsymbol{V}_{s\to j}$ to node $j$
\EndFor

\ForAll{nodes $j\in\mathcal{V}\setminus\{s\}$ in topological order}
    \State Each non-source parent node $j_p\in\mathcal{P}(j)$ sends $\boldsymbol{v}_j^T\boldsymbol{M}\boldsymbol{v}_{j_p}$ to  $j$
    \State Node $j$ concatenates received symbols to form $\boldsymbol{T}_j=\boldsymbol{v}_j^T\boldsymbol{M}\boldsymbol{V}_j$ and recovers share
    \[
    \boldsymbol{s}_j
    =
    \bigl(\boldsymbol{T}_j\boldsymbol{V}_j^{-1}\bigr)^T
    =
    \bigl(\boldsymbol{v}_j^T\boldsymbol{M}\boldsymbol{V}_j\boldsymbol{V}_j^{-1}\bigr)^T
    =
    \boldsymbol{M}\boldsymbol{v}_j
    \]
\EndFor

\ForAll{$t\in\mathcal{T}_i$}
    \State Terminal $t$ outputs
    \(
    \boldsymbol{k}_{\mathcal{T}_i}
    =
    [\boldsymbol{s}_t]_{1:d-\ell}
    =
    [\boldsymbol{M}\boldsymbol{v}_{\mathcal{T}_i}]_{1:d-\ell}
    \)
\EndFor
\end{algorithmic}
\end{algorithm}

{\begin{remark}\label{R0}
    The construction in Algorithm~\ref{alg:keycast_compact} used in the proof of Theorem~\ref{T1} does not require the full network $\mathcal{G}$ itself to have every node $d$-vertex connected from the source. For a general network, it suffices that there exists a subgraph $\mathcal{G}' \subseteq \mathcal{G}$, containing the source, all terminal nodes, and possibly only a subset of the intermediate nodes, within which every terminal node and every intermediate node is $d$-vertex connected from the source. Whenever such a subgraph $\mathcal{G}'$ exists, the coding scheme summarized in Algorithm 1 can be run directly on $\mathcal{G}'$, using only its nodes and edges, and achieves a secure key rate of $d-\ell$. This observation extends the applicability of Theorem~\ref{T1} beyond networks that are globally $d$-vertex connected, to any network that merely admits such a $d$-vertex-connected subgraph.
\end{remark}}

\begin{corollary}[Achievability: multiple-source secure multiple key-cast]\label{C1}
    Consider an instance $\mathcal{I}=(\mathcal{V},\mathcal{E},\mathcal{B},\mathcal{S},\mathcal{T}=\{\mathcal{T}_i\}_{i=1}^m)$ of the secure multiple key-cast problem with a set of sources $\mathcal{S}=\{s_1,\dots,s_{|\mathcal{S}|}\}$.
    For any integer $x < |\mathcal{S}|$, let
    \(
    \mathcal{B}=
    \bigl\{
    \beta \subseteq \mathcal{V}: |\beta \setminus \mathcal{S}|\le \ell,\ |\beta \cap \mathcal{S}| \le x
    \bigr\}.
    \)
    If every node in the network is $d$-vertex connected from each source $s \in \mathcal{S}$, then there exists a coding scheme that achieves a key rate of
    \(
    \frac{(d-\ell)(|\mathcal{S}|-x)}{|\mathcal{S}|}.
    \)
    In particular, $x$ denotes the maximum number of source nodes that can be included in the eavesdropped set, and $x=0$ prohibits eavesdropping on any source node.
\end{corollary}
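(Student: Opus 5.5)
The plan is to run the single-source scheme of Theorem~\ref{T1} independently from every source and combine the resulting keys with a secret-sharing / MDS-type linear combination across sources. We use blocklength $n=|\mathcal{S}|$. Each source $s_r$, $r\in[|\mathcal{S}|]$, draws its own independent symmetric random matrix $\boldsymbol{M}_r\in\mathbb{F}_q^{d\times d}$ in every one of the $n$ time slots. Since every node is $d$-vertex connected from $s_r$, Algorithm~\ref{alg:keycast_compact} can be run on the subgraph reached by $s_r$. We run the $|\mathcal{S}|$ instances time-shared, each instance in its own slot, so no edge carries more than one symbol per slot. By Theorem~\ref{T1}, each terminal $t\in\mathcal{T}_i$ then recovers $\boldsymbol{k}^{(r)}_{\mathcal{T}_i}=[\boldsymbol{M}_r\boldsymbol{v}_{\mathcal{T}_i}]_{1:d-\ell}$ for every $r$. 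Stack these into a $(d-\ell)\times|\mathcal{S}|$ matrix $\boldsymbol{K}^{\mathrm{raw}}_i=[\boldsymbol{k}^{(1)}_{\mathcal{T}_i},\dots,\boldsymbol{k}^{(|\mathcal{S}|)}_{\mathcal{T}_i}]$. The final key is
\[
\boldsymbol{k}_{\mathcal{T}_i}=\boldsymbol{K}^{\mathrm{raw}}_i\boldsymbol{B},
\]
where $\boldsymbol{B}\in\mathbb{F}_q^{|\mathcal{S}|\times(|\mathcal{S}|-x)}$ is chosen so that every $(|\mathcal{S}|-x)\times(|\mathcal{S}|-x)$ submatrix obtained by deleting any $x$ rows is invertible. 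A Vandermonde matrix with distinct evaluation points works, in the spirit of Lemma~\ref{KG_uniform}. The key then has $(d-\ell)(|\mathcal{S}|-x)$ symbols over $n=|\mathcal{S}|$ uses, which gives the claimed rate.

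For secrecy, fix $\beta\in\mathcal{B}$ with $\beta\cap\mathcal{T}_i=\emptyset$. Let $\mathcal{S}_\beta=\beta\cap\mathcal{S}$ with $|\mathcal{S}_\beta|\le x$, and let $\beta'=\beta\setminus\mathcal{S}$ with $|\beta'|\le\ell$. The eavesdropper's view consists of $\{\boldsymbol{M}_r: s_r\in\mathcal{S}_\beta\}$ together with $\{\boldsymbol{M}_r\boldsymbol{V}_\epsilon\}_{r}$, where $\boldsymbol{V}_\epsilon$ is formed by the Vandermonde vectors of $\beta'$. Because the matrices $\boldsymbol{M}_r$ are independent across $r$, Lemma~\ref{main_ind} applied to each unobserved source $r\notin\mathcal{S}_\beta$ shows that $\boldsymbol{k}^{(r)}_{\mathcal{T}_i}$ is uniform and independent of $\boldsymbol{M}_r\boldsymbol{V}_\epsilon$. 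Hence, conditioned on the whole view, the columns of $\boldsymbol{K}^{\mathrm{raw}}_i$ indexed by the unobserved sources are jointly uniform, and there are at least $|\mathcal{S}|-x$ of them. The columns indexed by observed sources are known constants.

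Next write
\[
\boldsymbol{k}_{\mathcal{T}_i}=\boldsymbol{K}^{\mathrm{raw}}_{i,\bar{\mathcal{S}}_\beta}\boldsymbol{B}_{\bar{\mathcal{S}}_\beta,:}+\text{(known term)}.
\]
If $|\mathcal{S}_\beta|=x$ exactly, then $\boldsymbol{B}_{\bar{\mathcal{S}}_\beta,:}$ is square and invertible. If $|\mathcal{S}_\beta|<x$, it has full column rank and contains an invertible square block, and the same row-by-row argument as in Lemma~\ref{KG_uniform} applies. In either case $\boldsymbol{k}_{\mathcal{T}_i}$ is uniform over $\mathbb{F}_q^{(d-\ell)\times(|\mathcal{S}|-x)}$ conditioned on the view, so $I(\boldsymbol{k}_{\mathcal{T}_i};\{X_v:v\in\beta\})=0$.

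The main obstacle is making sure the eavesdropper's view really reduces to $\{\boldsymbol{M}_r\boldsymbol{V}_\epsilon\}_r$ plus the full matrices of the observed sources. In particular:
\begin{itemize}
\item a non-source node may receive symbols from several sources, all of which are functions of its shares $\boldsymbol{M}_r\boldsymbol{v}_j$;
\item a source node carries no share for the other sources, since sources have no incoming edges.
\end{itemize}
Independence across sources of the joint view must also be checked. I would handle this by noting that the full view factors as a tuple of independent per-source functions, so conditional independence reduces to the per-source statement of Lemma~\ref{main_ind}. The security for $x=0$ then follows as a special case with $\boldsymbol{B}$ invertible.
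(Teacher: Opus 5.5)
Your proposal is correct and takes essentially the same route as the paper. It runs Algorithm~\ref{alg:keycast_compact} independently from each source, combines the per-source keys with a fixed Vandermonde matrix, and proves secrecy from Lemma~\ref{main_ind} for each unobserved source, cross-source independence, and the uniformity argument of Lemma~\ref{KG_uniform}; the paper splits secrecy into the cases $x=0$ and $x\neq 0$, whereas you condition on the whole view at once, and your requirement that every row-deleted square submatrix of $\boldsymbol{B}$ be invertible (which holds for $[\boldsymbol{B}]_{r,c}=\alpha_r^{c-1}$ with distinct $\alpha_r$) is exactly what the paper's final uniformity step needs.
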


\begin{proof}
Consider a network in which all nodes are $d$-vertex connected from all source nodes $s \in \mathcal{S}$. 
For each source $s_j$, independently generate a symmetric random matrix $\boldsymbol{M}^{(j)} \in \mathbb{F}_q^{d\times d}$ and execute the coding scheme of Algorithm~\ref{alg:keycast_compact} to securely disseminate 
$\boldsymbol{k}_{\mathcal{T}_i}^{(j)} = [\boldsymbol{M}^{(j)} \boldsymbol{v}_{\mathcal{T}_i}]_{1:d-\ell}$
to all terminals $t \in \mathcal{T}_i$, where the randomness $\{\boldsymbol{M}^{(j)}\}_{j=1}^{|\mathcal{S}|}$ is mutually independent.

Define the $(d-\ell) \times |\mathcal{S}|$ matrix
\[
\tilde{\boldsymbol{K}}_{\mathcal{T}_i}
\triangleq
\big[ \boldsymbol{k}_{\mathcal{T}_i}^{(1)}, \dots , \boldsymbol{k}_{\mathcal{T}_i}^{(|\mathcal{S}|)} \big].
\]
Then, $(d-\ell) \times (|\mathcal{S}| - x)$ key matrix for terminal set $\mathcal{T}_i$ is defined as
\begin{equation}\label{keyCor11}
\boldsymbol{K}_{\mathcal{T}_i}
= \tilde{\boldsymbol{K}}_{\mathcal{T}_i}\boldsymbol{G}.
\end{equation}
where $\boldsymbol{G}$ is a fixed $|\mathcal{S}| \times (|\mathcal{S}|-x)$ Vandermonde matrix.

Using the same encoding and decoding procedures established in Algorithm~\ref{alg:keycast_compact}, each terminal recovers all values $\{\boldsymbol{k}_{\mathcal{T}_i}^{(u)} : u \in \mathcal{S}\}$ and constructs its final key as \eqref{keyCor11}.

\subsection*{Key-Rate:}
Each terminal $t \in \mathcal{T}_i$ combines the keys received from all $|\mathcal{S}|$ source nodes. Since the eavesdropper can observe at most $x$ source nodes, at least $|\mathcal{S}|-x$ of these keys remain secure, each containing $d-\ell$ components.
Then, it can obtain $(d-\ell)(|\mathcal{S}|-x)$ key symbols by combining the $|\mathcal{S}|$ vectors of size $d-\ell$ that it receives from the $|\mathcal{S}|$ source nodes using the matrix $\boldsymbol{G}$.
Since the scheme delivers $(d-\ell)(|\mathcal{S}|-x)$ key symbols while requiring at most one transmission from each of the $|\mathcal{S}|$ sources over any edge of $\mathcal{G}$, the resulting key rate is
$$
R \;=\; \frac{(d-\ell)(|\mathcal{S}|-x)}{|\mathcal{S}|}.
$$

\subsection*{Security Proof:}
For any $\beta \in \mathcal{B}$, first, we assume that no source node is included in the eavesdropper's observation, i.e., $x = 0 $.
We then consider the more general case in which the eavesdropper's observation may include source nodes, i.e.,
$x \neq 0$.

\paragraph{No source node in $\beta \in \mathcal{B}$ $(x=0)$} If there exists no source node in the eavesdropper's observation, then for any $\beta \in \mathcal{B}$,
\begin{equation}\label{NSNob1}
    \begin{aligned}
        I(\boldsymbol{K}_{\mathcal{T}_i}; \{X_v:&v\in\beta\}) \overset{(a)}{\le} I\left( \big( \boldsymbol{k}_{\mathcal{T}_i}^{(1)}, \dots , \boldsymbol{k}_{\mathcal{T}_i}^{(|\mathcal{S}|)} \big); \{X_v:v\in\beta\}\right) \\
        &\overset{(b)}{=} \summ{j}{1}{|\mathcal{S}|} I\big(\boldsymbol{k}_{\mathcal{T}_i}^{(j)}; \{X_v:v\in\beta\}|\boldsymbol{k}_{\mathcal{T}_i}^{(1)}, \dots, \boldsymbol{k}_{\mathcal{T}_i}^{(j-1)}\big) \\
        &\overset{(c)}{=} \summ{j}{1}{|\mathcal{S}|} I\big(\boldsymbol{k}_{\mathcal{T}_i}^{(j)}; \big(\{X_v^{(j')}:v\in\beta\}: j' \in [|\mathcal{S}|]\big) \\
        &\qquad \qquad \qquad\qquad\qquad |\boldsymbol{k}_{\mathcal{T}_i}^{(1)}, \dots, \boldsymbol{k}_{\mathcal{T}_i}^{(j-1)}\big) \\
        & \overset{(d)}{=} \summ{j}{1}{n} I\big(\boldsymbol{k}_{\mathcal{T}_i}^{(j)}; \{X_v^{(j)}:v\in\beta\}\big) \\
        &\overset{(e)}{=} 0,
    \end{aligned}
\end{equation}
where (a) follows from \eqref{keyCor11} and the fact that $\boldsymbol{G}$ is a fixed matrix, 
(b) follows from the chain rule,
and (c) follows from the construction in which the transmissions associated with different source nodes are kept separate throughout the network, i.e., symbols originating from distinct sources are never combined into a single transmitted symbol. Consequently, the eavesdropper's observation $\{X_v:v\in\beta\}$ can be written a collection of disjoint sets
\[
\{X_v:v\in\beta\} = \left\{\{X_v^{(1)}:v\in\beta\}, \dots , \{X_v^{(|\mathcal{S}|)}:v\in\beta\} \right\} ,
\]
where $\{X_v^{(j)}:v\in\beta\}$ depends only on transmissions originating from the $j^{th}$ source node. 
Step (d) follows from the fact that for each $j$,
\begin{equation}\label{eq:O_decomp_indep}
\begin{aligned}
    \big(\boldsymbol{k}_{\mathcal{T}_i}^{(j)}&,\{X_v^{(j)}:v\in\beta\}\big) \\
    &\ \perp\!\!\!\perp
    \Big(\boldsymbol{k}_{\mathcal{T}_i}^{(1)},\dots,\boldsymbol{k}_{\mathcal{T}_i}^{(j-1)}, \\
    &\qquad \{X_v^{(1)}:v\in\beta\},\dots, \{X_v^{(j-1)}:v\in\beta\},\\
    &\qquad \{X_v^{(j+1)}:v\in\beta\},\dots, \{X_v^{(|\mathcal{S}|)}:v\in\beta\}\Big),
\end{aligned}
\end{equation}
since all sources use independent randomness and their transmissions remain separated.
Finally, step (e) follows from the security proof of Theorem \ref{T1}.

\paragraph{Some source nodes are in $\beta \in \mathcal{B}$ $(x \neq 0)$}
Previously we had shown that if there does not exist any source node in the eavesdropper's observation, security is guaranteed. Now assume there exist some source nodes in $\beta \in \mathcal{B}$. Then, the eavesdropper's observation $\beta$ can be written as
\[
\beta= \big\{ \beta_{\epsilon \setminus s}, \beta_s \big\},
\]
where $\beta_s$ and $\beta_{\epsilon \setminus s}$ denote the eavesdropper's observations of the source nodes and non-source nodes, respectively, with
$|\beta_s| = x$ and $|\beta_{\epsilon \setminus s}| = \ell$.

Now we show that the eavesdropper's observation reveals no information about the key. Let $\mathcal{J} \triangleq \{ j : \mathcal{T}_j \subseteq \mathcal{T}\}$ denote the index set of terminal sets. Then, for any $i \in \mathcal{J}$ satisfying $\beta \cap \mathcal{T}_i = \emptyset$, we have
\begin{equation}
    \begin{aligned}
        I\big(\boldsymbol{K}_{\mathcal{T}_i};& \{X_v:v\in\beta\}\big) = I\big(\tilde{\boldsymbol{K}}_{\mathcal{T}_i}\boldsymbol{G};\; \{X_v:v\in\beta\} \big) \\
        & = I\big(\tilde{\boldsymbol{K}}_{\mathcal{T}_i}\boldsymbol{G};\; \{X_v:v\in\beta_{\epsilon \setminus s}\}, \{X_s:s\in\beta_s\} \big) \\
        & \overset{(a)}{=} I\big(\tilde{\boldsymbol{K}}_{\mathcal{T}_i}\boldsymbol{G};\;\{X_v:v\in\beta_{\epsilon \setminus s}\}, (\boldsymbol{M}^{(j)}:s_j \in  \beta_s) \big) \\
        & = H\big( \tilde{\boldsymbol{K}}_{\mathcal{T}_i}\boldsymbol{G} \big) - H\big(\tilde{\boldsymbol{K}}_{\mathcal{T}_i}\boldsymbol{G}\\
        & \qquad \qquad \quad \; \;|\{X_v:v\in\beta_{\epsilon \setminus s}\}, (\boldsymbol{M}^{(j)}:s_j \in  \beta_s) \big) \\
        & {=} H\big( \tilde{\boldsymbol{K}}_{\mathcal{T}_i}\boldsymbol{G} \big) - H\big( {\tilde{\boldsymbol{K}}_{\mathcal{T}_i}}^{({\Omega})}\boldsymbol{G}^{({\Omega})} + {\tilde{\boldsymbol{K}}_{\mathcal{T}_i}}^{(\bar{\Omega})}\boldsymbol{G}^{(\bar{\Omega})}\\
        & \qquad \qquad \quad \; \;|\{X_v:v\in\beta_{\epsilon \setminus s}\}, (\boldsymbol{M}^{(j)}:s_j \in  \beta_s) \big) \\
        & \overset{(b)}{=} H\big( \tilde{\boldsymbol{K}}_{\mathcal{T}_i}\boldsymbol{G} \big) - H\big( {\tilde{\boldsymbol{K}}_{\mathcal{T}_i}}^{(\bar{\Omega})}\boldsymbol{G}^{(\bar{\Omega})}\\
        & \qquad \qquad \quad \; \;|\{X_v:v\in\beta_{\epsilon \setminus s}\}, (\boldsymbol{M}^{(j)}:s_j \in  \beta_s) \big) \\
        & \overset{(c)}{=} H\big( \tilde{\boldsymbol{K}}_{\mathcal{T}_i}\boldsymbol{G} \big) - H\big( {\tilde{\boldsymbol{K}}_{\mathcal{T}_i}}^{(\bar{\Omega})}\boldsymbol{G}^{(\bar{\Omega})}\\
        &\qquad \qquad \qquad \qquad \quad | \; \{X_v^{(j)}:v\in\beta_{\epsilon \setminus s},\; j \in {\bar{\Omega}}\}, \\
        &\qquad \{X_v^{(j)}:v\in\beta_{\epsilon \setminus s},\; j \in {{\Omega}}\}, (\boldsymbol{M}^{(j)}:s_j \in  \beta_s) \big)\\
        & \overset{(d)}{=} H\big( \tilde{\boldsymbol{K}}_{\mathcal{T}_i}\boldsymbol{G} \big) - H\big( {\tilde{\boldsymbol{K}}_{\mathcal{T}_i}}^{(\bar{\Omega})}\boldsymbol{G}^{(\bar{\Omega})}\\
        &\qquad \qquad \qquad \qquad \quad \; |\;\{X_v^{(j)}:v\in\beta_{\epsilon \setminus s},\; j \in {\bar{\Omega}}\}\big) \\
        & \overset{(e)}{=} H\big( \tilde{\boldsymbol{K}}_{\mathcal{T}_i}\boldsymbol{G} \big) - H\big( {\tilde{\boldsymbol{K}}_{\mathcal{T}_i}}^{(\bar{\Omega})}\boldsymbol{G}^{(\bar{\Omega})}\big) \\
        & = H\big( {\tilde{\boldsymbol{K}}_{\mathcal{T}_i}}^{(\Omega)}\boldsymbol{G}^{(\Omega)} + {\tilde{\boldsymbol{K}}_{\mathcal{T}_i}}^{(\bar{\Omega})}\boldsymbol{G}^{(\bar{\Omega})} \big) - H\big( {\tilde{\boldsymbol{K}}_{\mathcal{T}_i}}^{(\bar{\Omega})}\boldsymbol{G}^{(\bar{\Omega})}\big) \\
        & \overset{(f)}{=} 0
    \end{aligned}
\end{equation}
where in the above we define the index set of observed sources as
\[
\Omega \triangleq \{j\in[|\mathcal{S}|]: s_j\in\beta_s\}, \qquad \bar{\Omega}\triangleq [|\mathcal{S}|]\setminus \Omega.
\]
Then, ${\tilde{\boldsymbol{K}}_{\mathcal{T}_i}}^{(\Omega)}$ collects the columns (equivalently, the components) $\boldsymbol{k}_{\mathcal{T}_i}^{(j)}$ of $\boldsymbol{K}_{\mathcal{T}_i}$ with $j\in\Omega$, and ${\tilde{\boldsymbol{K}}_{\mathcal{T}_i}}^{(\bar{\Omega})}$ collects those with $j\in\bar{\Omega}$. Similarly, $\boldsymbol{G}^{(\Omega)}$ consists of the rows of $\boldsymbol{G}$ indexed by $\Omega$ and $\boldsymbol{G}^{(\bar{\Omega})}$ consists of the rows indexed by $\bar{\Omega}$.
With this notation,
\[
\tilde{\boldsymbol{K}}_{\mathcal{T}_i}\boldsymbol{G}
={\tilde{\boldsymbol{K}}_{\mathcal{T}_i}}^{(\Omega)}\boldsymbol{G}^{(\Omega)}+
{\tilde{\boldsymbol{K}}_{\mathcal{T}_i}}^{(\bar{\Omega})}\boldsymbol{G}^{(\bar{\Omega})}.
\]
Step (a) follows because observing a source node $s_j \in \beta_s$ gives the eavesdropper access to all randomness generated at that source, namely $\boldsymbol{M}^{(j)}$.
Step (b) follows because knowing $\boldsymbol{M}^{(\omega)}$ for $\omega\in \Omega$, ${\tilde{\boldsymbol{K}}_{\mathcal{T}_i}}^{(\Omega)}\boldsymbol{G}^{(\Omega)} $ becomes a deterministic value.
Step (c) follows by applying step (c) of \eqref{NSNob1}, which uses the separability of transmissions associated with different source nodes.
Step (d) follows since 
$\left(\{X_v^{(j)}:v\in\beta_{\epsilon \setminus s},\; j \in {{\Omega}}\},\, (\boldsymbol{M}^{(j)}:s_j \in  \beta_s)\right)$
is independent of 
$\big({\tilde{\boldsymbol{K}}_{\mathcal{T}_i}}^{(\bar{\Omega})}\boldsymbol{G}^{(\bar{\Omega})},\;\{X_v^{(j)}:v\in\beta_{\epsilon \setminus s},\; j \in {\bar{\Omega}}\}\big)$, since the latter depend only on randomness originating from sources in $\bar{\Omega}$. 
Step (e) follows from \eqref{NSNob1}, which shows that if the adversary set contains no source nodes, then the shared key is independent of the adversary’s observation.
Finally, step (f) follows from Lemma~\ref{KG_uniform} and the fact that $\boldsymbol{G}^{(\bar{\Omega})}$ is a submatrix of the Vandermonde matrix $\boldsymbol{G}$ with distinct columns. 
Namely, we set $A$ in Lemma~\ref{KG_uniform} to be ${\tilde{\boldsymbol{K}}_{\mathcal{T}_i}}^{(\bar{\Omega})}$ and $B$ to be $\boldsymbol{G}^{(\bar{\Omega})}$. Hence,
${\tilde{\boldsymbol{K}}_{\mathcal{T}_i}}^{(\bar{\Omega})}\boldsymbol{G}^{(\bar{\Omega})}$
is uniformly distributed over $\mathbb{F}_q^{(d-\ell)\times (|\mathcal{S}|-x)}$. Since adding an independent random term only translates a uniform distribution without changing it, the sum
\[
{\tilde{\boldsymbol{K}}_{\mathcal{T}_i}}^{(\Omega)}\boldsymbol{G}^{(\Omega)}
+
{\tilde{\boldsymbol{K}}_{\mathcal{T}_i}}^{(\bar{\Omega})}\boldsymbol{G}^{(\bar{\Omega})}
\]
is also uniformly distributed over the same space. Therefore, both terms have the same entropy, and step (f) follows.
\end{proof}

\section{Multiple Key-Cast in Networks with partially-connected Nodes} \label{PCsec}
In this section, we relax the $d$-vertex connectivity assumption considered previously and study secure multiple key-cast in networks where only the terminal nodes are $d$-vertex connected from the source. Intermediate nodes that do not satisfy this connectivity requirement are referred to as partially-connected nodes, as their limited connectivity may restrict the amount of independent information they can receive from the source. We start by considering single-source networks.
We show that, despite the presence of partially-connected nodes, secure multiple key-cast remains feasible under a structural condition on the network. In particular, when no node receives input from more than $d-\ell$ partially-connected nodes, it is possible to design a coding scheme that achieves a positive key-cast rate while guaranteeing perfect secrecy against any eavesdropper observing up to $\ell<d$ nodes. 
We further generalize these results in two directions. First, we extend the analysis to the multi-source setting, where secrecy is guaranteed even when the eavesdropper may observe an arbitrary subset of source nodes, provided that the number of sources exceeds the eavesdropping capability. Second, we relax the structural condition and analyze the impact of nodes that receive inputs from more than $d-\ell$ partially-connected nodes, showing that a positive key-cast rate remains achievable with a rate degradation that depends on the extent of such violations.
In what follows, a node is referred to as partially-connected if it is less that $d$ vertex connected from any source node $s \in S$.

\begin{theorem}[Partially-connected single-source setting]
\label{T3}
    Let $\mathcal{I}=\bigl(\mathcal{V},\mathcal{E},\mathcal{B},\mathcal{S}=\{s\},{\mathcal{T}=\{\mathcal{T}_i\}_{i=1}^m}\bigr)$ be an instance of the Secure Multiple Key-Cast problem with
    {\(\mathcal{B}=
    \bigl\{
    \beta \subseteq \mathcal{V} \setminus \{s\}: |\beta|\le \ell
    \bigr\},
    \)}
    in which all terminal nodes are $d$-vertex connected from the source $s$. 
    Let $\hat{d}$ be the minimum vertex-connectivity from $s$ taken over all network nodes.
    If no $d$-vertex connected node receives input from more than $z \le d-\ell$ partially-connected nodes, then there exists a coding scheme that achieves key rate 
    $$\frac{d-\ell-z+1}{d(d - \hat{d} )+1}.$$
\end{theorem}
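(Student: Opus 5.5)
The plan is to reduce to the construction of Theorem~\ref{T1} (Algorithm~\ref{alg:keycast_compact}) by stretching the blocklength so that partially-connected nodes can still obtain their shares. I will work with blocklength $n=d(d-\hat d)+1$.
\begin{itemize}
\item One channel use carries the ordinary Algorithm~\ref{alg:keycast_compact} traffic. Every $d$-vertex connected node $j$ receives $\boldsymbol{v}_j^T\boldsymbol{M}\boldsymbol{V}_j$ and recovers $\boldsymbol{s}_j=\boldsymbol{M}\boldsymbol{v}_j$ exactly as in Steps 1--3.
\item The remaining $d(d-\hat d)$ channel uses are reserved for the partially-connected nodes.
\end{itemize}

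A partially-connected node $j$ is at least $\hat d$ vertex connected from $s$. In one channel use it can therefore collect at least $\hat d$ of the $d$ coordinates of $\boldsymbol{v}_j^T\boldsymbol{M}\boldsymbol{V}_j$, so it is missing at most $d-\hat d$ symbols. I will have the source pipeline these missing symbols along the $\hat d$ available vertex-disjoint paths over the extra channel uses. Since $\hat d\ge 1$, at most $d-\hat d$ symbols per coordinate block and at most $d$ blocks must be delivered, so $d(d-\hat d)$ extra uses suffice. Each intermediate node only forwards symbols of the form $\boldsymbol{v}_{j}^T\boldsymbol{M}\boldsymbol{v}'$ with $\boldsymbol{v}'$ a fresh Vandermonde vector, and processing follows the topological order, so after the extended block every partially-connected node also holds $\boldsymbol{M}\boldsymbol{v}_j$ and can feed its children as in Step 2. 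The rate is then (key length)$/n$, which gives the denominator $d(d-\hat d)+1$.

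For the key length, define $\boldsymbol{k}_{\mathcal{T}_i}=[\boldsymbol{M}\boldsymbol{v}_{\mathcal{T}_i}]_{1:d-\ell-z+1}$. The reduction by $z-1$ accounts for extra leakage. A relaying partially-connected node sees, besides its own share, symbols that are not in the span of its own Vandermonde vector. A $d$-connected node with $z$ partially-connected parents thus has part of its received $\boldsymbol{V}_j$ exposed through relays.

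For secrecy, I will show that the eavesdropper's total observation for any $\beta$ with $|\beta|\le\ell$ is contained in $\boldsymbol{M}\boldsymbol{V}_\epsilon$ for a Vandermonde matrix $\boldsymbol{V}_\epsilon$ with at most $\ell+z-1$ columns, none of which equals $\boldsymbol{v}_{\mathcal{T}_i}$. I then apply Lemma~\ref{main_ind} with $\ell$ replaced by $\ell+z-1$, which requires $\ell+z-1<d$; this is guaranteed by $z\le d-\ell$. This gives $I(\boldsymbol{k}_{\mathcal{T}_i};\{X_v:v\in\beta\})=0$ exactly as in \eqref{mu-eq8}.

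The main obstacle is this counting step: proving that the relay traffic through partially-connected nodes enlarges the eavesdropper's effective Vandermonde span by at most $z-1$ dimensions rather than by $z$ or more. The bound should come from the relay schedule, using the fact that one of the partially-connected inputs of any node coincides in span with a share the eavesdropper already holds when it observes that node. My first attempt will be to choose the relay vectors so that all relayed symbols into a given node lie in the span of that node's own vector together with $z-1$ fresh vectors. Only if that fails will I try to absorb the slack into the $+1$ in the numerator.
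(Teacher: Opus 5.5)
Your scheme is not secure, so the ``counting step'' you flag cannot be proved. You hand every partially-connected node $j$ its full share $\boldsymbol{M}\boldsymbol{v}_j$ of the same matrix that carries the key. Since $j$ is only $\hat d$-vertex connected, everything $j$ receives is a function of what a vertex cut of size $\hat d$ receives. An eavesdropper on that cut therefore obtains $\boldsymbol{M}\boldsymbol{v}_j$ on top of the cut nodes' own shares. The number of such extra columns is governed by the topology, not by $z$; in your scheme $z$ plays no role in decodability at all.

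Concretely, take $d=3$ and $\ell=1$, with three parallel edges $s\to u$, one edge $u\to j$, two parallel edges $s\to t$, one edge $j\to t$, and $\mathcal{T}=\{t\}$. Then $t$ is $3$-vertex connected, $\hat d=1$, $z=1\le d-\ell$, and your key has $d-\ell-z+1=2$ symbols. But $X_j$ is a function of $X_u$, so $\beta=\{u\}$ reveals both $\boldsymbol{M}\boldsymbol{v}_u$ and $\boldsymbol{M}\boldsymbol{v}_j$. With $\boldsymbol{U}=[\boldsymbol{v}_u,\boldsymbol{v}_j,\boldsymbol{v}_{\mathcal{T}}]$, this fixes every entry of $\boldsymbol{U}^T\boldsymbol{M}\boldsymbol{U}$ except the $(3,3)$ entry. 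Hence $\boldsymbol{M}\boldsymbol{v}_{\mathcal{T}}$ retains only one symbol of uncertainty, and at least one key symbol leaks. If a single node is the only route to $d$ partially-connected nodes, observing it reveals all of $\boldsymbol{M}$.

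So the bound of $\ell+z-1$ columns is false. No relay schedule or use of the ``$+1$'' slack repairs it, as long as partially-connected nodes learn shares of the key-bearing matrix. Separately, your bound of $d(d-\hat d)$ extra uses counts symbols per destination rather than per edge. An edge lying on the relay paths of many partially-connected nodes would have to carry more.

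The missing idea is the paper's separation of randomness into two independent symmetric matrices:
\begin{itemize}
\item $\boldsymbol{M}$, of size $(d-z)\times(d-z)$, circulates only among $d$-vertex connected nodes. Each such node has at least $d-z$ inputs from the source or from $d$-vertex connected parents; this is exactly where the hypothesis on $z$ is used.
\item $\boldsymbol{R}$, of size $d\times d$, is the only randomness partially-connected nodes ever handle. Type-A nodes recover $\boldsymbol{R}\boldsymbol{v}_j^{(R)}$, and type-B nodes accumulate and forward full vectors $\boldsymbol{R}\boldsymbol{v}_{j''}^{(R)}$. This forwarding is what costs the $d(d-\hat d)$ extra uses.
\end{itemize}
The key is $[\boldsymbol{M}\boldsymbol{v}_{\mathcal{T}_i}^{(M)}]_{1:d-\ell-z+1}+[\boldsymbol{R}\boldsymbol{v}_{\mathcal{T}_i}^{(R)}]_{1:d-\ell-z+1}$, and secrecy splits into two cases. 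If $\beta$ contains a partially-connected node, the eavesdropper is conceded all of $\boldsymbol{R}$ but sees at most $\ell-1$ columns of $\boldsymbol{M}$. Lemma~\ref{main_ind} in dimension $d-z$ then protects $(d-z)-(\ell-1)=d-\ell-z+1$ symbols. Otherwise it is conceded $\boldsymbol{M}$ but sees only $\ell$ columns of $\boldsymbol{R}$, which protects $d-\ell\ge d-\ell-z+1$ symbols. That two-case argument, not relay leakage, is where the numerator $d-\ell-z+1$ comes from.
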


\begin{proof}
    Consider a network where each non-terminal node $v_i \in \mathcal{V}$ is assigned two Vandermonde vectors $\boldsymbol{v}_{i}^{(M)} \in \mathbb{F}_q^{(d-z)\times 1}$ and $\boldsymbol{v}_i^{(R)}\in \mathbb{F}_q^{d\times 1}$ indexed by $i$. 
    Furthermore, for each terminal set  $\mathcal{T}_i \subseteq \mathcal{T}$, all terminal nodes $v_t \in \mathcal{T}_i$ share the same Vandermonde vector, denoted by $\boldsymbol{v}_{\mathcal{T}_i}^{(M)} \in \mathbb{F}_q^{(d-z)\times 1}$ and $\boldsymbol{v}_{\mathcal{T}_i}^{(R)}\in \mathbb{F}_q^{d\times 1}$.
    
    Define the key for terminal set $\mathcal{T}_i$ as
    \begin{equation}\label{Theorem1_key-def}
        \boldsymbol{k}_{\mathcal{T}_i} = [\boldsymbol{M} \boldsymbol{v}_{\mathcal{T}_i}^{(M)} ]_{1:d-\ell-z+1}+[\boldsymbol{R} \boldsymbol{v}_{\mathcal{T}_i}^{(R)} ]_{1:d-\ell-z+1},
    \end{equation}
    where $[\boldsymbol{A}]_{1:d-\ell-z+1}$ denotes the first $d-\ell-z+1$ entries of $\boldsymbol{A}$, and $\boldsymbol{M} \in \mathbb{F}_q^{(d-z) \times (d-z)}$ and $\boldsymbol{R} \in \mathbb{F}_q^{d \times d}$ are independent symmetric random matrices of size $(d-z) \times (d-z)$ and $d \times d$, respectively, with each entry in their upper triangular part independently and uniformly chosen from $\mathbb{F}_q$.
    
    We start by defining our encoding scheme.
    
    \subsection*{Encoding:}

\paragraph*{Step 1}
For each node \(j \in \mathcal{V}\setminus\{s\}\), let \(0 \leq c_j \leq d\) denote the number of edges from the source node \(s\) to node \(j\). For each \(d\)-vertex-connected node \(j \in \mathcal{N}(s)\), the source uses these \(c_j\) edges to transmit two sets of \(c_j\) symbols generated using independent Vandermonde matrices
\(
\boldsymbol{V}_{s\to j}^{(M)} \in \mathbb{F}_q^{(d-z)\times c_j}
\)
and
\(
\boldsymbol{V}_{s\to j}^{(R)} \in \mathbb{F}_q^{d\times c_j},
\)
constructed from distinct Vandermonde vectors that are not used elsewhere in the network. Specifically, if $j$ is $d$-vertec connected, the source transmits
\begin{equation}\label{spurcePIb}
    \boldsymbol{\tau}_{s \to j}^{(M)}
    =
    \boldsymbol{v}_j^T \boldsymbol{M}\boldsymbol{V}_{s\to j}^{(M)},
    \qquad
    \boldsymbol{\tau}_{s \to j}^{(R)}
    =
    \boldsymbol{v}_j^T \boldsymbol{R}\boldsymbol{V}_{s\to j}^{(R)}.
\end{equation}
If node \(j\) is partially-connected, then the source sends the following vector of size $d$ to $j$:
\begin{equation}
    \boldsymbol{s}_{j}^{(R)}
    =
    \boldsymbol{R}\boldsymbol{v}_{j}^{(R)}.
\end{equation}

\paragraph*{Step 2}
    By induction, let \( j \in \mathcal{V} \) be the next node in the topological order such that every preceding node \( j' \) has either received its shares or is the source node. For each \( d \)-vertex connected node \( j' \), these shares are
    \[
    \boldsymbol{s}_{j'}^{(M)} = \boldsymbol{M}\boldsymbol{v}_{j'}^{(M)}, \quad \text{and} \quad \boldsymbol{s}_{j'}^{(R)} = \boldsymbol{R}\boldsymbol{v}_{j'}^{(R)}.
    \]
    For each partially-connected node \( j' \), we classify it as either type A or type B. Type-A partially-connected nodes are those that are either directly connected to the source or are able to recover \(\boldsymbol{R}\boldsymbol{v}_{j'}^{(R)}\); type-B partially-connected nodes are the remaining ones. The share for a type-A partially-connected node is
    \(
    \boldsymbol{s}_{j'}^{(R)} = \boldsymbol{R}\boldsymbol{v}_{j'}^{(R)}.
    \)
    Alternatively, if $j'$ is type B partially-connected node, then the share consists of the set \(\mathcal{s}_{j'}^{(R)}\), defined as all vectors \(\boldsymbol{R}\boldsymbol{v}_{j''}^{(R)}\) corresponding to \( d \)-vertex connected or type-A partially-connected nodes \( j'' \) that are either directly connected to \( j' \) or are connected to \( j' \) through one or more partially-connected nodes. Namely, define
{\footnotesize{\begin{equation}
\mathcal{D}(j') \triangleq
\left\{
\begin{array}{c@{\;}l}
j'' \in \mathcal{V}:
&
\begin{array}{l}
 j'' \text{ is either } d\text{-vertex connected from } s \text{ or} \\
 j'' \text{ is a type-A partially-connected node,} \\
 \text{and there exists a directed path from } j''  \\
 \text{to } j'\text{ whose internal nodes, if any, are}\\
 \text{all type-B partially-connected nodes}
\end{array}
\end{array}
\right\}.
\end{equation}}}
    Then,
    \begin{equation}
    \mathcal{s}_{j'}^{(R)} \triangleq \big\{ \boldsymbol{R}\boldsymbol{v}_{j''}^{(R)} \;:\; j'' \in \mathcal{D}(j') \big\}.
    \end{equation}

\paragraph*{Case 1} If $j$ and all nodes in its parent set are $d$-vertex-connected, each non-source parent node $j_p \in \mathcal{P}(j) \setminus \{s\}$ sends 
    $$
    \tau^{(M)}_{j_p \to j} = {\boldsymbol{v}_{j}^{(M)}}^T \boldsymbol{M} \boldsymbol{v}_{{j}_p}^{(M)}, 
    $$
    and
    $$
    \tau^{(R)}_{j_p \to j} = {\boldsymbol{v}_{j}^{(R)}}^T \boldsymbol{R} \boldsymbol{v}_{{j}_p}^{(R)},
    $$
    to $j$. Note that the possible transmissions from the source to node $j$ have already been specified in Step~2. Then, node $j$ stacks all received symbols from the source and its parents to reconstruct the following $1 \times d$ row vectors:
    $$
    \boldsymbol{T}^{(M)}_{j} = {\boldsymbol{v}_{j}^{(M)}}^T \boldsymbol{M} \boldsymbol{V}_{j}^{(M)},
    $$
    and
    $$
    \boldsymbol{T}^{(R)}_{j} = {\boldsymbol{v}_j^{(R)}}^T \boldsymbol{R} \boldsymbol{V}_j^{(R)},
    $$
where
\(
\boldsymbol{V}_j^{(M)} \in \mathbb{F}_q^{(d-z)\times d}
\)
and
\(
\boldsymbol{V}_j^{(R)} \in \mathbb{F}_q^{d\times d}
\)
are Vandermonde matrices whose first \(c_j\) columns are given by
\(
\boldsymbol{V}_{s\to j}^{(M)}
\)
and
\(
\boldsymbol{V}_{s\to j}^{(R)},
\)
respectively, while the remaining columns correspond to the Vandermonde vectors
\(
\{\boldsymbol{v}_{j_p}^{(M)}\}_{j_p \in \mathcal{P}(j)}
\)
and
\(
\{\boldsymbol{v}_{j_p}^{(R)}\}_{j_p \in \mathcal{P}(j)},
\)
respectively.
Since node \(j\) is \(d\)-vertex connected, all associated Vandermonde indices are distinct. Hence, \(\boldsymbol{V}_j^{(R)}\) is full rank and invertible, while every \((d-z)\times(d-z)\) submatrix of \(\boldsymbol{V}_j^{(M)}\) is invertible. Therefore, node \(j\) can uniquely recover
\begin{align}\label{Theorem1_eq_t2_case1}
    \boldsymbol{s}_{j}^{(M)}
    &=
    \left(
    [\boldsymbol{T}_{j}^{(M)}]_{1:{d-z}}
    \left[\boldsymbol{V}_{j}^{(M)}\right]_{:,1:d-z}^{-1}
    \right)^T
    \nonumber\\
    &=
    \boldsymbol{M}\boldsymbol{v}_{j}^{(M)}.
\end{align}
Similarly,
\begin{align}
    \boldsymbol{s}_{j}^{(R)}
    &=
    \left(
    \boldsymbol{T}_{j}^{(R)}
    \left(\boldsymbol{V}_{j}^{(R)}\right)^{-1}
    \right)^T
    \nonumber\\
    &=
    \left(
    {\boldsymbol{v}_{j}^{(R)}}^T
    \boldsymbol{R}
    \boldsymbol{V}_{j}^{(R)}
    \left(\boldsymbol{V}_{j}^{(R)}\right)^{-1}
    \right)^T
    \nonumber\\
    &=
    \boldsymbol{R}\boldsymbol{v}_{j}^{(R)}.
\end{align}
    
    \paragraph*{Case 2}
    Here, we consider the case where \(j\) is a partially-connected node and every non-source parent \(j_p \in \mathcal{P}(j) \setminus \{s\}\) is \(d\)-vertex connected. If \(j \in \mathcal{N}(s)\), then \(j\) is a type-A partially-connected node and has already received
    \[
    \boldsymbol{s}_j^{(R)} = \boldsymbol{R}\boldsymbol{v}_j^{(R)}
    \]
    directly from the source.
    Otherwise, if \(j \notin \mathcal{N}(s)\), node \(j\) receives the vectors
    \(\boldsymbol{R}\boldsymbol{v}_{j_p}^{(R)}\) from its parents \(j_p \in \mathcal{P}(j)\) and forms the collection
    \[
    \mathcal{s}_j^{(R)}
    =
    \big\{
    \boldsymbol{R}\boldsymbol{v}_{j_p}^{(R)}
    :
    j_p \in \mathcal{D}(j)
    \big\},
    \]
    where \(\mathcal{D}(j)=\mathcal{P}(j)\). Since \(|\mathcal{D}(j)|<d\), node \(j\) cannot recover \(\boldsymbol{s}_j^{(R)} = \boldsymbol{R}\boldsymbol{v}_j^{(R)}\) and is therefore a type-B partially-connected node.

\paragraph*{Case 3}
In this case, \(j\) is a partially-connected node and its parent set contains a subset of partially-connected nodes \(\mathcal{J}_{PI}\subseteq\mathcal{P}(j)\). Let \(\mathcal{J}_{PI}^{\text{type-A}}\subseteq \mathcal{J}_{PI}\) and \(\mathcal{J}_{PI}^{\text{type-B}}\subseteq \mathcal{J}_{PI}\) denote the subsets of type-A and type-B partially-connected parent nodes, respectively.
Define $\mathcal{D}(j)$ as the set of \( d \)-vertex connected or type-A partially-connected nodes \( j' \) that are either directly connected to \( j \) or are connected to \( j \) through one or more type-B partially-connected nodes.
If $|\mathcal{D}(j)|<d$, node $j$ is a type-B partially-connected node. It receives the vectors $\boldsymbol{R} \boldsymbol{v}_{j_p}^{(R)}$ from each $j_p \in \mathcal{P}(j) \setminus \mathcal{J}_{PI}^{\text{type-B}}$.
For nodes is $\mathcal{J}_{PI}^{\text{type-B}}$, assume an arbitrary ordering: $\mathcal{J}_{PI}^{\text{type-B}} = \{p_1, p_2, \dots, p_{|\mathcal{J}_{PI}^{\text{type-B}}|}\}$. Each type-B partially-connected parent $p_i$ transmits only the vectors from its set $\mathcal{D}(p_i)$ that are neither in the parent set of $j$ nor already covered by a preceding type-B parent. Specifically, $p_i$ transmits the set of vectors:
$$
\left\{ \boldsymbol{R}\boldsymbol{v}_{j'}^{(R)} \;:\; j' \in \mathcal{D}(p_i) \setminus \left( \mathcal{P}(j) \cup \bigcup_{r=1}^{i-1} \mathcal{D}(p_r) \right) \right\}.
$$
By taking the union of the individually received vectors from non-type-B parents and the mutually disjoint subsets of vectors from the type-B parents, node $j$ forms its complete collection without receiving duplicates:
$$
\mathcal{s}_{j}^{(R)} \triangleq \big\{ \boldsymbol{R}\boldsymbol{v}_{j'}^{(R)} \;:\; j' \in \mathcal{D}(j) \big\}.
$$
Note that while a type-B partially-connected node applies this set-difference rule to filter out duplicate transmissions to its downstream type-B partially-connected children, taking the union of these mutually disjoint transmissions perfectly reconstructs the accumulated sets. Hence, as the transmissions flow down any directed path of type-B partially-connected nodes, these distinct subsets seamlessly aggregate. By induction, the final node $j$ receives all distinct vectors originating from $d$-vertex connected or type-A partially-connected nodes that feed into any preceding type-B node on the path. Therefore, $j$ can construct $\mathcal{s}_{j}^{(R)}$ exactly according to the definition of $\mathcal{D}(j)$.

Alternatively, if \(|\mathcal{D}(j)|\ge d\) or \(j \in \mathcal{N}(s)\), node \(j\) is classified as a type-A partially-connected node. In what follows, we will show that $j$ is able to recover its exact share:
\[
\boldsymbol{s}_j^{(R)}=\boldsymbol{R}\boldsymbol{v}_j^{(R)}.
\]
If \(j\in\mathcal{N}(s)\), then \(\boldsymbol{s}_j^{(R)}\) has already been received directly from the source. Otherwise, let \(\widehat{\mathcal{D}}(j)\subseteq\mathcal{D}(j)\) be any subset of size \(d\). For each \(j_p\in\widehat{\mathcal{D}}(j)\), node \(j\) receives 
\[
    {\boldsymbol{v}_{j}^{(R)}}^T \boldsymbol{R} \boldsymbol{v}_{j_p}^{(R)} .
\]
The availability of these values is justified as follows: if \(j_p\) is directly connected to \(j\), it can compute and transmit \({\boldsymbol{v}_{j}^{(R)}}^T \boldsymbol{R} \boldsymbol{v}_{j_p}^{(R)}\) directly from its share $\boldsymbol{R}\boldsymbol{v}_{j_p}^{R}$. If \(j_p\) is connected to \(j\) via a path of type-B partially-connected nodes, those type-B nodes already store the full vector \(\boldsymbol{R}\boldsymbol{v}_{j_p}^{(R)}\) received from their ancestors. Therefore, they can locally compute and transmit \({\boldsymbol{v}_{j}^{(R)}}^T \boldsymbol{R} \boldsymbol{v}_{j_p}^{(R)}\) to node \(j\).

Let \(\widehat{\boldsymbol{V}}_j\) be the \(d \times d\) Vandermonde matrix whose columns are \(\{\boldsymbol{v}_{j_p}^{(R)}:j_p\in\widehat{\mathcal{D}}(j)\}\). By stacking the \(d\) received symbols $\{{\boldsymbol{v}_{j}^{(R)}}^T \boldsymbol{R} \boldsymbol{v}_{j_p}^{(R)}\}_{j_p \in \widehat{\mathcal{D}}(j)}$ into a row vector, node \(j\) forms \({\boldsymbol{v}_j^{(R)}}^T \boldsymbol{R} \widehat{\boldsymbol{V}}_j\). Since \(\widehat{\boldsymbol{V}}_j\) is invertible, node \(j\) can find its share as:
\begin{align*}
    \boldsymbol{s}_j^{(R)}
    &=
    \left(
    {\boldsymbol{v}_j^{(R)}}^T
    \boldsymbol{R}
    \widehat{\boldsymbol{V}}_j
    \widehat{\boldsymbol{V}}_j^{-1}
    \right)^T  \\
    &=
    \boldsymbol{R}^T \boldsymbol{v}_j^{(R)} \\
    &=
    \boldsymbol{R}\boldsymbol{v}_j^{(R)}.
\end{align*}

    \paragraph*{Case 4}
Suppose that $j$ is $d$-vertex connected and that its parent set contains a set of partially-connected nodes $\mathcal{J}_{PI}\subseteq \mathcal{P}(j)$, with $|\mathcal{J}_{PI}|\le z$. Let $\mathcal{J}_{PI}^{\text{type-A}}\subseteq \mathcal{J}_{PI}$ and $\mathcal{J}_{PI}^{\text{type-B}}\subseteq \mathcal{J}_{PI}$ denote the subsets of type-A and type-B partially-connected parent nodes, respectively. Node $j$ receives
    $$
    \tau^{(M)}_{j_p \to j} = {\boldsymbol{v}_{j}^{(M)}}^T \boldsymbol{M} \boldsymbol{v}_{{j}_p}^{(M)}, 
    $$
from all $d$-vertex-connected parents $j_p\in \mathcal{P}(j)\setminus (\mathcal{J}_{PI}\cup\{s\})$, together with the $c_j$ symbols received directly from the source in \eqref{spurcePIb}. Then, node $j$ can stack these symbols as
$$
\boldsymbol{T}_{j}^{(M)}
=
{\boldsymbol{v}_{j}^{(M)}}^T
\boldsymbol{M}
\boldsymbol{V}_{j}^{(M)},
$$
where $\boldsymbol{V}_{j}^{(M)} \in \mathbb{F}_q^{(d-z)\times (d-z)}$ is a Vandermonde matrix whose first $c_j$ columns are given by $\boldsymbol{V}_{s\to j}^{(M)}$, while the remaining $d-c_j-z$ columns correspond to the Vandermonde vectors $\{\boldsymbol{v}_{j_p}^{(M)}\}_{j_p \in \mathcal{P}(j)\setminus (\mathcal{J}_{PI}\cup\{s\})}$. Since the columns of $\boldsymbol{V}_{j}^{(M)}$ have distinct Vandermonde indices, $\boldsymbol{V}_{j}^{(M)}$ is invertible. Hence node $j$ recovers
\begin{align}
    \boldsymbol{s}_{j}^{(M)}
    &=
    \left(
    \boldsymbol{T}_{j}^{(M)}
    \left(\boldsymbol{V}_{j}^{(M)}\right)^{-1}
    \right)^T
    \nonumber\\
    &=
    \left(
    {\boldsymbol{v}_{j}^{(M)}}^T
    \boldsymbol{M}
    \boldsymbol{V}_{j}^{(M)}
    \left(\boldsymbol{V}_{j}^{(M)}\right)^{-1}
    \right)^T
    \nonumber\\
    &=
    \boldsymbol{M}\boldsymbol{v}_{j}^{(M)} .
\end{align}

From each $d$-vertex connected or type-A partially-connected parent node $j_p \in \mathcal{P}(j) \setminus (\mathcal{J}_{PI}^{\text{type-B}}\cup\{s\})$, node $j$ receives
    $$
    \tau^{(R)}_{j_p \to j} = {\boldsymbol{v}_{j}^{(R)}}^T \boldsymbol{R} \boldsymbol{v}_{{j}_p}^{(R)}.
    $$
Without loss of generality, we
restrict our attention to a minimal subset of these parents, denoted by
\(\widehat{\mathcal{J}}_{PI}^{\mathrm{type-B}}
\subseteq \mathcal{J}_{PI}^{\mathrm{type-B}}\), such that removing the remaining
nodes in
\(\mathcal{J}_{PI}^{\mathrm{type-B}}
\setminus \widehat{\mathcal{J}}_{PI}^{\mathrm{type-B}}\)
from the network does not reduce the maximum number of vertex-disjoint paths
from the source \(s\) to node \(j\).

\begin{claim}\label{claim:eta-selection}
Let \(j\) be a \(d\)-vertex-connected node from the source \(s\), and let \(\mathcal{J}_{PI}^{\mathrm{type-B}}\subseteq\mathcal P(j)\) denote the set of its type-B partially-connected parents. 
Let \(\widehat{\mathcal{J}}_{PI}^{\mathrm{type-B}}\subseteq\mathcal{J}_{PI}^{\mathrm{type-B}}\) be a minimal subset such that the removal of rest in $\mathcal{J}_{PI}^{\mathrm{type-B}} \setminus \widehat{\mathcal{J}}_{PI}^{\mathrm{type-B}}$ does not reduce the maximum number of vertex-disjoint paths from \(s\) to \(j\).
Then, for every \(j_{PI}\in\widehat{\mathcal{J}}_{PI}^{\mathrm{type-B}}\), there exists an index \(\eta(j_{PI})\) such that
\[
    \boldsymbol R\boldsymbol v_{\eta(j_{PI})}^{(R)}
    \in \mathcal s_{j_{PI}}^{(R)},
    \qquad
    \eta(j_{PI})\notin \mathcal P(j),
\]
and the indices \(\{\eta(j_{PI}):j_{PI}\in
\widehat{\mathcal{J}}_{PI}^{\mathrm{type-B}}\}\) are mutually distinct.
\end{claim}

\begin{proof}[Proof of Claim \ref{claim:eta-selection}]
By the definition of \(\widehat{\mathcal{J}}_{PI}^{\mathrm{type-B}}\), this subset is minimal with respect to preserving the \(d\) vertex-disjoint paths from \(s\) to \(j\). Therefore, each parent \(j_{PI} \in \widehat{\mathcal{J}}_{PI}^{\mathrm{type-B}}\) must lie on a strictly distinct vertex-disjoint path to \(j\); otherwise, any redundant node could be removed without reducing the total number of disjoint paths, contradicting minimality.

Fix any \(j_{PI}\in\widehat{\mathcal{J}}_{PI}^{\mathrm{type-B}}\). Since \(j\) is
\(d\)-vertex connected from \(s\), there exist \(d\) vertex-disjoint paths
from \(s\) to \(j\). 
Remove \(d-1\) such paths that do not pass through
\(j_{PI}\), together with all vertices on those paths. The remaining path
from \(s\) to \(j\) must pass through \(j_{PI}\). Let
\(\eta(j_{PI})\) denote the last \(d\)-vertex-connected or type-A partially-connected node on this remaining path before the path enters the consecutive
sequence of partially-connected nodes leading to \(j_{PI}\). By
construction,
\[
\eta(j_{PI})\notin \mathcal{P}(j).
\]
Moreover, by the forwarding rule for type-B partially-connected nodes, whenever
there is an edge from a type-B partially-connected node \(a\) to another
type-B partially-connected node \(b\), all \(\mathcal{s}^{(R)}\)-type information
available at \(a\) is forwarded to \(b\), i.e.,
\[
\mathcal{s}_{a}^{(R)} \subseteq \mathcal{s}_{b}^{(R)} .
\]
Hence, along the subpath
\[
\eta(j_{PI}) \rightarrow u_1 \rightarrow u_2 \rightarrow \cdots
\rightarrow u_L \rightarrow j_{PI},
\]
where \(u_1,\ldots,u_L\) are all type-B partially-connected nodes, the share
\(\boldsymbol{R}\boldsymbol{v}_{\eta(j_{PI})}^{(R)}\) is propagated
forward. Therefore,
\[
\boldsymbol{R}\boldsymbol{v}_{\eta(j_{PI})}^{(R)}
\in \mathcal{s}_{j_{PI}}^{(R)} .
\]
Since \(j_{PI}\in\widehat{\mathcal{J}}_{PI}^{\mathrm{type-B}}\) was arbitrary, such an index
\(\eta(j_{PI})\) exists for every partially-connected parent in \(\widehat{\mathcal{J}}_{PI}^{\mathrm{type-B}}\). The indices \(\eta(j_{PI})\) are all distinct since each \(\eta(j_{PI})\) is on the path from $s$ to \(j_{PI}\), which are chosen to be strictly vertex disjoint.
\end{proof}

By Claim~\ref{claim:eta-selection}, node $j$ receives from each parent $j_{PI} \in \widehat{\mathcal{J}}_{PI}^{\mathrm{type-B}}$ a single symbol projected from its accumulated set:
$$
\tau^{(R)}_{j_{PI} \to j} = {\boldsymbol{v}_{j}^{(R)}}^T \boldsymbol{R}\boldsymbol{v}_{\eta(j_{PI})}^{(R)},
$$
where $\boldsymbol{R}\boldsymbol{v}_{\eta(j_{PI})}^{(R)} \in \mathcal{s}_{j_{PI}}^{(R)}$, $\eta(j_{PI}) \notin \mathcal{P}(j)$, and $\eta(j_{PI}) \neq \eta(j'_{PI})$ for all $j_{PI} \neq j'_{PI} \in \widehat{\mathcal{J}}_{PI}^{\text{type-B}}$.

Node $j$ then stacks all received $\boldsymbol{R}$-related symbols to form
$$
\boldsymbol{T}_{j}^{(R)}
=
{\boldsymbol{v}_{j}^{(R)}}^T
\boldsymbol{R}
\widehat{\boldsymbol{V}}_{j}^{(R)},
$$
where $\widehat{\boldsymbol{V}}_{j}^{(R)}$ is defined as the block matrix:
$$
\widehat{\boldsymbol{V}}_{j}^{(R)} \triangleq \big[\, \boldsymbol{V}_{s\to j}^{(R)} \;,\; \boldsymbol{V}_{j}^{(R)} \;,\; \boldsymbol{V}_{\eta}^{(R)} \,\big],
$$
with $\boldsymbol{V}_{j}^{(R)}$ containing the column vectors $\boldsymbol{v}_{j_p}^{(R)}$ for $j_p \in \mathcal{P}(j)\setminus ({\mathcal{J}}_{PI}^{\text{type-B}}\cup \{s\})$, and $\boldsymbol{V}_{\eta}^{(R)}$ containing the column vectors $\boldsymbol{v}_{\eta(j_{PI})}^{(R)}$ for all $j_{PI} \in \widehat{\mathcal{J}}_{PI}^{\text{type-B}}$. Since all involved Vandermonde indices are mutually distinct, $\widehat{\boldsymbol{V}}_{j}^{(R)}$ is full rank and invertible. Thus, node $j$ perfectly recovers
\begin{align}
    \boldsymbol{s}_{j}^{(R)}
    &=
    \left(
    \boldsymbol{T}_{j}^{(R)}
    \left(\widehat{\boldsymbol{V}}_{j}^{(R)}\right)^{-1}
    \right)^T
    \nonumber\\
    &=
    \left(
    {\boldsymbol{v}_{j}^{(R)}}^T
    \boldsymbol{R}
    \widehat{\boldsymbol{V}}_{j}^{(R)}
    \left(\widehat{\boldsymbol{V}}_{j}^{(R)}\right)^{-1}
    \right)^T
    \nonumber\\
    &=
    \boldsymbol{R}\boldsymbol{v}_{j}^{(R)} .
\end{align}

\paragraph*{Step 3}
Let \(t\in\mathcal{T}_i\) be a terminal node. By definition, all terminal nodes are \(d\)-vertex connected from the source \(s\). Therefore, the share recovery procedure at node \(t\) is identical to that of the \(d\)-vertex connected intermediate nodes detailed in Step~2 (specifically, Cases~1 and~4). Therefore, node \(t\) successfully recovers its designated shares as:
\[
\boldsymbol{s}_{t}^{(M)} = \boldsymbol{M}\boldsymbol{v}_{\mathcal{T}_i}^{(M)}, \qquad \boldsymbol{s}_{t}^{(R)} = \boldsymbol{R}\boldsymbol{v}_{\mathcal{T}_i}^{(R)}.
\]

\subsection*{Decoding:}
    \paragraph*{Step 5} Using the recovered vectors $\boldsymbol{s}_t^{(M)}$ and $\boldsymbol{s}_t^{(R)}$, the terminal node $t \in \mathcal{T}_i$ recovers the key as  
    \begin{equation} \label{Theorem1_eq_ket}
    \boldsymbol{k}_{\mathcal{T}_i} = [(\boldsymbol{M}) \boldsymbol{v}_{\mathcal{T}_i}^{(M)} ]_{1:d-\ell-z+1}+[(\boldsymbol{R}) \boldsymbol{v}_{\mathcal{T}_i}^{(R)} ]_{1:d-\ell-z+1}
    \end{equation}

\subsection*{Key Rate}
Each terminal node \(t\in\mathcal{T}_i\subseteq\mathcal{T}\) can recover the
corresponding key defined in \eqref{Theorem1_eq_ket}. The proposed scheme
delivers \(d-\ell-z+1\) key symbols. This requires one transmission over each
network edge to deliver \(\boldsymbol{M}\boldsymbol{v}_{\mathcal{T}_i}^{(M)}\)
to the terminals.
In order to bound the number of additional transmissions needed for the
\(\boldsymbol{R}\)-related information, we consider the worst-case scenario.
Applying the process specified in step 2, every $d$-vertex connected or type-A partially-connected node $j$ is able to recover $\boldsymbol{s}_{j}^{(R)}$ by receiving one symbol from each of its parents. However, when $j$ is type-B partially-connected and $D(j)=d-1$, $j$ requires to receive $d-1$ vectors of size $d$, which is the worst case scenario. 
Using the same logic established earlier, we restrict our attention to a minimal subset of parent nodes, denoted by $\widehat{\mathcal{P}}(j) \subseteq \mathcal{P}(j)$, such that removing the remaining nodes in $\mathcal{P}(j) \setminus \widehat{\mathcal{P}}(j)$ from the network does not reduce the maximum number of vertex-disjoint paths from the source $s$ to node $j$. 
By the minimality of $\widehat{\mathcal{P}}(j)$, and using the same logic as in the proof of Claim \ref{claim:eta-selection}, one can show that each parent in this subset holds at least one unique vector of the form $\boldsymbol{R} \boldsymbol{v}_i^{(R)}$ for $i \in \mathcal{D}(j)$. Therefore, when $|\mathcal{D}(j)|=d-1$, node $j$ receives at least one unique vector of size $d$ from each of the $\hat{d}$ parents in this minimal subset.
Hence, at most $d - \hat{d} - 1$ vectors are left to be transmitted from, in worst case, single type-B parent node. 
Consequently, each edge in the network is used at most $d(d-\hat{d})$ times for the \(\boldsymbol{R}\)-related information. 

Finally, the achieved key rate is
\[
    R
    \ge
    \frac{d-\ell-z+1}{d(d-\hat d)+1}.
\]

\subsection*{Security Proof:}
    Each $d$-vertex-connected node $j \in \mathcal{V}$, including the terminal nodes, has access to:
    \begin{align}
        &(I): \boldsymbol{M}  \boldsymbol{v}_j^{(M)}, \\
        &(II): \boldsymbol{R}  \boldsymbol{v}_j^{(R)}.
    \end{align}
    Similarly, each partially-connected node \(j \in \mathcal{V}\) holds
    \begin{align}
        &(I)\;: \mathcal{s}_{j}^{(R)}
        =
        f_j(\boldsymbol{R}),
    \end{align}
    where \(f_j(\cdot)\) denotes the collection of vectors derived from
    \(\boldsymbol{R}\) available at node \(j\).
    Note that partially-connected nodes holds no information about $\boldsymbol{M}$.

    Consider any $\beta \in \mathcal{B}$ such that $\beta \cap \mathcal{T}_i =\phi$. We aim to prove that
    \begin{equation}
        I\left(\boldsymbol{k}_{\mathcal{T}_i}; \{X_v:v\in\beta\}\right) =0,  \quad \forall \mathcal{T}_i \subseteq\mathcal{T}.
    \end{equation}

\subsection*{Case 1: Eavesdropper Observes at Least One partially-connected Node}
    Let us assume an eavesdropper observes a set of nodes that includes one or more partially-connected nodes. Since such nodes may possess a significant amount of information about $\boldsymbol{R}$, we consider the worst-case scenario in which the eavesdropper has full knowledge of $\boldsymbol{R}$. Let $d'=d-z$ and $\ell' = \ell-1$. Then,  
    \begin{equation}
        \label{eq:secrecy_with_R_known}
        \begin{aligned}
        I(\boldsymbol{k}_{\mathcal{T}_i};& \{X_v:v\in\beta\})
        \le I(\boldsymbol{k}_{\mathcal{T}_i}; \{X_v:v\in\beta\}, \boldsymbol{R}) \\
        &= I\!\left(\boldsymbol{k}_{\mathcal{T}_i};
        \boldsymbol{M}\boldsymbol{V}_{\epsilon}^{(M)},\;
        \boldsymbol{R}\right) \\
        &\overset{(a)}{=} I\!\big([\boldsymbol{M}\boldsymbol{v}_{\mathcal{T}_i}^{(M)}]_{1:d'-\ell'} + [\boldsymbol{R}\boldsymbol{v}_{\mathcal{T}_i}^{(R)}]_{1:d'-\ell'};   \boldsymbol{M}\boldsymbol{V}_\epsilon^{(M)},\;
        \boldsymbol{R}\big) \\
        &\overset{(b)}{=} I\!\big([\boldsymbol{M}\boldsymbol{v}_{\mathcal{T}_i}^{(M)}]_{1:d'-\ell'} + [\boldsymbol{R}\boldsymbol{v}_{\mathcal{T}_i}^{(R)}]_{1:d'-\ell'}; \boldsymbol{M}\boldsymbol{V}_\epsilon^{(M)}\big)\\
        &+ I\!\big([\boldsymbol{M}\boldsymbol{v}_{\mathcal{T}_i}^{(M)}]_{1:d'-\ell'} + [\boldsymbol{R}\boldsymbol{v}_{\mathcal{T}_i}^{(R)}]_{1:d'-\ell'};  \boldsymbol{M}\boldsymbol{V}_\epsilon^{(M)}|\;
        \boldsymbol{R}\big)\\
        &\overset{(c)}{=} I\!\big([\boldsymbol{M}\boldsymbol{v}_{\mathcal{T}_i}^{(M)}]_{1:d'-\ell'} + [\boldsymbol{R}\boldsymbol{v}_{\mathcal{T}_i}^{(R)}]_{1:d'-\ell'};\boldsymbol{M}\boldsymbol{V}_\epsilon^{(M)}|\;
        \boldsymbol{R}\big) \\
        &\overset{(d)}{=} I\!\big([\boldsymbol{M}\boldsymbol{v}_{\mathcal{T}_i}^{(M)}]_{1:d'-\ell'}; \;   \boldsymbol{M}\boldsymbol{V}_\epsilon^{(M)}\;\big) \\
        &\overset{(e)}{=} 0\; ,
        \end{aligned}
    \end{equation}
    where $\boldsymbol{V}_\epsilon^{(M)} \in \mathbb{F}_q^{d' \times \ell'}$ is the Vandermonde matrix formed by the eavesdropped nodes, whose $m$-th column corresponds to the Vandermonde vector assigned to the $m$-th observed node. Note that there should exist at least one partially-information node, which posses no information about $\boldsymbol{M},$ in the eavesdropped set $\beta$, hence, $\boldsymbol{V}_\epsilon^{(M)}$ may have at most $\ell' = \ell-1$ linearly independent columns.
    Step (a) follows from the definition of $\boldsymbol{k}_{\mathcal{T}_i}$ in \eqref{Theorem1_key-def}.
    Step (b) follows from the chain rule.
    Step (c) follows since, conditioned on \(\boldsymbol{R}\),
    \([\boldsymbol{R}\boldsymbol{v}_{\mathcal{T}_i}^{(R)}]_{1:d'-\ell'}\) is fixed, while
    \([\boldsymbol{M}\boldsymbol{v}_{\mathcal{T}_i}^{(M)}]_{1:d'-\ell'}\) remains uniform and independent of \(\boldsymbol{R}\). Hence, their sum is independent of \(\boldsymbol{R}\), and 
    $$I\!\big([\boldsymbol{M}\boldsymbol{v}_{\mathcal{T}_i}^{(M)}]_{1:d'-\ell'} + [\boldsymbol{R}\boldsymbol{v}_{\mathcal{T}_i}^{(R)}]_{1:d'-\ell'}; \boldsymbol{M}\boldsymbol{V}_\epsilon^{(M)}\big)=0.$$
    Step (d) follows because conditioning on $\boldsymbol{R}$ makes $[\boldsymbol{R}\boldsymbol{v}_{\mathcal{T}_i}]_{1:d-\ell}$ deterministic. Furthermore, because $\boldsymbol{M}$ and $\boldsymbol{R}$ are independent, conditioning on $\boldsymbol{R}$ provides no information about $\boldsymbol{M}$, allowing the condition itself to be dropped.
    Finally, step (e) follows from Lemma \ref{main_ind}, which states that $[\boldsymbol{M} \boldsymbol{v}_{\mathcal{T}_i} ]_{1:d'-\ell'}$ is uniformly distributed over $\mathbb{F}_q^{d'-\ell'}$ given $\boldsymbol{M}\boldsymbol{V}_{\epsilon}^{(M)}$, and therefore independent of $\boldsymbol{M}\boldsymbol{V}_\epsilon^{(M)}$.

\subsection*{Case 2: Eavesdropper Observes Only Fully Connected Nodes}
    In this case, the eavesdropper’s observation is restricted to fully connected nodes, none of which correspond to partially-connected nodes.
    Since such fully connected nodes may collectively be able to recover $\boldsymbol{M}$, we consider the worst-case scenario in which the eavesdropper has full knowledge of $\boldsymbol{M}$. Then,  
    \begin{equation}
        \begin{aligned}
            I(\boldsymbol{k}_{\mathcal{T}_i}; &\{X_v:v\in\beta\}) 
            \le I(\boldsymbol{k}_{\mathcal{T}_i}; \{X_v:v\in\beta\}, \boldsymbol{M}\boldsymbol) \\
            &= I\!\left(\boldsymbol{k}_{\mathcal{T}_i}; \boldsymbol{R}\boldsymbol{V}_{\epsilon }^{(R)},\; \boldsymbol{M}\boldsymbol \; \right) \\
            &\overset{(a)}{=} I\!\big([\boldsymbol{M} \boldsymbol{v}_{\mathcal{T}_i}^{(M)}]_{1:d-\ell-z+1} + [\boldsymbol{R} \boldsymbol{v}_{\mathcal{T}_i}^{(R)} ]_{1:d-\ell-z+1}; \\
            & \qquad \qquad \qquad \qquad \qquad \qquad \qquad \; \boldsymbol{R}\boldsymbol{V}_{\epsilon }^{(R)},  \boldsymbol{M}\big) \\ 
            &\overset{(b)}{=} I\!\left([\boldsymbol{R} \boldsymbol{v}_{\mathcal{T}_i}^{(R)}]_{1:d-\ell-z+1}; \boldsymbol{R}\boldsymbol{V}_{\epsilon }^{(R)}\right) \\  
            &\overset{(c)}{=} 0,
        \end{aligned}
    \end{equation}
    where $\boldsymbol{V}_\epsilon^{(R)} \in \mathbb{F}_q^{d \times \ell}$ is the Vandermonde matrix formed by the eavesdropped nodes, whose $m$-th column corresponds to the Vandermonde vector assigned to the $m$-th observed node.
    Step (a) follows from the definition of $\boldsymbol{k}_{\mathcal{T}_i}$ in \eqref{Theorem1_key-def}.
    Step (b) follows because conditioning on $\boldsymbol{M}$ makes $[\boldsymbol{M}\boldsymbol{v}_{\mathcal{T}_i}^{(M)}]_{1:d-\ell}$ deterministic. Furthermore, because $\boldsymbol{M}$ and $\boldsymbol{R}$ are independent, conditioning on $\boldsymbol{M}$ provides no information about $\boldsymbol{R}$, allowing the condition itself to be dropped.
    Finally, step \((c)\) follows from Lemma~\ref{main_ind}, which shows that \( [\boldsymbol{R}\boldsymbol{v}_{\mathcal{T}_i}^{(R)}]_{1:d-\ell} \) is independent of \( \boldsymbol{R}\boldsymbol{V}_{\epsilon}^{(R)} \). Since the index set \(1:d-\ell-z+1\) is a subset of \(1:d-\ell\), the subvector \( [\boldsymbol{R}\boldsymbol{v}_{\mathcal{T}_i}^{(R)}]_{1:d-\ell-z+1} \) also remains independent of \( \boldsymbol{R}\boldsymbol{V}_{\epsilon}^{(R)} \), and therefore the corresponding mutual information is zero.
    
    Combining the results from Case 1 and Case 2, we have shown that regardless of which set of nodes the eavesdropper observes, whether it includes at least one partially-connected node or only fully connected nodes, the eavesdropper’s observation does not reveal any information about the key. Formally, for any possible eavesdropper observation $\beta \in \mathcal{B}$ such that $\beta \cap \mathcal{T}_i =\phi$, we have
    \[
    I\left(\boldsymbol{k}_{\mathcal{T}_i}; \{X_v:v\in\beta\}\right) = 0.
    \]
    This completes the proof of security.

\end{proof}

{\begin{remark}[Choice of the design parameter $d$ and of the subgraph $\mathcal{G}'$]\label{R1}
Note that the parameter $d$ is a design choice, whereas the connectivity parameters below are determined by the network. Moreover, by Remark~\ref{R0}, the schemes of Theorem~\ref{T1} and Theorem~\ref{T3} need not be run on the full network $\mathcal{G}=(\mathcal{V},\mathcal{E})$. For any subgraph $\mathcal{G}'=(\mathcal{V}',\mathcal{E}')\subseteq\mathcal{G}$ that contains the source $s$, all terminal nodes, and any chosen subset of the remaining (intermediate) nodes: Theorem~\ref{T1} applies directly to $\mathcal{G}'$ whenever every node of $\mathcal{G}'$ is $d$-vertex connected from $s$ within $\mathcal{G}'$, and Theorem~\ref{T3} applies whenever only the terminals of $\mathcal{G}'$ are guaranteed to be $d$-vertex connected within $\mathcal{G}'$, with the structural condition of Theorem~\ref{T3} holding for the remaining, possibly partially-connected, nodes. We therefore phrase the discussion below for a general such $\mathcal{G}'$, with the understanding that $\mathcal{G}'=\mathcal{G}$ recovers the setting considered so far.

For a fixed choice of $\mathcal{G}'$, define the following quantities, all evaluated within $\mathcal{G}'$:
\begin{itemize}
    \item $\hat d_{\mathcal{G}'}$, the minimum vertex-connectivity from $s$ taken over all nodes of $\mathcal{G}'$;
    \item $\Omega_{\mathcal{G}'}(d)$, the set of partially-connected nodes of $\mathcal{G}'$:
    $$\{v\in\mathcal{V}':\ v\ \text{is less than}\ d\text{-vertex connected from}\ s\ \text{in}\ \mathcal{G}'\}.$$
    \item $z_{\mathcal{G}'}(d)$, the largest number of partially-connected parents (i.e., parents in $\Omega_{\mathcal{G}'}(d)$) of any $d$-vertex connected node of $\mathcal{G}'$;
    \item $d_{\mathcal{T}}(\mathcal{G}')$, the minimum vertex-connectivity from $s$ within $\mathcal{G}'$ taken over the terminal nodes.
\end{itemize}
Increasing $d$ enlarges the set $\Omega_{\mathcal{G}'}(d)$, and thus, as $d$ grows, a $d$-vertex connected node of $\mathcal{G}'$ may acquire additional partially-connected parents, i.e., $z_{\mathcal{G}'}(d)$ may grow with $d$.
Since all terminals are required to be $d$-vertex connected from $s$ within $\mathcal{G}'$, and since $\ell<d$ is required throughout, the admissible values of $d$ satisfy $\ell+\max\{z_{\mathcal{G}'}(d),1\}\le d\le d_{\mathcal{T}}(\mathcal{G}')$.
The lower bound here combines the structural condition $z_{\mathcal{G}'}(d)\le d-\ell$ of Theorem~\ref{T3} (applied to $\mathcal{G}'$) with the requirement $\ell<d$, and guarantees a positive numerator in the rate expression below.
For fixed $\mathcal{G}'$, the rate of Theorem~\ref{T3} is therefore to be read as a function of $d$, namely
\[
R_{\mathcal{G}'}(d)\triangleq\frac{d-\ell-z_{\mathcal{G}'}(d)+1}{d\bigl(d-\hat d_{\mathcal{G}'}\bigr)+1},
\]
and, by Remark~\ref{R0}, the best rate guaranteed by our schemes is obtained by jointly optimizing $R_{\mathcal{G}'}(d)$ over the admissible range of $d$ \emph{and} over the choice of subgraph $\mathcal{G}'$ itself, i.e., by evaluating
\[
R^{\star}\triangleq\max_{\mathcal{G}'\in\mathcal{A}}\ \max_{d}\ R_{\mathcal{G}'}(d),
\]
where $\mathcal{A}$ denotes the family of \emph{admissible} subgraphs, i.e., those $\mathcal{G}'\subseteq\mathcal{G}$ that contain the source $s$ (or all sources, in the multi-source case) and all terminal nodes, and for which some choice of $d$ makes Theorem~\ref{T1} or Theorem~\ref{T3} applicable to $\mathcal{G}'$, as described above. For each $\mathcal{G}'\in\mathcal{A}$, the inner maximization over $d$ is subject to $\ell+\max\{z_{\mathcal{G}'}(d),1\}\le d\le d_{\mathcal{T}}(\mathcal{G}')$.
Two regimes arise, according to whether $\mathcal{A}$ contains a subgraph whose minimum vertex-connectivity exceeds $\ell$.

\emph{(i) There exists $\mathcal{G}'\in\mathcal{A}$ with $\hat d_{\mathcal{G}'}>\ell$.} We first observe that partially-connected nodes come at a cost in rate. For any $\mathcal{G}'\in\mathcal{A}$ and any $d>\hat d_{\mathcal{G}'}$, if 
$z_{\mathcal{G}'}(d)\ge1$, then Theorem~\ref{T3} gives
\[
R_{\mathcal{G}'}(d)\;=\; \frac{d-\ell-z_{\mathcal{G}'}(d)+1}{d\bigl(d-\hat d_{\mathcal{G}'}\bigr)+1}\;\le\;\frac{d-\ell}{d+1}\;<\;1 ,
\]
where the inequality uses $z_{\mathcal{G}'}(d)\ge1$ together with $d(d-\hat d_{\mathcal{G}'})+1\ge d+1$. Thus, whenever $z_{\mathcal{G}'}(d)\ge1$ the resulting rate is less than $1$.

It is therefore preferable to avoid partially-connected nodes altogether, which for a given $\mathcal{G}'$ means restricting to $d\le\hat d_{\mathcal{G}'}$, so that Theorem~\ref{T1} applies and yields the rate $d-\ell$. Since this rate increases with $d$, the best choice for a given $\mathcal{G}'$ is the largest $d$ for which Theorem~\ref{T1} still applies, namely $d=\hat d_{\mathcal{G}'}$, giving rate $\hat d_{\mathcal{G}'}-\ell$. This value is at least $1$ whenever $\hat d_{\mathcal{G}'}>\ell$, and hence is no smaller than the rate of less than $1$ obtained for any $d>\hat d_{\mathcal{G}'}$ and $z_{\mathcal{G}'}(d)\ge1$.

Optimizing over $\mathcal{G}'$ as well, the best rate is thus attained by the subgraph with the largest such $\hat d_{\mathcal{G}'}$: setting
\[
\mathcal{G}^{\star}\in\arg\max_{\mathcal{G}'\in\mathcal{A}:\ \hat d_{\mathcal{G}'}>\ell}\hat d_{\mathcal{G}'}, \qquad d^{\star}=\hat d_{\mathcal{G}^{\star}},
\]
achieves $R^{\star}=\hat d_{\mathcal{G}^{\star}}-\ell$.
By Lemma~\ref{L:converse}, this rate cannot be improved in general: there exist $d$-vertex connected instances whose secure key-cast capacity is at most $d-\ell$, so the bound $\hat d_{\mathcal{G}^{\star}}-\ell$ established above is the best possible.

\emph{(ii) Every $\mathcal{G}'\in\mathcal{A}$ satisfies $\hat d_{\mathcal{G}'}\le\ell$.} Here, for every $\mathcal{G}'\in\mathcal{A}$, every admissible $d$ satisfies $d>\ell\ge\hat d_{\mathcal{G}'}$, so the $\hat d_{\mathcal{G}'}$-vertex connected nodes of $\mathcal{G}'$ are necessarily partially-connected, and Theorem~\ref{T1} is not applicable on any admissible subgraph for any choice of $d$.
The best rate guaranteed by Theorem~\ref{T3} is then the value of the joint discrete optimization problem
\[
\begin{aligned}
R^{\star}=\max_{\mathcal{G}'\in\mathcal{A}}\max_{d}\quad & \frac{d-\ell-z_{\mathcal{G}'}(d)+1}{d(d-\hat d_{\mathcal{G}'})+1}\\[0.2em]
\text{s.t.}\quad & \ell+\max\{z_{\mathcal{G}'}(d),1\}\;\le\; d\;\le\; d_{\mathcal{T}}(\mathcal{G}'),
\end{aligned}
\]
where the feasibility constraint is exactly the structural condition of Theorem~\ref{T3} (applied to $\mathcal{G}'$).
Here the rate is not monotone in $d$, since increasing $d$ grows the denominator while the numerator is governed by $d-z_{\mathcal{G}'}(d)$, and neither the smallest nor the largest admissible $d$ need be optimal.
For instance, for a subgraph $\mathcal{G}'$ with $\hat d_{\mathcal{G}'}=1$, $\ell=2$, and $z_{\mathcal{G}'}(d)=1$ for $d\in\{3,4,5\}$, the corresponding rates are $1/7$, $2/13$, and $3/21$, so that the smallest admissible choice $d=3$ is not optimal and $d=4$ is preferable.
Since the choice of $\mathcal{G}'$ likewise affects $\hat d_{\mathcal{G}'}$, $z_{\mathcal{G}'}(d)$, and $d_{\mathcal{T}}(\mathcal{G}')$, the two maximizations must be carried out jointly. We leave a full characterization of this joint optimization, including efficient algorithms for identifying an optimal subgraph $\mathcal{G}'$, to future work.
\end{remark}}

\begin{corollary}[{Partially-connected multiple-source setting}]\label{C3}
Let $\mathcal{I}=(\mathcal{V},\mathcal{E},\mathcal{B},\mathcal{S},{\mathcal{T}=\{\mathcal{T}_i\}_{i=1}^m})$ be an instance of the secure multiple key-cast problem with a set of sources $\mathcal{S}=\{s_1,\dots,s_{|\mathcal{S}|}\}$. For any integer $x <|\mathcal{S}|$, let
\(
\mathcal{B}=
\bigl\{
\beta \subseteq \mathcal{V}: |\beta\setminus \mathcal{S}|\le \ell,\ |\beta \cap \mathcal{S}| \le x
\bigr\}.
\)
Assume that all terminal nodes are $d$-vertex connected from each source $s \in \mathcal{S}$. Let $\hat{d}$ be the minimum vertex-connectivity from any $s \in \mathcal{S}$ over all network nodes $\mathcal{V}\setminus \mathcal{S}$. If no $d$-vertex connected node receives input from more than $z \le d-\ell$ partially-connected nodes, then there exists a coding scheme that achieves key rate
\[
R = \frac{(d-\ell-z+1)(|\mathcal{S}|-x)}{(d(d - \hat{d} )+1)|\mathcal{S}|}.
\]
In particular, $x$ denotes the maximum number of source nodes that can be included in the eavesdropped set, while $x=0$ prohibits eavesdropping on any source node.
\end{corollary}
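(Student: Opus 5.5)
The proof will mirror Corollary~\ref{C1}, with the scheme of Theorem~\ref{T3} in place of Algorithm~\ref{alg:keycast_compact}. For each source $s_j\in\mathcal{S}$ we draw independent symmetric random matrices $\boldsymbol{M}^{(j)}\in\mathbb{F}_q^{(d-z)\times(d-z)}$ and $\boldsymbol{R}^{(j)}\in\mathbb{F}_q^{d\times d}$, with all pairs $(\boldsymbol{M}^{(j)},\boldsymbol{R}^{(j)})$ mutually independent across $j$. We run the single-source scheme of Theorem~\ref{T3} separately for each source. This is legitimate because every terminal is $d$-vertex connected from each $s_j$, and the structural condition (at most $z\le d-\ell$ partially-connected inputs to any $d$-vertex connected node) is assumed with respect to each source. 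For each source, every terminal $t\in\mathcal{T}_i$ thus obtains
\[
\boldsymbol{k}^{(j)}_{\mathcal{T}_i}=[\boldsymbol{M}^{(j)}\boldsymbol{v}^{(M)}_{\mathcal{T}_i}]_{1:d-\ell-z+1}+[\boldsymbol{R}^{(j)}\boldsymbol{v}^{(R)}_{\mathcal{T}_i}]_{1:d-\ell-z+1}.
\]
As in \eqref{keyCor11}, we stack these into $\tilde{\boldsymbol{K}}_{\mathcal{T}_i}\in\mathbb{F}_q^{(d-\ell-z+1)\times|\mathcal{S}|}$ and define the final key as $\boldsymbol{K}_{\mathcal{T}_i}=\tilde{\boldsymbol{K}}_{\mathcal{T}_i}\boldsymbol{G}$, where $\boldsymbol{G}$ is a fixed $|\mathcal{S}|\times(|\mathcal{S}|-x)$ Vandermonde matrix.

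For the rate, the transmissions of different sources are never mixed, so they time-share the edges. By the key-rate argument of Theorem~\ref{T3}, each source's run uses every edge at most $d(d-\hat d)+1$ times, where $\hat d$ is now the minimum over all sources. Over $|\mathcal{S}|$ sources this gives at most $|\mathcal{S}|(d(d-\hat d)+1)$ uses per edge, while $(d-\ell-z+1)(|\mathcal{S}|-x)$ key symbols are delivered. This yields the claimed rate.

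Security follows the two-part structure of Corollary~\ref{C1}. First, when $\beta$ contains no source node, the eavesdropper's observation splits into per-source parts $\{X_v^{(j)}:v\in\beta\}$, and the pairs $(\boldsymbol{k}^{(j)}_{\mathcal{T}_i},\{X_v^{(j)}\})$ are independent across $j$. The chain-rule computation \eqref{NSNob1} then reduces $I(\boldsymbol{K}_{\mathcal{T}_i};\{X_v:v\in\beta\})$ to a sum of single-source terms, and each term vanishes by the security proof of Theorem~\ref{T3}, covering both the case where $\beta$ contains a partially-connected node and the case where it contains only fully connected nodes. Second, when $\beta$ includes a set $\Omega$ of at most $x$ sources, the eavesdropper learns $(\boldsymbol{M}^{(j)},\boldsymbol{R}^{(j)})$ for $j\in\Omega$. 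We then repeat steps (a)--(f) of the proof of Corollary~\ref{C1}: remove the now-deterministic contribution $\tilde{\boldsymbol{K}}^{(\Omega)}_{\mathcal{T}_i}\boldsymbol{G}^{(\Omega)}$, drop the observations tied to $\Omega$ by independence, and apply the no-source case to $\bar\Omega$. Finally, Lemma~\ref{KG_uniform} shows that $\tilde{\boldsymbol{K}}^{(\bar\Omega)}_{\mathcal{T}_i}\boldsymbol{G}^{(\bar\Omega)}$ is uniform, so the full key is uniform and the mutual information is zero.

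The main obstacle is the final uniformity step. Lemma~\ref{KG_uniform} requires $\tilde{\boldsymbol{K}}^{(\bar\Omega)}_{\mathcal{T}_i}$ to have i.i.d.\ uniform entries. So I must check that each column $\boldsymbol{k}^{(j)}_{\mathcal{T}_i}$ is uniform on $\mathbb{F}_q^{d-\ell-z+1}$. This holds because its $\boldsymbol{M}$-part is uniform: each coordinate contains a distinct diagonal entry of $\boldsymbol{M}^{(j)}$ with coefficient $1$, as argued in Lemma~\ref{main_ind}. Uniformity is unaffected by adding the independent $\boldsymbol{R}$-part, and the columns are independent across $j$. A secondary point is to confirm that the per-source edge-use bound from Theorem~\ref{T3} holds simultaneously for all sources when $\hat d$ is taken as the minimum over sources. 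This requires the type-A/type-B classification to be done per source, which I will state explicitly.
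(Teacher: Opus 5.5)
Your proposal is correct and follows the paper's proof: run the scheme of Theorem~\ref{T3} independently for each source, combine the per-source keys through the fixed Vandermonde matrix $\boldsymbol{G}$ exactly as in Corollary~\ref{C1}, count edge uses by time-sharing, and inherit security from the two-part argument of Corollary~\ref{C1}. Your explicit check of the hypotheses of Lemma~\ref{KG_uniform}, and of the per-source type-A/type-B classification, goes beyond what the paper writes, with one small slip: the diagonal entry $[\boldsymbol{M}^{(j)}]_{r,r}$ enters coordinate $r$ with coefficient $\alpha^{r-1}$, not $1$, but uniformity still holds because $\boldsymbol{M}\boldsymbol{v}$ is uniform for any nonzero $\boldsymbol{v}$ (the map $\boldsymbol{M}\mapsto\boldsymbol{M}\boldsymbol{v}$ on symmetric matrices is onto).
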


\begin{proof}
Consider a network in which not all nodes are necessarily $d$-vertex connected from every source $s_u \in \mathcal{S}$, but the partially-connected condition of Theorem~\ref{T3} is satisfied. That is, for each source $s\in \mathcal{S}$, no node receives inputs from more than $z \le d-\ell$ partially-connected nodes. Under this condition, the assumptions of Theorem~\ref{T3} hold for each source individually, and thus every source $s_j$ can independently and securely disseminate its key $\boldsymbol{k}_{\mathcal{T}_i}^{(j)}$ to all terminals $t \in \mathcal{T}_i$.

Define the $(d-\ell) \times |\mathcal{S}|$ matrix
\[
\tilde{\boldsymbol{K}}_{\mathcal{T}_i}
\triangleq
\big[ \boldsymbol{k}_{\mathcal{T}_i}^{(1)}, \dots , \boldsymbol{k}_{\mathcal{T}_i}^{(|\mathcal{S}|)} \big].
\]
The key for terminal set $\mathcal{T}_i$ is then defined as
\begin{equation}\label{keyCor1}
\boldsymbol{k}_{\mathcal{T}_i}
= \tilde{\boldsymbol{K}}_{\mathcal{T}_i}\boldsymbol{G}.
\end{equation}
where $\boldsymbol{G}$ is a fixed $|\mathcal{S}| \times (|\mathcal{S}|-x)$ Vandermonde matrix.

Using the same encoding and decoding procedures established in the proof of Theorem \ref{T1}, each terminal recovers all values $\{\boldsymbol{k}_{\mathcal{T}_i}^{(j)} : s_j \in \mathcal{S}\}$ and constructs its final key as \eqref{keyCor1}.

\subsection*{Key-Rate:}
Each terminal $t \in \mathcal{T}_i$ combines the keys generated by all $|\mathcal{S}|$ source nodes. Because the eavesdropper can observe at most $x$ source nodes, at least $|\mathcal{S}|-x$ of these source-generated keys remain secure, with each key containing $d-\ell-z+1$ symbols. Consequently, by combining the $|\mathcal{S}|$ received vectors using the matrix $\boldsymbol{G}$, each terminal recovers a total of$$(d-\ell-z+1)(|\mathcal{S}|-x)$$secure key symbols. The scheme delivers these secure symbols while requiring at most $d(d-\hat{d})+1$ transmissions over any edge of $\mathcal{G}$. Therefore, the resulting key rate is given by
$$R = \frac{(d-\ell-z+1)(|\mathcal{S}|-x)}{(d(d-\hat{d})+1)|\mathcal{S}|}.$$

\subsection*{Security Proof:}
The security proof follows directly from the security proof of Corollary \ref{C1}.

\end{proof}

We now remove the structural network condition used in Theorem~\ref{T3} and Corollary~\ref{C3}.

\begin{corollary}[Partially-connected single-source setting, without structural network requirements]\label{C2}
 Let $\mathcal{I}=\bigl(\mathcal{V},\mathcal{E},\mathcal{B},\mathcal{S}=\{s\}, \mathcal{T}=\{\mathcal{T}_i\}_{i=1}^m \bigr)$ be an instance of the Secure Multiple Key-Cast problem with
    \(\mathcal{B}=
    \bigl\{
    \beta \subseteq \mathcal{V} \setminus \{s\} : |\beta|\le \ell
    \bigr\},
    \)
    in which all terminal nodes are $d$-vertex connected from the source $s$. 
    Let $\hat{d}$ be the minimum vertex-connectivity from $s$ taken over all network nodes.
    Let \(\Omega \subseteq\mathcal{V}\) denote the set of partially-connected nodes.
    Define
    \[
    \mathcal{J}\triangleq \Big\{\, j\in\mathcal{V}\setminus \Omega\;:\; \big|\{u\in\Omega:(u\to j)\in\mathcal{E}\}\big| > z \Big\},
    \]
    i.e., $\mathcal{J}$ is the set of $d$-vertex connected nodes that receive input from more than $z$ partially-connected nodes.
    For \(j\in\mathcal{J}\), define
    \(
    p(j)\triangleq \big|\{u\in\Omega:(u\to j)\in\mathcal{E}\}\big|-z.
    \)
    Then, there exists a coding scheme that achieves a key-cast rate \(R\) with secrecy against any eavesdropper observing up to \(\ell\) nodes, where
    \[
    R \ge \frac{d-\ell-z+1}{d(d-\hat{d}) + \left(\sum_{j \in \mathcal{J}}\left\lceil \frac{p(j)}{d-\ell} \right\rceil\right) +1}.
    \]
    \end{corollary}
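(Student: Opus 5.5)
The plan is to reduce Corollary~\ref{C2} to the scheme of Theorem~\ref{T3}. Every node $j\notin\mathcal{J}$ is handled exactly as in Theorem~\ref{T3}. The only new work is at the nodes $j\in\mathcal{J}$, whose excess of $p(j)$ partially-connected parents breaks Case~4 of Step~2. That case assumed at most $z$ partially-connected parents, so that the $\boldsymbol{M}$-share could be recovered from $d-z$ symbols coming from the source and from $d$-vertex connected parents.

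We keep the same keys \eqref{Theorem1_key-def}, the same matrices $\boldsymbol{M}\in\mathbb{F}_q^{(d-z)\times(d-z)}$ and $\boldsymbol{R}\in\mathbb{F}_q^{d\times d}$, and the same handling of type-A and type-B nodes. The $\boldsymbol{R}$-part is unaffected: Claim~\ref{claim:eta-selection} and the $\boldsymbol{R}$-recovery of Case~4 never used the bound $z$. So every node $j\in\mathcal{J}$ still recovers $\boldsymbol{R}\boldsymbol{v}_j^{(R)}$ at the per-edge cost $d(d-\hat d)$ already counted.

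The new step handles the $\boldsymbol{M}$-part at $j\in\mathcal{J}$. Node $j$ has at least $d-z-p(j)$ $\boldsymbol{M}$-symbols from the source and from its $d$-vertex connected parents, so it is short by at most $p(j)$ symbols ${\boldsymbol{v}_j^{(M)}}^T\boldsymbol{M}\boldsymbol{v}^{(M)}$ for fresh, unused Vandermonde vectors. Since $j$ is $d$-vertex connected, Lemma~\ref{d-vertex_c} lets the source send these missing symbols to $j$ over $d$ vertex-disjoint paths. Each use delivers $d-\ell$ symbols with perfect secrecy against any $\ell$ intermediate nodes, so $\lceil p(j)/(d-\ell)\rceil$ extra uses suffice. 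We schedule these uses in separate time slots, one block per node $j\in\mathcal{J}$. In the worst case every edge participates in each such delivery, which adds $\sum_{j\in\mathcal{J}}\lceil p(j)/(d-\ell)\rceil$ to the per-edge load. Together with the $d(d-\hat d)+1$ uses from Theorem~\ref{T3}, this gives the stated denominator. Node $j$ then holds $d-z$ symbols ${\boldsymbol{v}_j^{(M)}}^T\boldsymbol{M}\boldsymbol{V}_j^{(M)}$ with distinct Vandermonde columns, inverts as in Case~4, and the rest of the scheme, including the decoding at terminals, is unchanged. The numerator $d-\ell-z+1$ is unchanged.

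For security, we repeat the two-case argument of Theorem~\ref{T3}. The symbols an eavesdropper on an intermediate node of a secure path sees are independent of their content, by Lemma~\ref{d-vertex_c}. The one-time-pad randomness of those deliveries is independent of $\boldsymbol{M}$ and $\boldsymbol{R}$. Hence the observation of $\beta$ is still a function of $\boldsymbol{M}\boldsymbol{V}_\epsilon^{(M)}$, $\boldsymbol{R}\boldsymbol{V}_\epsilon^{(R)}$ (or $\boldsymbol{R}$ itself) and independent noise, and Lemma~\ref{main_ind} applies as before.

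The main obstacle is making this joint independence rigorous. Lemma~\ref{d-vertex_c} guarantees secrecy only per transmitted message. Here the eavesdropper sees pieces of several secret-shared deliveries plus the regular $\boldsymbol{M}$-shares, and the delivered symbols are themselves projections of $\boldsymbol{M}$ along fresh vectors not in $\boldsymbol{V}_\epsilon^{(M)}$. I would use fresh Shamir randomness for each delivery and condition on $\boldsymbol{M}$. The delivery views are then uniform and independent of everything else, so they can be dropped, which reduces to the earlier Case~1/Case~2 computation. One more subtlety: if $\beta$ contains $j$ itself, $j$'s view includes the delivered symbols. Those are determined by $\boldsymbol{M}\boldsymbol{v}_j^{(M)}$, which $j$ learns anyway, so $j$'s column already accounts for them.
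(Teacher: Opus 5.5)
Your proposal is correct and follows essentially the same route as the paper. You run the scheme of Theorem~\ref{T3} unchanged except at nodes $j\in\mathcal{J}$, where the source uses Lemma~\ref{d-vertex_c} over $j$'s $d$ vertex-disjoint paths to deliver the at most $p(j)$ missing symbols ${\boldsymbol{v}_j^{(M)}}^T\boldsymbol{M}\boldsymbol{v}'$ (for fresh Vandermonde $\boldsymbol{v}'$) in $\lceil p(j)/(d-\ell)\rceil$ extra uses, and $j$ then inverts a $(d-z)\times(d-z)$ Vandermonde matrix as in Case~4. Your joint-secrecy argument is more careful than the paper's proof: you use fresh Shamir randomness per use, condition on $(\boldsymbol{M},\boldsymbol{R})$, and treat deliveries to an observed $j$ as functions of $\boldsymbol{M}\boldsymbol{v}_j^{(M)}$, whereas the paper never revisits security once partially-connected relays carry $\boldsymbol{M}$-related shares. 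Your denominator also matches the statement, while the paper's final line writes $d(d-\hat d+1)$.
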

    
\begin{proof}
The proof follows the proof of Theorem~\ref{T3}, with modifications only for nodes in \(\mathcal{J}\).

Consider a \(d\)-vertex-connected node \(j\in\mathcal{J}\) attempting to recover its share \(\boldsymbol{s}_j^{(M)}\). 
Let \(\mathcal{J}_{PI} \subseteq \mathcal{P}(j)\) be the set of partially-connected parents of \(j\). Node \(j\) receives
\[
\boldsymbol{\tau}_{s\to j}^{(M)}
=
{\boldsymbol{v}_{j}^{(M)}}^T
\boldsymbol{M}
\boldsymbol{V}_{s\to j}^{(M)}
\]
from the source over its \(c_j\) incoming source edges. Note that if $j \notin \mathcal{P}(s)$, then \(c_j=0\). 

Next, let $\mathcal{P}_d(j) \triangleq \mathcal{P}(j)\setminus(\mathcal{J}_{PI}\cup\{s\})$ denote the set of $d$-vertex-connected parents of $j$. From each parent \(j_p\in\mathcal{P}_d(j)\), node \(j\) receives
\[
\tau_{j_p\to j}^{(M)}
=
{\boldsymbol{v}_{j}^{(M)}}^T
\boldsymbol{M}
\boldsymbol{v}_{j_p}^{(M)} .
\]

Since node $j$ is connected to more than $z$ partial-information nodes (which possess no information about $\boldsymbol{M}$), it requires $p(j)$ additional symbols of the form ${\boldsymbol{v}_j^{(M)}}^{T}\boldsymbol{M}\boldsymbol{v}'$ in order to recover its share $\boldsymbol{s}_j^{(M)} = \boldsymbol{M}\boldsymbol{v}_j^{(M)}$.

Because node \(j\) is \(d\)-vertex connected, Lemma \ref{d-vertex_c} implies that the source can securely transmit the row vector
\begin{equation}\label{C1eq2}
    {\boldsymbol{v}_j^{(M)}}^{T}\boldsymbol{M}\boldsymbol{\hat{V}}_{j}^{(M)}
\end{equation}
to node \(j\). Here, \(\boldsymbol{\hat{V}}_{j}^{(M)}\) is a \((d-z)\times p(j)\) Vandermonde matrix whose columns correspond to distinct Vandermonde indices and are chosen to be linearly independent of the columns of both \(\boldsymbol{V}_{s\to j}^{(M)}\) and \(\{\boldsymbol{v}_{j_p}^{(M)}\}_{j_p \in \mathcal{P}_d(j)}\). 
Delivering these $p(j)$ symbols requires \(\left\lceil \frac{p(j)}{d-\ell} \right\rceil\) additional transmissions because, by Lemma \ref{d-vertex_c}, each transmission can securely send at most \(d-\ell\) symbols in the presence of an adversary observing up to \(\ell\) nodes.

Node \(j\) then aggregates the symbols received from the source, its $d$-vertex-connected parents, and the additional secure transmissions from \eqref{C1eq2} to construct the combined row vector:
\begin{equation}\label{C1eq1}
   \boldsymbol{T}_{j}^{(M)}
   =
   {\boldsymbol{v}_{j}^{(M)}}^{T}
   \boldsymbol{M}
   \boldsymbol{V}_{j}^{(M)} ,
\end{equation}
where the combined observation matrix $\boldsymbol{V}_{j}^{(M)}$ is formed as
\[
\boldsymbol{V}_{j}^{(M)}
\triangleq
\left[\,
\boldsymbol{V}_{s\to j}^{(M)}
\;,\;
\boldsymbol{\hat{V}}_{j}^{(M)}
\;,\;
\boldsymbol{V}_{\mathcal{P}_d(j)}^{(M)}
\,\right],
\]
and $\boldsymbol{V}_{\mathcal{P}_d(j)}^{(M)}$ is the matrix whose columns are $\{\boldsymbol{v}_{j_p}^{(M)}\}_{j_p \in \mathcal{P}_d(j)}$. 

Because \(\boldsymbol{V}_{j}^{(M)}\) is constructed as a \((d-z) \times (d-z)\) Vandermonde matrix with distinct columns, it is full rank and invertible. Therefore, node $j$ can reconstruct its target share as:
\[
\boldsymbol{s}_j^{(M)}
=
\Big(
\boldsymbol{T}_{j}^{(M)}
{\boldsymbol{V}_{j}^{(M)}}^{-1}
\Big)^{T}
=
\boldsymbol{M}\boldsymbol{v}_{j}^{(M)}.
\]

Applying this procedure to all nodes in \(\mathcal{J}\), each node can reconstruct its share using $\left\lceil \frac{p(j)}{d-\ell} \right\rceil$ additional secure transmissions from the source. Therefore, the achieved rate satisfies
\[
R \ge
\frac{d-\ell-z+1}{
d(d-\hat{d}+1)
+ \left(\sum_{j \in \mathcal{J}} \left\lceil \frac{p(j)}{d-\ell} \right\rceil \right)+ 1},
\]
which is strictly positive whenever \(\mathcal{J}\) is finite.
\end{proof}

\section{Acknowledgements}
This work is supported in part by NSF grant CCF-2245204.

\bibliographystyle{unsrt}
\bibliography{references}

\end{document}